\newcommand{\blind}{0}
\newcommand\bbeta{\boldsymbol{\beta}}
\newcommand\pphi{\boldsymbol{\phi}}
\newcommand\MM{\mathbf{M}}
\newcommand\XX{\mathbf{X}}
\newcommand\x{\mathbf{x}}
\newcommand\z{\mathbf{z}}
\newcommand\bv{\mathbf{v}}
\newcommand\hbv{\hat{\mathbf{v}}}
\newcommand\bw{\mathbf{w}}
\newcommand\X{\boldsymbol{X}}
\newcommand\Y{\boldsymbol{y}}
\newcommand\Z{\boldsymbol{Z}}
\newcommand\cvp{\stackrel{p}{\longrightarrow}}
\newcommand\cvd{\stackrel{d}{\longrightarrow}}
\newcommand\Stij{S_{T,i,j}}
\newcommand\hStij{\hat{S}_{T,i,j}}
\newcommand\logn{\log n}
\newcommand\R{\mathbf{R}}
\newcommand\vv{\hat{v}}
\newcommand\VVV{\hat{V}}
\newcommand\hab{\hat{\bbeta}}
\newcommand\rhohat{\hat{\rho}_{ij}}
\newcommand\inn[2]{\langle #1, #2 \rangle}
\newtheorem{theorem}{Theorem}
\newtheorem{lemma}{Lemma}
\theoremstyle{definition}
\newtheorem{remark}{Remark}
\newtheorem{assumption}{Assumption}
\begin{document}

\def\spacingset#1{\renewcommand{\baselinestretch}%
{#1}\small\normalsize} \spacingset{1}


\if0\blind
{
  \title{\bf Unified and robust Lagrange multiplier type tests for cross-sectional independence in large panel data models}
  \author{Zhenhong Huang\footnote{ Department of Statistics and Actuarial Science, The University of Hong Kong. Email: zhhuang7@connect.hku.hk}, \;
    Zhaoyuan Li\footnote{School of Data Science, The Chinese University of Hong Kong, Shenzhen. Email: lizhaoyuan@cuhk.edu.cn}
    \;and\; Jianfeng Yao\footnote{School of Data Science, The Chinese University of Hong Kong (Shenzhen). Email: jeffyao@cuhk.edu.cn}\hspace{.2cm} \\}
    \date{}
  \maketitle
} \fi

\if1\blind
{
  \bigskip
  \bigskip
  \bigskip
  \begin{center}
    {\LARGE\bf Title}
\end{center}
  \medskip
} \fi

\bigskip
\begin{abstract}
\noindent This paper revisits the Lagrange multiplier type test for the null hypothesis of no cross-sectional dependence in large panel data models. We propose a unified test procedure and its power enhancement version, which show robustness for a wide class of panel model contexts. Specifically, the two procedures are applicable to both heterogeneous and fixed effects panel data models with the presence of weakly exogenous as well as lagged dependent regressors, allowing for a general form of non-normal error distribution. With the tools from Random Matrix Theory, the asymptotic validity of the test procedures is established under the simultaneous limit scheme 
where the number of time periods ($T$) and the number of cross-sectional units ($n$) go to infinity proportionally. The derived theories are accompanied by detailed Monte Carlo experiments, which confirm the robustness of the two tests and also suggest the validity of the power enhancement technique.  
\end{abstract}

\noindent%
{\it Keywords:}  Cross-sectional dependence, Large panels, Coefficient heterogeneity, Weak exogenous, Random Matrix Theory
\vfill

\newpage
\spacingset{1.8} 
\section{Introduction}
\label{sec:intro}
In panel data analysis, the problem of error cross-sectional dependence 
has attracted substantial attention in recent years. This  cross-sectional dependence can arise for various reasons, such as omitted common or spatial effects. Ignoring error cross-sectional dependence can have dramatic effects on conventional panel estimators, e.g., the least squares, fixed and random effect estimators, and yield invalid inferential procedures such as commonly used panel unit root tests, where these tests assume cross-sectional independence. Therefore, designing efficacious tests for  cross-sectional dependence is essential in panel data analysis.

There has been much work on testing for cross-sectional dependence in the literature. \cite{breusch1980lagrange} proposed a Lagrange multiplier ($LM$) test based on the squared pair-wise Pearson correlation coefficients of the residuals. Under the null hypothesis of no cross-sectional dependence,  the $LM$ test is asymptotically chi-squared distributed with $T\rightarrow\infty$ and $n$  fixed. However, it is not applicable for large $n$, which renders its popularity considering recent researches have focused on the large panels where both $T$ and $n$ can be large. In such high dimensional setting, there are two mainstream schemes considered by statisticians and econometricians, being known as sequential limit scheme and simultaneous limit scheme defined as following:
$$\text{SEQ-L:} \; T\rightarrow\infty, \; \text{followed by} \; n\rightarrow\infty,$$
and
$$\text{SIM-L:} \; (n, T) \rightarrow\infty \; \text{such  that} \; \frac{n}{T}\rightarrow c \in (0,\infty), $$
respectively. Under the SIM-L scheme, \cite{frees1995assessing} proposed a distribution free $LM$ type test allowing for large $n$, $R^2_{AVE}$,  based on the squared pair-wise Spearman rank correlation coefficients, which is asymptotically distributed as Chi-squared.  The test has however imposed limitations on the number of regressors and could be oversized for small $T$.  \cite{pesaran2004general} suggested a scaled version of the $LM$ test, denoted by $CD_{LM}$, and showed its asymptotic property. The author however pointed out that the $CD_{LM}$ test is not correctly centered at zero for small $T$, and is likely to exhibit large size distortions as $n$ increases. \cite{pesaran2004general} also proposed an alternative approach, the $CD_P$ test, which employs pair-wise Pearson correlation coefficients of the residuals as well, but without squaring them. This test has universally correct size under a broad course of panel data model designs, but it lacks power when correlation coefficients within panel units have variable signs leading to certain cancellation effect. Particularly, this happens when the errors are generated from a factor model where the loadings average to zero. \cite{pesaran2015testing} extended the $CD_P$ test to the scenario of weak cross-sectional dependence.  \cite{pesaran2008bias} put forward another approach, $LM_{adj}$, by deriving the exact expected values and variances of the squared correlation coefficients under the assumption of normally distributed errors and strictly exogenous regressors. By applying the classical central limit theory, it is shown that the $LM_{adj}$ test converges to standard normal distribution under the SEQ-L scheme. 
 \cite{bailey2021lagrange} proposed a new $LM$ type test, $LM_{RMT}$, and proved its asymptotic normality under the SIM-L scheme using Random Matrix Theory. However, they require the assumptions of normal regressors and normal errors.
In a slope homogeneity setting, \cite{baltagi2012lagrange} analyzed the performance of the $CD_{LM}$ test in the fixed effects panel data model, then presented a bias corrected test, $LM_{bc}$, and established its asymptotic normality under the SIM-L scheme assuming normal errors and strictly exogenous regressors. \cite{baltagi2012lagrange} also showed that the $LM_{bc}$ test can be applied to dynamic panel data model with fixed effects, using the within estimator proposed by \cite{hahn2002asymptotically}. 
 However, the slope homogeneity restriction has often been rejected in empirical analyses, see a detailed survey by \cite{baltagi2008pool}. Meanwhile, there are limited tests designed for the dynamic panel data model.  Examples include the GMM approach of \cite{sarafidis2009test} applied to panels with homogeneous coefficients under factor model representation, and a heteroskedasticity robust $LM$ test, $LM_{HOY}$, of \cite{halunga2017heteroskedasticity} with $n^2/T\rightarrow 0$ required. 

We make two distinct contributions in this paper. First, we propose a $LM$ type test statistic that can be applied to a wide class of linear panel models. Specifically, we show that our test is robust to both  static and dynamic heterogeneous panel data models with weakly exogenous regressors and non-normal errors.
With tools from Random Matrix Theory, we treat sample correlation matrix of residuals directly as a random matrix to establish the asymptotic normality of the test statistics under the SIM-L scheme, which has been suggested to be a more reliable strategy when dealing with high-dimensional statistical problems, see \cite{yao2015sample}. We also show that the proposed test is mathematically equivalent to both the $LM_{RMT}$ and the $LM_{bc}$ test. This finding theoretically enriches the $LM_{RMT}$ and $LM_{bc}$ test by relaxing the restrictive assumptions they need. It is worth mentioning that the existing literature on testing for cross-sectional dependence has mostly focused on the case of strictly exogenous regressors, including the $LM_{adj}$, $LM_{bc}$ and $LM_{HOY}$ tests. Though \cite{pesaran2015testing} showed that the $CD_P$ test is also applicable to autoregressive panel data models so long as the errors are symmetrically distributed, the properties of the $CD_P$ test for dynamic panels that include weakly exogenous regressors have not yet been investigated. The new proposed test fills this gap for the weakly exogenous regressors case, and to the best of our knowledge, there is no such a unified test so far. 

Second, weak cross-sectional dependence is common in empirical applications, see, for example, \cite{bailey2016exponent}, \cite{ertur2017weak}. This leads to a sparse correlation structure with few nonzero off-diagonal entries. Therefore, it is important to test the existence of such weak cross-sectional dependence, which corresponds to sparse alternatives in the high-dimensional statistics literature. The mainstream of more powerful tests for sparse alternatives are based on the maxima absolute value of sample correlations, see  \cite{cai2011limiting} and \cite{hall2010innovated}. However, these tests require stringent conditions that are `unfeasible' in econometric applications and often suffer from size distortions due to slow rates of convergence. In view of this,  we propose a novel and easy-implemented test statistic for cross-sectional dependence based on the fourth moment of sample correlation to boost the power of detecting sparse alternatives. Again, this test can be used in any aforementioned panel setting  under suitable conditions on higher moments of the errors.

The remainder of the paper is organized as follows. Section \ref{sec:existing} discusses the existing tests for no cross-sectional dependence.  Section \ref{sec:test} introduces the new test statistics and establishes the limiting distributions under the SIM-L scheme. Sections \ref{sec:dyanamic} and \ref{sec:fixed} demonstrate that the proposed tests can be extended to both the dynamic and fixed effects panel data models. Section \ref{sec:simulation} reports the results of  Monte Carlo simulations. Section \ref{sec:conclusion} provides concluding remarks and further discussions.

\textbf{Notations.} Throughout the paper, for a matrix $\mathbf{A} \in \mathcal{R}^{n\times n}$,  $tr(\mathbf{A})$ represents the trace $\mathbf{A}$. We use $\lambda_1(\mathbf{A}), \dots, \lambda_n(\mathbf{A})$ to denote $n$ eigenvalues of $\mathbf{A}$. Further, for vectors $\mathbf{a}, \mathbf{b} \in \mathcal{R}^{n\times 1}$, we write  $\inn{\mathbf{a}}{\mathbf{b}}=\mathbf{a}^{T}\mathbf{b}$ for their scalar product. In addition. $\|\cdot\|$ represents the Euclidean norm for a vector and the induced operator norm for a matrix.

\section{Existing tests for cross-sectional dependence based on sample correlation} \label{sec:existing}
Consider the heterogeneous panel data model:
\begin{equation} \label{model:hete}
    y_{it}=\x'_{it}\bbeta_i+v_{it}, \quad \text{for} \;  i=1, \cdots, n;\; t= 1, \cdots, T,
\end{equation}
where $i$ indexes the cross-sectional units and $t$ the time series observations, $y_{it}$ is the response variable and $\x_{it}$ is a $k \times 1$ vector of regressors with unity on the first row with coefficients $\bbeta_i$ allowed to vary cross the cross-sectional units. For each $i$, the error term, $v_{it}$, are assumed to
be serially independent with zero mean and finite variance. The null hypothesis of interest in the literature is
\begin{equation*} \label{nullh}
H_0 : v_{it}\; \text{is independent of}\; v_{jt}, \; \text{for all} \; t \; \text{and} \; i\neq j.
\end{equation*}
When $T$ is sufficiently large, a natural way to test $H_0$ is based on some reliable estimates ($\hat{\rho}_{ij}$) for the pair-wise error sample correlations ($\rho_{ij}$). Specifically,
\begin{equation}
    \hat{\rho}_{ij}=\hat{\rho}_{ji}=\frac{\sum_{t=1}^T\hat{v}_{it}\hat{v}_{jt}}{\left(\sum_{t=1}^T\hat{v}_{it}^2\right)^{1/2}\left(\sum_{t=1}^T\hat{v}_{jt}^2\right)^{1/2}},
\end{equation}
where $\hat{v}_{it}$ is the Ordinary Least Squares (OLS) estimate of $v_{it}$ in (\ref{model:hete}) defined by 
\begin{equation} \label{estimate:residual}
    \hat{v}_{it}= y_{it}-\x'_{it}\hat{\bbeta}_i,
\end{equation}
with $\hat{\bbeta}_i$ being the OLS estimate of $\bbeta_i$ by regressing the $T$ sample observations $y_{it}$ on $\x_{it}$ for each $i$. In the seemingly unrelated regression equations (SURE) context with fixed $n$ and as $T\rightarrow\infty$, \cite{breusch1980lagrange} proposed a Lagrange multiplier ($LM$) test for testing $H_0$ given by
\begin{equation} \label{test:lm}
    LM=\frac{T}{2}\sum_{1\leq i\neq j\leq n}\hat{\rho}_{ij}^2=\frac{T}{2}\left[tr(\hat{\R}^2)-n \right],
\end{equation} 
 where $\hat{\R}=(\hat{\rho}_{ij})_{1\leq i, j\leq n}$ is the sample correlation matrix of residuals. It has been shown that $LM\cvd \chi^2_d$, where $d=\frac{1}{2}n(n-1)$  under $H_0$ and normal errors assumption. However, it is well known that the $LM$ test is severely oversized when $n$ is relatively large compared to $T$. To amend this size distortion, \cite{pesaran2004general} put forward a scaled version of the $LM$ test given by
\begin{equation} \label{test:cdlm}
    CD_{LM}=\sqrt{\frac{1}{4n(n-1)}}\sum_{1\leq i\neq j\leq n}(T\hat{\rho}_{ij}^2-1)=\sqrt{\frac{T^2}{4n(n-1)}}\left[tr(\hat{\R}^2)-n-\frac{n(n-1)}{T} \right],
\end{equation}
which is asymptotically distributed as $N(0,1)$ under the SEQ-L scheme. There are two important cases where the $CD_{LM}$ test is not reliable. Firstly, as \cite{baltagi2012lagrange} noted, it will exhibit substantial size distortions in the homogeneous panel data model. Secondly, \cite{pesaran2004general} pointed out that, in finite $T$ case, the $CD_{LM}$ test tends to over-reject the null due to the fact that $E(T\hat{\rho}_{ij}^2-1)$ is not correctly centered at zero. This kind of bias even accumulates as $n$ becomes larger. In this case, \cite{dufour2002exact} suggested to apply bootstrap method to (\ref{test:cdlm}).   \cite{pesaran2004general} and \cite{pesaran2015testing} proposed an alternative adjustment based on the raw, non-squared, sample correlation coefficients given by
\begin{equation} \label{test:cd_p}
    CD_{P}=\sqrt{\frac{T}{2n(n-1)}}\sum_{1\leq i\neq j\leq n}\hat{\rho}_{ij}.
\end{equation}
The test is asymptotically distributed as standard normal under both SEQ-L and SIM-L schemes. However, it is widely reported that
the $CD_P$ test suffers from a specific loss of power when the loadings have zero mean in the cross-sectional dimension under factor representation.  \cite{baltagi2012lagrange} proposed another modified version of the $CD_{LM}$ test for fixed effect panel data model, $LM_{bc}$, given by
\begin{align} \label{test:lmbc}
     LM_{bc}=CD_{LM}-\frac{n}{2(T-1)}=\sqrt{\frac{T^2}{4n(n-1)}}\left[tr(\hat{\R}^2)-n-\frac{n(n-1)}{T}-\frac{n\sqrt{n(n-1)}}{T(T-1)} \right].
\end{align}
The test is asymptotically standard normal with normal errors and strictly exogenous regressors under the SIM-L scheme. \cite{pesaran2008bias} proposed an alternative finite sample adjustment to the $LM$ test by deriving the exact moments of the squared sample correlation coefficients under normal errors and strictly exogenous regressors assumptions. Their $LM_{adj}$ test statistic is given by
\begin{equation} \label{test:lmadj}
    LM_{adj}=\sqrt{\frac{1}{2n(n-1)}}\sum_{1\leq i\neq j\leq n}\frac{(T-k)\hat{\rho}_{ij}^2-\mu_{T,i,j}}{\sigma_{T,i,j}},
\end{equation}
where $$\mu_{T,i,j}=\frac{1}{T-k}tr(\MM_i\MM_j), \; \sigma_{T,i,j}^2=\left[tr(\MM_i\MM_j) \right]^2a_{1T}+ tr(\MM_i\MM_j)^2a_{2T},$$
and $$a_{1T}=a_{2T}-\frac{1}{(T-k)^2}, \; a_{2T}=\frac{3}{(T-k+2)^2},$$
$\MM_i=\mathbf{I}_T-\XX_i'(\XX_i\XX_i')^{-1}\XX_i$ is the projection matrix, where $\XX_i=(\x_{i1}, \cdots, \x_{iT})$ contains $T$ samples on the $k$ regressors for the $i$-th individual regression.  Under $H_0$ and the SEQ-L scheme, $LM_{adj}$ was shown to be asymptotically distributed as $N(0,1).$ However, as pointed out by \cite{pesaran2008bias}, the $LM_{adj}$ test is not robust in panel data models with weakly exogenous regressors. \cite{bailey2021lagrange} proposed another modified $LM$ test for heterogenous panel data models based on Random Matrix Theory, $LM_{RMT}$, given by
\begin{align} \label{test:lmrmt}
LM_{RMT}=\sigma_{RMT}^{-1}\left[{tr(\hat{\R}^2)-n-\frac{n^2}{T}-\frac{n^2}{T^2}+\frac{n}{T}}\right],
\end{align}
where 
\begin{equation} \label{sig:lmrmt}
    \sigma_{RMT}=\frac{4n(2n+T)(n+2T)}{T^3}-\frac{4(\kappa-1)n(n+T)^2}{T^3}-\frac{(\kappa-3)n(n-4T)^2(n+T)^2}{T^5}
\end{equation}
with $\kappa=\frac{3T(T-k-2)}{(T+2)(T-k)}$. Under the assumptions of normal regressors and normal errors, the authors showed that $LM_{RMT}$ is asymptotically distributed as $N(0,1)$ under the SIM-L scheme. The application scopes of the discussed tests are summarized in Table \ref{tab:comparison}.  (The table also contains the new tests proposed in this paper in the last two columns so-called $RLM$ and $RLM_{PE}$, which are developed later.)

\begin{table}
\caption{Application scope of each test for cross-sectional independence   \label{tab:comparison}}
\begin{center}
\begin{tabular}{ccccccc}
 & $CD_P$ & $LM_{adj}$ & $LM_{bc}$ & $LM_{RMT}$& $RLM$ & $RLM_{PE}$\\\hline
Heterogeneous coefficients  & \checkmark  & \checkmark  & $\boldsymbol{?}$ &\checkmark &\checkmark &\checkmark  \\\
Fixed effects panels  & *  & *  & \checkmark & * &\checkmark &\checkmark \\
Dynamic panels  &  \checkmark & * & \checkmark & * &\checkmark & \checkmark\\
Weakly exogenous regressors  &  * & $\boldsymbol{?}$ & $\boldsymbol{?}$ &  *&\checkmark & \checkmark\\
Non-normal errors  &  \checkmark & * & *& * & \checkmark  & \checkmark\\
SIM-L & \checkmark & $\boldsymbol{?}$& \checkmark & \checkmark &\checkmark & \checkmark\\\hline
\end{tabular}
\end{center}
Notes: ``$\checkmark$" signifies that the test can be applied in corresponding panel setting with a theoretical justification.  ``*" means the test is empirically validated without however a theoretical basis.  ``$\boldsymbol{?}$" denotes that it remains unclear in the literature whether such test is applicable. 
\end{table}

\section{The RLM test and its power enhancement} \label{sec:test}
\subsection{The RLM test} \label{sec:rlm}
Motivated by the existing well-known tests based on the sum of squared sample correlation coefficients in (\ref{test:lm}), (\ref{test:cdlm}), (\ref{test:lmbc}), (\ref{test:lmadj}) and (\ref{test:lmrmt}), it is natural to consider the limiting behavior of $tr(\hat{\R}^2)$ under the SIM-L scheme. Throughout the paper, we consider the following assumptions.
\begin{assumption} \label{assum:siml}
$T \rightarrow \infty, n=n(T)\rightarrow \infty$ such  that $c_T=\frac{n}{T}\rightarrow c \in (0,\infty)$.
\end{assumption}

\begin{assumption} \label{assum:disturb}
  For each $i$, the errors, $\{v_{it}\}$, are i.i.d distributed with mean $0$ and variance $\sigma_i^2$. 
\end{assumption}
\begin{assumption}\label{assum:moment}
\begin{enumerate}
    \item[(i)] The errors have uniformly bounded sixth moment, i.e.\\ $\sup_{i,t}E|v_{it}|^{6+\epsilon}\leq C_1$ for some positive constant $C_1$ and $\epsilon>0$.
    \item[(ii)] The errors have uniformly bounded eighth moment, i.e. $\sup_{i,t}E|v_{it}|^{8+\epsilon}\leq C_2$ for some positive constant $C_2$ and $\epsilon>0$.
\end{enumerate}
\end{assumption}
For a static heterogeneous panel data model, we further assume
\begin{assumption} \label{assum:design} For each $i$, the regressors, $\x_{it}$, satisfy
\begin{enumerate}
    \item[(i)] $E(v_{it}| \x_{it}, \cdots, \x_{i1})=0$ for all $i$ and $t$.
    \item[(ii)] let $\XX_i=(\x_{i1},\dots,\x_{iT})$, there exists a  $k\times k$ nonrandom positive definite matrix $\mathbf{B}$ such that $\frac{1}{T}\XX_i\XX_i'\stackrel{p}{\rightarrow}\mathbf{B}$.
     \item[(iii)] $\max_{1\leq t \leq T}\|\frac{1}{\sqrt{T}}\x_{it}\|\stackrel{p}{\rightarrow}0$.
\end{enumerate}
\end{assumption}
Assumption \ref{assum:disturb} is standard allowing for heteroskedastic errors across units. Assumption \ref{assum:moment} requires suitable moments of the errors for the two proposed test procedures, respectively. It helps relax the often-met normal error assumption by Random Matrix Theory. Assumption \ref{assum:design}(i) only requires the regressors to be weakly exogenous. Assumption \ref{assum:design}(ii) and (iii) impose mild conditions on the design matrix. We note that Assumption \ref{assum:design} does not impose the dependence structure between errors and regressors, which allows for the regressors to be weakly exogenous. Under these assumptions, $\sqrt{T}(\hat{\bbeta}_i-\bbeta_i)$ is asymptotically normal according to \cite{lai1982least}.

For a dynamic heterogeneous panel data model with lagged dependent variable included in regressors, more assumptions are needed which will be discussed in Section \ref{sec:dyanamic}.

Now we are in the position of introducing the $RLM$ test and establishing its asymptotic property in the following theorem.

\begin{theorem} \label{trR2:hete}
Under Assumptions \ref{assum:siml}, \ref{assum:disturb}, \ref{assum:moment}$(i)$ and \ref{assum:design}, 
\begin{equation*} \label{test:rlm}
    RLM=\frac{tr(\hat{\R}^2)-\mu_0}{\sigma_0}\cvd N(0,1),
\end{equation*}
where $\mu_0=n+\frac{n^2}{T-1}-c_T$ and $\sigma_0^2=4c_T^2.$
\end{theorem}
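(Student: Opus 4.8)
The plan is to treat the residual sample correlation matrix $\hat{\R}$ directly as a random matrix and to recognize $tr(\hat{\R}^2)$ as a linear spectral statistic (LSS) of $\hat{\R}$ associated with the test function $f(x)=x^2$, so that a central limit theorem for LSS of high-dimensional sample correlation matrices can be applied. As a first reduction I would write $tr(\hat{\R}^2)=n+\sum_{i\neq j}\hat{\rho}_{ij}^2$ and express the OLS residual vectors as projections, $\hbv_i=\MM_i\bv_i$ with $\MM_i=\mathbf{I}_T-\XX_i'(\XX_i\XX_i')^{-1}\XX_i$ and $\bv_i=(v_{i1},\dots,v_{iT})'$, so that $\hat{\rho}_{ij}=\bv_i'\MM_i\MM_j\bv_j/(\|\hbv_i\|\,\|\hbv_j\|)$. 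The target is then the joint fluctuation of the off-diagonal sum, centred and scaled as in the statement.

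Second, I would carry out the linearization that removes the effect of estimating $\bbeta_i$. Under Assumption~\ref{assum:design}, $\MM_i=\mathbf{I}_T-P_i$ with $P_i$ of fixed rank $k$ and, by (ii)--(iii), with uniformly small entries; together with the weak-exogeneity condition (i) and the asymptotic normality of $\sqrt{T}(\hab_i-\bbeta_i)$ from \cite{lai1982least}, this makes each $P_i$ a finite-rank perturbation whose contribution to $tr(\hat{\R}^2)$ is, to leading order, deterministic. I would show that the denominators self-normalise, $\tfrac1T\|\hbv_i\|^2\cvp\sigma_i^2$, and that the only surviving first-order effect of the projections is a mean shift governed by $tr(\MM_i\MM_j)$; combined with the substitution principle relating centred and uncentred sample covariance matrices (which accounts for $T$ being replaced by $T-1$), this produces the centring constant $\mu_0=n+\frac{n^2}{T-1}-c_T$.

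Third, I would construct a reference correlation matrix built from standardised i.i.d. errors, argue that its LSS fluctuations coincide with those of $\hat{\R}$ after the reduction above, and invoke the CLT for LSS of sample correlation matrices (in the spirit of the Bai--Silverstein LSS theory, specialised to correlation matrices). Evaluating the mean and variance functionals for $f(x)=x^2$ against the Marchenko--Pastur law with ratio $c$ yields the asymptotic variance $\sigma_0^2=4c_T^2$. A decisive point here is that the self-normalisation inherent in a correlation matrix cancels the fourth-cumulant contribution that would otherwise enter the covariance-matrix LSS variance; this is what makes the limiting variance free of the error kurtosis and hence valid for non-normal errors under only the sixth-moment Assumption~\ref{assum:moment}(i).

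The main obstacle, and what distinguishes this setting from the normal-regressor/normal-error framework of \cite{bailey2021lagrange}, is the presence of heterogeneous and data-dependent projections $\MM_i$ under weak exogeneity: because $\XX_i$ may depend on past errors, $\MM_i$ is random and correlated with $\bv_i$, so the clean independence structure underlying standard correlation-matrix RMT is lost. I expect the technical heart of the argument to be showing, by conditioning and exploiting the martingale-difference structure implied by Assumption~\ref{assum:design}(i) (so that $E[v_{it}\mid\x_{it},\dots,\x_{i1}]=0$), that this dependence contributes only to the deterministic mean correction and leaves the leading-order Gaussian fluctuation, with variance $4c_T^2$, unchanged.
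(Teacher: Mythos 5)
Your overall architecture---treat $tr(\hat{\R}^2)$ as a linear spectral statistic of a sample correlation matrix, invoke an RMT central limit theorem for $f(x)=x^2$ (giving variance $4c_T^2$, free of the error kurtosis because of self-normalisation), and then dispose of the effect of the estimated coefficients---matches the paper's two-stage strategy (its Lemma~1 and Lemma~2). The genuine gap is in your second step, where you derive the centring constant. You claim that the projections $\MM_i$ contribute, to leading order, a deterministic mean shift governed by $tr(\MM_i\MM_j)$, which, combined with a ``substitution principle'' replacing $T$ by $T-1$, produces $\mu_0=n+\frac{n^2}{T-1}-c_T$. That mechanism cannot produce this $\mu_0$: since $tr(\MM_i\MM_j)$ equals $T-2k$ plus lower-order terms, any mean shift governed by it would necessarily carry a dependence on the number of regressors $k$ (exactly as in the $LM_{adj}$ correction of \cite{pesaran2008bias}), whereas $\mu_0$ contains no $k$ at all. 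Moreover, the $T-1$ in $\mu_0$ has nothing to do with demeaning or with a substitution principle---$\hat{\rho}_{ij}$ in this paper is not demeaned; the $T-1$ arises intrinsically from the exact centring term $n\bigl(1+\frac{n}{T-1}\bigr)$ in the CLT for the linear spectral statistics of the sample correlation matrix built from the \emph{raw errors} (Theorem~3.1 and Example~3.2 of \cite{Yin2021spectral}, which is what the paper's Lemma~1 invokes).

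What the paper actually proves in its second stage (Lemma~2) is the stronger statement that estimation of the slopes contributes nothing at all: $tr(\hat{\R}^2)=tr(\R^2)+o_p(1)$, where $\R$ is the sample correlation matrix of the true errors; hence both $\mu_0$ and $\sigma_0$ come entirely from the clean i.i.d.\ setting, and no bias correction from the projections enters. Proving this negligibility is the technical heart, and your sketch does not supply it: each unit has its own random projection $\MM_i$ (correlated with $\bv_i$ under weak exogeneity), so the perturbation is not a single finite-rank perturbation of the data matrix, and the $O(n^2)$ off-diagonal terms could in principle accumulate. The paper controls this accumulation by maximal bounds---$\max_{i,j}|\inn{\bv_i}{\bw_j}|$ and $\max_{i,j}|\inn{\bw_i}{\bw_j}|$ of order $O_p(n^{1/r_1+2\epsilon_1})$ with $\bw_i=\X_i'(\hab_i-\bbeta_i)$, together with $\max_{i\neq j}|\rho_{ij}|=O_p\bigl(\sqrt{\log n/n}\bigr)$---combined with a Cauchy--Schwarz splitting weighted by $T^{\alpha_1}$. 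Your conditioning/martingale idea is aimed at establishing a mean correction that, per the above, is actually zero; so the proposal both misidentifies the source of $\mu_0$ and leaves the key negligibility estimate unproven.
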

The proof of Theorem \ref{trR2:hete} is provided in the Appendix, and the method is in two stages. In the first stage, Lemma 1 establishes the Central Limit Theorem  of $tr(\R^2)$ with tools from Random Matrix Theory, where $\R=(\rho_{ij})_{1\leq i, j \leq n}$. In the second stage, Lemma 2 shows that the asymptotic bias of $tr(\hat{\R}^2)$ disappears under the SIM-L scheme with $c_T \rightarrow c \in (0, \infty)$.

\subsubsection{Relationship between the $RLM$ and $LM_{RMT}$ tests}

By the respective definitions of the $LM_{RMT}$ and $RLM$ in (\ref{test:lmrmt}) and Theorem \ref{test:rlm}, we have
\begin{equation*}
    \mu_0=n(1+\frac{n}{T-1})-c_T=n\left(1+\frac{n}{T}(1+\frac{1}{T-1})\right)-c_T=n+\frac{n^2}{T}+\frac{n^2}{T^2}-\frac{n}{T}+o(1),
\end{equation*}
and
\begin{equation*}
    \sigma_0^2=\sigma_{RMT}^2+o(1)
\end{equation*}
as $\kappa=3+o(1)$. It follows that 
\begin{equation} \label{rlm-lmrmt}
    RLM=LM_{RMT}+o(1),
\end{equation}
which indicates that $LM_{RMT}$ is asymptotically equivalent to $RLM$ regardless of model specifications and assumptions. 
Note that the proof the asymptotic normality of $LM_{RMT}$ in \cite{bailey2021lagrange} 
heavily relies on the assumptions of normal regressors and normal errors as it is used to ensure the residuals have desirable properties, and then transform the sample correlation matrix of residuals to the sample correlation matrix of a nomarlized population with unit covariance matrix (see details in Section 3.1 in \cite{bailey2021lagrange}). From (\ref{rlm-lmrmt}), we conclude that the $LM_{RMT}$ is also valid without the restrictive assumptions of normality. Besides, as we will show later, $RLM$ is also valid in both dynamic and fixed effects panel data models, which theoretically extends the application scope of $LM_{RMT}$. This finding is consistent with the simulation findings that show such robustness of  $LM_{RMT}$ in \cite{bailey2021lagrange}.

\subsubsection{Relationship of the $RLM$, $LM_{bc}$ and $CD_{LM}$ tests}
By the respective definitions of the $CD_{LM}$, $LM_{bc}$ and  $RLM$ tests in (\ref{test:cdlm}), (\ref{test:lmbc}) and Theorem \ref{test:rlm}, we have the following identities
\begin{equation}
    CD_{LM}=LM_{bc}+\frac{n}{2(T-1)},
\end{equation}
\begin{equation}
    CD_{LM}=\sqrt{\frac{n}{n-1}}\left(RLM+\frac{n}{2(T-1)}\right)
\end{equation}
and
\begin{equation} \label{lmbc=rlm}
    LM_{bc}=\sqrt{\frac{n}{n-1}}\left(RLM+\frac{\sqrt{n}}{2(T-1)(\sqrt{n}+\sqrt{n-1})}\right).
\end{equation}
Note that the factor $\sqrt{\frac{n}{n-1}}\rightarrow 1$ and the remainder $\frac{\sqrt{n}}{2(T-1)(\sqrt{n}+\sqrt{n-1})}\rightarrow 0$. It follows that the two tests, $RLM$ and $LM_{bc}$, are always asymptotically equivalent, while the $CD_{LM}$ statistic has always a positive mean shift of value $\frac{n}{2(T-1)}$.

In particular, Theorem \ref{trR2:hete} is also valid for the $LM_{bc}$ statistic. Moreover, anticipating Theorems \ref{trR2:dynamic} and \ref{trR2:fixed} in Sections \ref{sec:dyanamic} and \ref{sec:fixed}, for dynamic and fixed effects panel data model, respectively, these asymptotic normality are also valid for the $LM_{bc}$ statistic. In this sense, the results from the paper can also be considered as new extension of the $LM_{bc}$ test, originally developed for homogeneous fixed effects panel data model in \cite{baltagi2012lagrange}, to various large panel models with coefficient heterogeneity.

\subsection{The $RLM_{PE}$ test}
In the high dimensional setting, for testing the identity hypothesis $H_0: \mathrm{corr}(\mathbf{v}_t)=\mathbf{I}_n$, where $\mathbf{v}_t=(v_{1t},\dots,v_{nt})'$, there are mainly two types of test statistics. The majority of existing tests are based on the squared Frobenius norm $\|\R-\mathbf{I}_n\|_F^2=tr(
\R^2)-n=\sum_{i\neq j}\rho_{ij}^2.$ However, this quadratic statistic lacks power if $ \mathrm{corr}(\mathbf{v}_t)$ is a sparse matrix, see \cite{fan2015power}. Considering this, tests based on the maxima of absolute values, $\max_{i<j}|\rho_{ij}|$, which share a asymptotic type I extreme value distribution, are generally powerful under sparse alternatives. This approach has however a main drawback that such test can suffer from size distortions, which is common for statistics of the maximum type, see \cite{liu2008asymptotic}. Besides, this way is not as appropriate as the Frobenius norm (sum) type in some cases. For example, consider the alternative
\begin{equation*}
    H_1:  \mathrm{corr}(\mathbf{v}_t)=\mathbf{I}_n+\mathbf{P}_n,
\end{equation*}
where $\mathbf{P}_n$ is a perturbation matrix with diagonal entries being zero and  $s$ non-zero off diagonal entries, where $s \in \{1, \dots, n^2-n \}$. Intuitively, $\mathbf{P}_n$ can be designed as a dense matrix but with weak coefficients such that $\max_{i<j}|\rho_{ij}|<z_\alpha$ for any $s$. Consequently,  the extreme value type tests will fail to detect such a matrix. In such instances, the sum type tests are more suitable in the light of the fact that the eigenvalues of $\R$ could vary from $H_0$ to $H_1$,  which results in larger quadratic statistic value by $tr(
\R^2)=\sum_{i=1}^n\lambda_i^2({\R})$.  In order to realize an interpolation of the two types of statistics above, namely the maximum type and the sum type, we propose a new test statistic based on $tr(\R^4)=\sum_{i=1}^n\lambda_i^4({\R})$. The reason is that large empirical correlations, $\rho_{ij}^4$, would be more emphasized in $tr(\R^4)$ than in $tr(\R^2)$. To see this, consider increasingly large powers of the sample correlations, $\rho_{ij}^m$, where $m$ is a positive integer.  Let $E=\text{argmax}_{(i,j):\;i<j}|\rho_{ij}|$, then \begin{equation} \label{eq:equiv}
     \sum_{i\neq j}|\rho_{ij}|^m \sim \text{card}(E)\cdot (\max_{i<j}|\rho_{ij}|)^m, \; \text{as} \;m\rightarrow \infty,
 \end{equation}
where $\mathrm{card}(E)$ denotes the cardinality of the set $E$. Therefore, the new statistic with $m=4$ can mimic some properties of the maximum type, while remaining a sum type smoothing statistic. The resulting power is expected to be higher than $tr(\R^2)$ when very few sample correlations are significantly non-zero under sparse alternatives, and higher than maximum type statistics when there are many but relatively small correlations.

\subsubsection{Test based on $\sum_{i\neq j}\hat{\rho}_{ij}^4$}
On the ground of analyses above, we propose a new test statistic based on the fourth power of $\hat{\rho}_{ij}$ in the following theorem.
\begin{theorem} \label{trR4:hete}
Under Assumptions \ref{assum:siml}, \ref{assum:disturb}, \ref{assum:moment}$(ii)$ and \ref{assum:design}, 
\begin{equation*} \label{test:rlmpe}
    RLM_{PE}=\frac{tr(\hat{\R}^4)-\mu_{PE}}{\sigma_{PE}}\cvd N(0,1)
\end{equation*}
where $\mu_{PE}=n+\frac{6n^2}{T-1}+\frac{6n^3}{(T-1)^2}+\frac{n^4}{(T-1)^2}-6c_T(1+c_T)^2-2c_T^2$ and $\sigma_{PE}^2=8c_T^2+96c_T^3(1+c_T)^2+16c_T^2(3c_T^2+8c_T+3)^2.$
\end{theorem}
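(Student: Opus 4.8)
The plan is to follow the same two-stage strategy that underlies Theorem~\ref{trR2:hete}, replacing the quadratic spectral statistic $tr(\R^2)$ by the quartic one $tr(\R^4)=\sum_{i=1}^{n}\lambda_i^4(\R)$. In the first stage I would prove a central limit theorem for $tr(\R^4)$ formed from the true (unobserved) errors, where $\R=(\rho_{ij})$ is the sample correlation matrix of $\mathbf{v}_t=(v_{1t},\dots,v_{nt})'$; in the second stage I would show that replacing the errors $v_{it}$ by the OLS residuals $\hat v_{it}$ contributes only a deterministic bias, already absorbed into the centering $\mu_{PE}$, plus a stochastic remainder that is $o_p(\sigma_{PE})$. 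Since $\sigma_{PE}=O(1)$ as $c_T\to c$, the two statistics then share the same limit after standardization.

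For the first stage, under $H_0$ together with Assumptions~\ref{assum:disturb} and~\ref{assum:moment}(ii) the matrix $\R$ is the sample correlation matrix of $n$ cross-sectionally independent standardized sequences, so its empirical spectral distribution obeys the Marchenko--Pastur law with ratio $c_T=n/T$; using correlations rather than covariances also neutralizes the cross-unit heteroskedasticity permitted by Assumption~\ref{assum:disturb}. I would invoke the central limit theorem for linear spectral statistics of sample correlation matrices (the quartic counterpart of Lemma~1, in the spirit of \cite{yao2015sample}) applied to the polynomial test function $f(x)=x^4$. Its deterministic centering is $n\int x^4\,dF^{c_T}(x)=n\,m_4(c_T)$ with $m_4(c)=1+6c+6c^2+c^3$ the fourth Marchenko--Pastur moment, which accounts for the leading order-$n$ behaviour of $\mu_{PE}$ up to the $T\mapsto T-1$ refinement contributed by the second stage, while the order-one mean functional of the CLT delivers the correction $-6c_T(1+c_T)^2-2c_T^2$; the variance functional evaluated at $f(x)=x^4$ yields $\sigma_{PE}^2$.

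The decisive structural point, mirroring the distribution-free variance $\sigma_0^2=4c_T^2$ in Theorem~\ref{trR2:hete}, is that the self-normalization built into the correlation matrix annihilates the fourth-cumulant (kurtosis) contribution that would otherwise enter the covariance-matrix CLT. Establishing this cancellation at fourth order is what makes $\sigma_{PE}^2$ free of the error distribution and thereby secures robustness to non-normal errors. This is where I expect the main difficulty to lie: the quartic test function generates far more moment/partition terms than $x^2$, and the task is to collect them into the compact form $\sigma_{PE}^2=8c_T^2+96c_T^3(1+c_T)^2+16c_T^2(3c_T^2+8c_T+3)^2$ while verifying that every kurtosis-dependent piece vanishes. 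The strengthening of the moment hypothesis from the sixth moment of Assumption~\ref{assum:moment}(i) to the eighth moment of Assumption~\ref{assum:moment}(ii) is precisely what provides the integrability and truncation bounds needed to control these higher-order products.

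For the second stage I would write the residual vector as $\hat{\mathbf{v}}_i=\MM_i\mathbf{v}_i$ with $\MM_i=\mathbf{I}_T-\XX_i'(\XX_i\XX_i')^{-1}\XX_i$ of rank $T-k$, a finite-rank perturbation of the identity since $k$ is fixed. Using $\sqrt{T}(\hab_i-\bbeta_i)=O_p(1)$ from \cite{lai1982least} together with Assumption~\ref{assum:design}(ii)--(iii), I would bound the entrywise perturbation $\hat\rho_{ij}-\rho_{ij}$; the genuine subtlety here, absent under the strict exogeneity assumed by competing tests, is that weak exogeneity (Assumption~\ref{assum:design}(i)) makes $\MM_i$ and $\mathbf{v}_i$ dependent, so the martingale structure must be exploited rather than an independence factorization. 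Exploiting that $\|\R\|$ and $\|\hat\R\|$ stay bounded by the Marchenko--Pastur edge $(1+\sqrt{c})^2+o_p(1)$, I would then control the telescoping identity $tr(\hat\R^4)-tr(\R^4)=\sum_{\ell=0}^{3}tr\big((\hat\R-\R)\hat\R^{\ell}\R^{3-\ell}\big)$, separating its deterministic mean (the $T\mapsto T-1$ adjustment recorded in $\mu_{PE}$) from a remainder that is $o_p(\sigma_{PE})$. Combining the two stages yields $RLM_{PE}\cvd N(0,1)$.
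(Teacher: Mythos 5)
Your two-stage skeleton is the same as the paper's (Lemma \ref{clt:hete:4}: a CLT for $tr(\R^4)$ built from the \emph{true} errors via the linear-spectral-statistics theory for sample correlation matrices with $g(x)=x^4$; Lemma \ref{op1:hete:4}: negligibility of the residual substitution), but your division of labor between the two stages contains a genuine error. You center the first-stage CLT at the Marchenko--Pastur moment $n\,m_4(c_T)$ with $c_T=n/T$ and expect the second stage to supply the $T\mapsto T-1$ refinement as a ``deterministic mean'' of the residual perturbation. The paper has it exactly the other way around: the $(T-1)$-centering is already part of the CLT for the correlation matrix of the true errors (it reflects the degree of freedom absorbed by the self-normalization $\sum_t v_{it}^2$ inside each $\rho_{ij}$, and is read off from Example 3.2 of the cited Yin (2021) result), while Lemma \ref{op1:hete:4} proves that the residual substitution contributes nothing at all, $tr(\hat{\R}^4)=tr(\R^4)+o_p(1)$. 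This matters quantitatively: the gap between the two centerings is not asymptotically negligible, e.g. $6n^2/(T-1)-6n^2/T\to 6c^2$, an $O(1)$ shift of the same order as $\sigma_{PE}$. Since the residual effect truly is $o_p(1)$, your plan as stated would leave the standardized statistic with a nonvanishing asymptotic mean; conversely, the deterministic bias you hope to extract from $\hat{\R}-\R$ does not exist.

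A second, independent problem is that your control of the second stage by the telescoping identity plus norm bounds is too coarse to close. Uniformly in $i\neq j$ one only gets $\hat{\rho}_{ij}-\rho_{ij}=O_p(n^{\delta-1})$ for some small $\delta>0$ (this is the content of Lemma \ref{lem:estimates}), hence $\|\hat{\R}-\R\|_F=O_p(n^{\delta})$ and
$\big|tr\big((\hat{\R}-\R)\hat{\R}^{\ell}\R^{3-\ell}\big)\big|\le\|\hat{\R}-\R\|_F\,\|\hat{\R}^{\ell}\R^{3-\ell}\|_F=O_p(n^{1/2+\delta})$,
nowhere near $o_p(1)$; Marchenko--Pastur edge bounds on $\|\R\|$ and $\|\hat{\R}\|$ cannot recover the missing factor $\sqrt{n}$. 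The paper instead expands $tr(\hat{\R}^4-\R^4)$ entrywise over index patterns (sums over $i\neq j$, $i\neq j\neq l$, $i\neq j\neq l\neq s$), applies Cauchy--Schwarz with a $T^{\alpha}$ splitting, and exploits the smallness of the true off-diagonal correlations, $\max_{i<j}|\rho_{ij}|=O_p(\sqrt{\log n/n})$ (Lemma \ref{lem:max_rho}), together with the uniform bounds $\max_{i,j}|\inn{\bv_i}{\bw_j}|=O_p(n^{1/r_1+2\epsilon_1})$ and $\max_{i,j}|\inn{\bw_i}{\bw_j}|=O_p(n^{1/r_1+2\epsilon_1})$ of Lemma \ref{lem:estimates}; it is this entrywise cancellation, not operator-norm control, that yields $o_p(1)$. (Your remark that self-normalization annihilates the fourth-cumulant contribution to $\sigma_{PE}^2$ is correct in spirit, but it is already built into the cited correlation-matrix CLT; re-deriving it is not where the proof's work lies.)
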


\begin{remark}
We choose $m=4$ to generate the $RLM_{PE}$ test for technical simplicity. In fact, one can increase $m$ to any large even integer to obtain new tests that may have larger power in the sparse correlation setting. This strategy is feasible with the proof techniques provided in Appendix. Further, by (\ref{eq:equiv}), it is expected that tests based on $\sum_{i\neq j}|\hat{\rho}_{ij}|^m$ would share similar power with maximum type statistics, which has been suggested as a powerful tests in sparse data, for example, see \cite{cai2014two}.  It can provide a series of potential statistics that can well control the size and may be more powerful under sparse alternatives at the same time.
\end{remark}
\begin{remark}
For the initial $RLM$ test (also the $LM_{bc}$ test), the remarkable screening technique in \cite{fan2015power} can provide an improved test that has the same asymptotic size with non-inferior asymptotic power against a broader range of alternatives. Compared to this approach, our power enhanced test avoids constructing such a ``power enhancement component" by increasing the power of the sample correlation to four. However, studying the power properties of our technique with different choices of $m$ is not the main focus in this paper and remains an open problem. 
\end{remark}
The proof of the Theorem \ref{trR4:hete} is similar to that of Theorem \ref{trR2:hete}, which requires the two lemmas given in Appendix.

\section{Dynamic panel data model} \label{sec:dyanamic}
In this section, we show that the $RLM$ and $RLM_{PE}$ tests are asymptotically valid in a dynamic panel data model, which is specified as following:
\begin{equation} \label{model:dynamic}
    y_{it}=\alpha_i y_{i,t-1}+\x_{it}'\bbeta_i+v_{it},
\end{equation}
for $i=1,\dots,n; t=1,\dots,T$, where $y_{i,t-1}$ is the lagged dependent variable. Let  $\z_{it}=(y_{i,t-1}, \x_{it}')$, $\pphi_i=(\bbeta_i,\alpha_i)'$, then (\ref{model:dynamic}) can be rewritten as $y_{it}=\z_{it}'\pphi_i+v_{it}$. We show that the proposed $RLM$ and $RLM_{PE}$ tests still have standard normal limiting distribution under the null hypothesis in the dynamic panel data model. To establish the asymptotic normality, we need additional assumptions as following,
\begin{assumption} \label{assum:se}
\begin{enumerate}
    \item[(i)] $\{y_{it}\}_{1\leq i \leq n, 1\leq t \leq T}$ is a stationary and ergodic process.
     \item[(ii)]  Let $\Y_i=(y_{i,0},\dots,y_{i,T-1})$, $ \frac{1}{T}\Y_i\Y_i'=O_p(1)$ holds uniformly in $i$.
\end{enumerate}
\end{assumption}
We establish the limiting distributions of the proposed tests in the following theorems
\begin{theorem} \label{trR2:dynamic}
 Under Assumptions \ref{assum:siml}, \ref{assum:disturb}, \ref{assum:moment}$(i)$, \ref{assum:design} and \ref{assum:se},
$$RLM\cvd N(0,1).$$
\end{theorem}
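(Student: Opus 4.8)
The plan is to reuse the two-stage architecture of the proof of Theorem \ref{trR2:hete}, exploiting the fact that only the second stage is sensitive to the model specification. In the first stage, Lemma 1 delivers the central limit theorem for $tr(\R^2)$, where $\R=(\rho_{ij})$ is built from the \emph{true} errors $v_{it}$. Since under $H_0$ the errors are independent across $i$ and satisfy Assumptions \ref{assum:disturb} and \ref{assum:moment}$(i)$ regardless of whether the model is static or dynamic, Lemma 1 applies verbatim and already yields $(tr(\R^2)-\mu_0)/\sigma_0\cvd N(0,1)$ with $\mu_0=n+\frac{n^2}{T-1}-c_T$ and $\sigma_0=2c_T$. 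Hence the entire burden of the dynamic case falls on re-establishing the conclusion of Lemma 2, namely that $tr(\hat{\R}^2)-tr(\R^2)$ is asymptotically negligible after centering by $\mu_0$ and scaling by $\sigma_0$.

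First I would record the residual decomposition $\hat v_{it}=v_{it}-\z_{it}'(\hat{\pphi}_i-\pphi_i)$ with $\z_{it}=(y_{i,t-1},\x_{it}')'$, so that the perturbation of $\hat{\R}$ relative to $\R$ is driven entirely by the estimation error $\hat{\pphi}_i-\pphi_i$ together with the design quantities formed from $\z_{it}$. The proof of Lemma 2 uses only two structural facts about the regressors, which in the static case are Assumption \ref{assum:design}$(ii)$–$(iii)$: that $\frac{1}{T}\mathbf{Z}_i\mathbf{Z}_i'$ converges in probability to a positive definite limit, and that $\max_{1\le t\le T}\|T^{-1/2}\z_{it}\|\cvp 0$. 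I would verify both from Assumption \ref{assum:se}: stationarity and ergodicity give convergence of the sample second-moment matrix $\frac{1}{T}\mathbf{Z}_i\mathbf{Z}_i'$ by an ergodic theorem, while $\frac{1}{T}\Y_i\Y_i'=O_p(1)$ uniformly in $i$, combined with stationarity, controls the maximal normalized regressor.

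The decisive point is that the OLS estimator retains root-$T$ consistency and asymptotic normality despite the lagged dependent variable. Because the errors are i.i.d. (Assumption \ref{assum:disturb}) and weakly exogenous (Assumption \ref{assum:design}$(i)$), $v_{it}$ is orthogonal both to its own past $\{v_{i,s}\}_{s<t}$—hence to $y_{i,t-1},y_{i,t-2},\dots$, which are measurable functions of those errors—and to $\x_{it},\dots,\x_{i1}$, yielding the martingale-difference orthogonality $E(v_{it}\mid\z_{it},\dots,\z_{i1})=0$. We are therefore back in the framework of \cite{lai1982least}, and $\sqrt{T}(\hat{\pphi}_i-\pphi_i)=O_p(1)$ holds uniformly in $i$. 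With the two design facts above and this rate in hand, the perturbation expansion used to prove Lemma 2 in the static case carries through with $\x_{it}$ replaced by $\z_{it}$: the cross and quadratic terms in $\hat{\pphi}_i-\pphi_i$ are controlled by the same bounds, the bias vanishes under Assumption \ref{assum:siml}, and $RLM\cvd N(0,1)$ follows.

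The hardest part will be controlling the perturbation terms uniformly across the $n$ units while accounting for the serial dependence that the lagged dependent variable injects into the design. In the static case $\x_{it}$ is weakly exogenous with comparatively benign cross-time structure, whereas $y_{i,t-1}$ carries the entire past $\{v_{i,s}\}_{s<t}$; consequently the bounds on quantities such as $\frac{1}{T}\sum_{t}\z_{it}v_{it}$ and $\frac{1}{T}\sum_{t}\z_{it}\z_{jt}'$ must be obtained through martingale and ergodic-theorem arguments rather than the simpler i.i.d.-type estimates available in the proof of Theorem \ref{trR2:hete}, and one must check that these contributions do not accumulate across the $n$ units after the $T^{-1}$ normalization.
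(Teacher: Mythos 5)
Your proposal is correct and takes essentially the same route as the paper: Lemma \ref{clt:hete} is reused untouched for the true-error matrix $\R$, and the dynamic case is handled by re-deriving the design-dependent estimates underlying Lemma \ref{op1:hete} with $\z_{it}=(y_{i,t-1},\x_{it}')'$ in place of $\x_{it}$. The paper formalizes this reduction as re-verifying parts (a) and (b) of Lemma \ref{lem:estimates} (its Lemma \ref{lem:estimates:dynamic}), treating the $\sum_{t} y_{i,t-1}v_{it}$ term with a martingale CLT and using Assumption \ref{assum:se}(ii) to bound $\|\Y_i\|$, which is precisely the martingale/ergodic argument and uniform $\sqrt{T}$-consistency of $\hat{\pphi}_i$ that you describe.
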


\begin{theorem} \label{trR4:dynamic}
 Under Assumptions \ref{assum:siml}, \ref{assum:disturb}, \ref{assum:moment}$(ii)$, \ref{assum:design} and \ref{assum:se},
$$RLM_{PE}\cvd N(0,1).$$
\end{theorem}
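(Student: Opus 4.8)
The plan is to reuse the two-stage template behind Theorem~\ref{trR4:hete}, replacing only the static design verification by the dynamic one that Theorem~\ref{trR2:dynamic} supplies for the quadratic statistic. In the first stage I would invoke the Random Matrix Theory central limit theorem for $tr(\R^4)$, where $\R=(\rho_{ij})$ is the sample correlation matrix built from the \emph{true} errors $\bv_i=(v_{i1},\dots,v_{iT})'$. This stage does not see the dynamic structure at all: Assumption~\ref{assum:disturb} still makes $\{v_{it}\}_t$ i.i.d.\ and Assumption~\ref{assum:moment}$(ii)$ still supplies the eighth moment that the fourth-power linear spectral statistic requires, so the limit is Gaussian with the same mean $\mu_{PE}$ and variance $\sigma_{PE}^2$ as in the static case and the Stage-1 lemma applies verbatim.

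The dynamic content lives entirely in the second stage, whose goal is $tr(\hat\R^4)-tr(\R^4)=o_p(1)$; this suffices because $\sigma_{PE}=\Theta(1)$ under Assumption~\ref{assum:siml}, so the residual-based statistic inherits the limit of the true-error one. Writing $\Z_i=(\z_{i1},\dots,\z_{iT})$, the OLS residual vector is $\hbv_i=\MM_{z,i}\bv_i$ with $\MM_{z,i}=\mathbf{I}_T-\Z_i'(\Z_i\Z_i')^{-1}\Z_i$, so $\mathbf{I}_T-\MM_{z,i}=\PP_{z,i}$ is a projection of fixed rank $k+1$. The key consequence is that $\hat\rho_{ij}$ departs from $\rho_{ij}$ only through the finite-rank pieces $\bv_i'\PP_{z,i}\bv_j$, $\bv_i'\PP_{z,j}\bv_j$, $\bv_i'\PP_{z,i}\PP_{z,j}\bv_j$ in the numerator and through $\bv_i'\PP_{z,i}\bv_i=O_p(1)$ in the normalisation. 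I would then expand $tr(\hat\R^4)=\sum_{i,j,k,l}\hat\rho_{ij}\hat\rho_{jk}\hat\rho_{kl}\hat\rho_{li}$, substitute $\hat\rho_{ij}=\rho_{ij}+(\hat\rho_{ij}-\rho_{ij})$, and show that every term carrying at least one correction factor is negligible after summation.

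To control these sums I would first establish $\sqrt{T}(\hat\pphi_i-\pphi_i)=O_p(1)$ uniformly in $i$: in the dynamic model this replaces the appeal to \cite{lai1982least} by the stationarity and ergodicity of Assumption~\ref{assum:se}$(i)$ together with the uniform bound $\frac1T\Y_i\Y_i'=O_p(1)$ of Assumption~\ref{assum:se}$(ii)$, the convergence $\frac1T\Z_i\Z_i'\p\mathbf{B}_z$ to a positive definite limit, and the martingale-difference property $E(v_{it}\mid\z_{it},\dots,\z_{i1})=0$ from Assumption~\ref{assum:design}$(i)$. The subtle point is that the lagged dependent variable makes $\Z_i$ correlated with its own error vector $\bv_i$, so the within-unit piece $\bv_i'\PP_{z,i}\bv_i$ must be handled through the stationary-ergodic moment bounds rather than by independence. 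For $i\neq j$, however, cross-sectional independence under $H_0$ keeps $\bv_j$ independent of $(\bv_i,\Z_i)$, so each correction factor such as $\bv_i'\PP_{z,i}\bv_j$ has conditional mean zero given $(\Z_i,\bv_i)$; consequently its aggregate effect on the quartic sum is governed by a second-moment (variance) computation rather than by crude magnitude bounds, and that computation---using independence across units and Assumption~\ref{assum:moment}$(ii)$---drives the accumulated contribution to $o_p(1)$.

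I expect the main obstacle to be exactly this combinatorial variance accounting for the quartic statistic. Compared with the quadratic $tr(\hat\R^2)$ of Theorem~\ref{trR2:dynamic}, the fourth power generates many more mixed products of correction terms, and bounding the resulting eighth-order sums of error products uniformly is precisely why Assumption~\ref{assum:moment}$(ii)$ is imposed rather than the sixth-moment condition used for $RLM$. The cleanest route is to isolate a single residual-replacement lemma whose hypotheses are the uniform estimator bound and the design regularity described above, proved once for the quartic statistic; the dynamic case then reduces to verifying those hypotheses from Assumption~\ref{assum:se}, while the moment accounting is identical to that of Theorem~\ref{trR4:hete}.
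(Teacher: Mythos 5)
Your proposal reproduces the architecture of the paper's proof, and your closing paragraph is in fact exactly what the paper does: the quartic replacement $tr(\hat{\R}^4)=tr(\R^4)+o_p(1)$ (Lemma \ref{op1:hete:4}) is proved once, with hypotheses consisting solely of the maximal estimates in Lemma \ref{lem:estimates}; since parts (c)--(f) of that lemma involve the errors only, the dynamic model requires re-verifying just parts (a)--(b), i.e.\ the bounds on $\max_{i,j}|\inn{\bv_i}{\hat{\hat{\bw}}_j}|$ and $\max_{i,j}|\inn{\hat{\hat{\bw}}_i}{\hat{\hat{\bw}}_j}|$, which the paper carries out in Lemma \ref{lem:estimates:dynamic} in the way you envisage: a martingale CLT for $\sum_t y_{i,t-1}v_{it}$ under the stationarity/ergodicity of Assumption \ref{assum:se}(i), together with Assumption \ref{assum:se}(ii) to get $\max_i\|\Y_i\|=O_p(\sqrt{T})$. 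Your Stage 1 (the RMT central limit theorem for $tr(\R^4)$ built from the true errors, untouched by the dynamics) is likewise the paper's Lemma \ref{clt:hete:4}.

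Two deviations need to be flagged. First, your claim that $\sqrt{T}(\hat{\pphi}_i-\pphi_i)=O_p(1)$ \emph{uniformly in} $i$ is false under the reading your sum-control requires, namely a bound on the maximum over the $n$ units: the maximum of $n$ independent, nondegenerate $\sqrt{T}$-consistent estimators diverges, and under the available moment conditions it grows polynomially. The paper instead works with $\max_i\|\hat{\pphi}_i-\pphi_i\|=O_p\bigl(n^{\frac{1}{2r_1}+\epsilon_1-\frac12}\bigr)$, obtained from the Berry--Esseen order-statistics bound (Lemma \ref{lem:max_order}); this is repairable, since inserting the correct polynomial rate still drives every downstream term to $o_p(1)$, but the step as you state it would not survive. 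Second, your mechanism for the cross terms --- conditional mean zero of $\bv_i'\PP_{z,i}\bv_j$ given $(\Z_i,\bv_i)$ followed by a combinatorial variance computation over quartic index tuples --- is not what the paper does, and it is precisely the piece you acknowledge as the main obstacle without executing it. The paper never computes covariances across overlapping tuples: it combines the Cauchy--Schwarz/$T^{\alpha}$ splitting, the bound $\max_{i<j}|\rho_{ij}|=O_p\bigl(\sqrt{\log n/n}\bigr)$ (Lemma \ref{lem:max_rho}), and the uniform magnitude bounds of Lemma \ref{lem:estimates}, so the quartic bookkeeping is literally identical in the static and dynamic cases once (a)--(b) are re-established. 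Since you ultimately fall back on ``moment accounting identical to that of Theorem \ref{trR4:hete},'' your route coincides with the paper's; the variance-accounting digression is unnecessary, and the one quantitative claim you make about the estimator must be weakened to the correct maximal rate for the argument to go through.
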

Under Assumption \ref{assum:se}, the proofs of Theorems \ref{trR2:dynamic} and \ref{trR4:dynamic} follow along the same lines as that of static panel data model. See the Appendix.

\section{Fixed effects panel data model} \label{sec:fixed}
In this section, we establish the asymptotic normality of the $RLM$ and $RLM_{PE}$ tests in a fixed effects panel data model. We find that as long as the coefficient estimator is $\sqrt{T}$-consistent, the proposed tests still have standard normal limiting distribution under the null. To allow for weakly exogenous regressors, various consistent estimators have been proposed in the literature including \cite{chudik2015common}, \cite{chudik2018half} etc. However, these estimators require stronger assumptions than the static panel data
model. For simplicity of illustration, we focus on residuals obtained by the within estimator. The strictly exogenous assumption is then necessary for the consistency of the within estimator. One can relax this assumption to the weakly exogenous one by applying a $\sqrt{T}$-consistent estimator.

Consider a fixed effects panel data model:
\begin{equation} \label{model:fixed}
    y_{it}=\x_{it}'\bbeta + \mu_i +v_{it},
\end{equation}
for $i=1,\dots,n; t=1,\dots,T$, where $\mu_i$ denotes the time-invariant individual effect. The within estimator in (\ref{model:fixed}) is specified by 
\begin{equation} 
\hat{\bbeta}=\left(\sum_{t=1}^T\sum_{i=1}^n\tilde{\x}_{it}\tilde{\x}_{it}'\right)^{-1}\left(\sum_{t=1}^T\sum_{i=1}^n\tilde{\x}_{it}\tilde{y}_{it}\right),
\end{equation}
where $\tilde{\x}_{it}=\x_{it}-\frac{1}{T}\sum_{t=1}^T\x_{it}$ and $\tilde{y}_{it}=y_{it}-\frac{1}{T}\sum_{t=1}^Ty_{it}$. 
\begin{assumption} \label{assum:design2} The regressors, $\x_{it}$, satisfy
  \begin{enumerate}
      \item[(i)](strictly exogenous) $E(v_{it}| \x_{iT}, \cdots, \x_{i1})=0$  and  $E(v_{jt}| \x_{iT}, \cdots, \x_{i1})=0$ for all $i,j$ and $t$.
      \item[(ii)] For the demeaned regressors $\tilde{\x}_{it}$, $\frac{1}{T}\sum_{t=1}^T\tilde{\x}_{it}$ and $\frac{1}{T}\sum_{t=1}^T\tilde{\x}_{it}\tilde{\x}_{jt}'$ are stochastic bounded for all $i,j$. Besides, $\lim_{(n,T)\rightarrow \infty}\frac{1}{nT}\sum_{i=1}^n\sum_{t=1}^T\tilde{\x}_{it}\tilde{\x}_{it}'$ exists and is nonsingular.
  \end{enumerate}
\end{assumption}
Under the Assumptions \ref{assum:siml}, \ref{assum:disturb}, \ref{assum:moment} and \ref{assum:design2}, $\hat{\bbeta}$ is $\sqrt{nT}-$consistent. We establish the validity of our proposed tests in the following theorems.

\begin{theorem} \label{trR2:fixed}
Under Assumptions \ref{assum:siml}, \ref{assum:disturb}, \ref{assum:moment}$(i)$ and \ref{assum:design2},
$$RLM\cvd N(0,1).$$
\end{theorem}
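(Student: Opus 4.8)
The plan is to follow the same two-stage template as Theorem~\ref{trR2:hete}: first reduce $tr(\hat{\R}^2)$ to the trace of the correlation matrix of the \emph{true} within-transformed errors, and then invoke Lemma~1 and Lemma~2 to obtain the central limit theorem with centering $\mu_0=n+\frac{n^2}{T-1}-c_T$ and scaling $\sigma_0^2=4c_T^2$. The crucial observation is that the within transformation $\tilde{\x}_{it}=\x_{it}-\frac1T\sum_s\x_{is}$, $\tilde{v}_{it}=v_{it}-\frac1T\sum_s v_{is}$ removes a unit-specific mean, which is exactly the operation already performed by the intercept in the static heterogeneous model~(\ref{model:hete}); hence the correlation matrix $\R^{*}=(\rho_{ij}^{*})$ formed from the demeaned errors $\tilde{v}_{it}$ is precisely the object to which Lemma~1 applies, and it already produces the degrees-of-freedom factor $T-1$ appearing in $\mu_0$. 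Because the $k$ common slope parameters are estimated at the fast pooled rate, they do not alter the leading bias, which is governed solely by the time-demeaning. The only genuinely new ingredient is the \emph{common} estimator $\hat{\bbeta}$, shared by every cross-sectional unit; I must show that replacing $\tilde{v}_{it}$ by the residual $\hat{v}_{it}=\tilde{v}_{it}-\tilde{\x}_{it}'(\hat{\bbeta}-\bbeta)$ perturbs $tr(\hat{\R}^2)$ only by $o_p(\sigma_0)=o_p(c_T)$.

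To this end I would set $\boldsymbol{\delta}=\hat{\bbeta}-\bbeta$ and expand the residual cross-products, for $i\neq j$,
\begin{equation*}
  \sum_{t}\hat{v}_{it}\hat{v}_{jt}=\sum_{t}\tilde{v}_{it}\tilde{v}_{jt}-\boldsymbol{\delta}'\sum_{t}\bigl(\tilde{\x}_{it}\tilde{v}_{jt}+\tilde{\x}_{jt}\tilde{v}_{it}\bigr)+\boldsymbol{\delta}'\Bigl(\sum_{t}\tilde{\x}_{it}\tilde{\x}_{jt}'\Bigr)\boldsymbol{\delta},
\end{equation*}
and treat the normalizing sums $\sum_t\hat{v}_{it}^2$ in the same way. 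Using the stated $\sqrt{nT}$-consistency of the within estimator, $\|\boldsymbol{\delta}\|=O_p\bigl((nT)^{-1/2}\bigr)$, the strict exogeneity in Assumption~\ref{assum:design2}(i) (so that $\sum_t\tilde{\x}_{it}\tilde{v}_{jt}=O_p(\sqrt{T})$ with mean zero), and the boundedness of $\frac1T\sum_t\tilde{\x}_{it}\tilde{\x}_{jt}'$ from Assumption~\ref{assum:design2}(ii), the second and third terms are $O_p(n^{-1/2})$ and $O_p(n^{-1})$ respectively, while the leading term is $O_p(\sqrt{T})$. Writing $\hat{\rho}_{ij}=\rho_{ij}^{*}+\varepsilon_{ij}$, the entrywise perturbation is $\varepsilon_{ij}=O_p\bigl((T\sqrt{n})^{-1}\bigr)$, and factoring out $\|\boldsymbol{\delta}\|^2$ in a direct second-moment bound gives $\sum_{i\neq j}\varepsilon_{ij}^2=O_p(n/T^2)=o_p(c_T)$, since $n/T^2=c_T/T\to0$ relative to $c_T$.

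The main obstacle is the cross term $2\sum_{i\neq j}\rho_{ij}^{*}\varepsilon_{ij}$ in the expansion of $tr(\hat{\R}^2)-tr(\R^{*2})$. A crude Cauchy--Schwarz bound yields only
\begin{equation*}
  \Bigl|\sum_{i\neq j}\rho_{ij}^{*}\varepsilon_{ij}\Bigr|\le\Bigl(\sum_{i\neq j}(\rho_{ij}^{*})^2\Bigr)^{1/2}\Bigl(\sum_{i\neq j}\varepsilon_{ij}^2\Bigr)^{1/2}=O_p\bigl(c_T^{3/2}\bigr)=O_p(1),
\end{equation*}
which is not negligible, so I must exploit cancellation rather than magnitude. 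Substituting the closed form $\boldsymbol{\delta}=\bigl(\sum_{i,t}\tilde{\x}_{it}\tilde{\x}_{it}'\bigr)^{-1}\sum_{i,t}\tilde{\x}_{it}\tilde{v}_{it}$ together with $\rho_{ij}^{*}\approx (T\sigma_i\sigma_j)^{-1}\sum_t\tilde{v}_{it}\tilde{v}_{jt}$, the cross term becomes a multilinear form in the errors whose mean and variance I would bound directly. Because the $\{v_{it}\}$ are independent across $i$ (Assumption~\ref{assum:disturb}) and the regressors are strictly exogenous, the overwhelming majority of index combinations have zero mean, and the surviving terms carry the extra factor $(nT)^{-1/2}$ supplied by the pooled estimator. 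This additional $\sqrt{n}$---absent in the static heterogeneous case where each $\hat{\bbeta}_i$ converges only at rate $\sqrt{T}$---is exactly what renders the cross term $o_p(1)$. Once $tr(\hat{\R}^2)=tr(\R^{*2})+o_p(c_T)$ is established, Lemma~1 applied to $\R^{*}$ and Lemma~2 deliver $\bigl(tr(\R^{*2})-\mu_0\bigr)/\sigma_0\cvd N(0,1)$, and Slutsky's theorem completes the proof.
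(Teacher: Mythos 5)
Your proposal is correct in substance, but it takes a genuinely different route from the paper at the decisive step. The paper never performs an entrywise perturbation analysis for the fixed-effects model: having observed that the entire proof of Lemma~\ref{op1:hete} consumes only the uniform max-norm estimates of Lemma~\ref{lem:estimates}, it proves Theorem~\ref{trR2:fixed} by re-verifying just parts (a) and (b) for the within-transformed quantities, namely $\max_{i,j}|\inn{\tilde{\bv}_i}{\tilde{\bw}_j}|$ and $\max_{i,j}|\inn{\tilde{\bw}_i}{\tilde{\bw}_j}|$ with $\tilde{\bw}_i$ built from the demeaned regressors and the within estimator (splitting $\tilde{\bv}_i=\bv_i-\bar{\bv}_i$, $\tilde{\X}_i=\X_i-\bar{\X}_i$ into four pieces), and then recycling the ratio Cauchy--Schwarz/$T^{\alpha_1}$ machinery of Lemma~\ref{op1:hete} and the CLT of Lemma~\ref{clt:hete} unchanged. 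You instead expand $\hat{\rho}_{ij}=\rho^{*}_{ij}+\varepsilon_{ij}$ additively and control $\sum_{i\neq j}\varepsilon_{ij}^2$ and the cross term by moment calculations that lean on the pooled rate $\|\boldsymbol{\delta}\|=O_p\bigl((nT)^{-1/2}\bigr)$. This does work: writing the cross term as $\boldsymbol{\delta}'\mathbf{a}$, where $\mathbf{a}$ collects the weighted score-type vectors $T^{-1}\sum_t(\tilde{\x}_{it}\tilde{v}_{jt}+\tilde{\x}_{jt}\tilde{v}_{it})$, independence across units and strict exogeneity give $E\|\mathbf{a}\|^2=O(1)$ (the $n^2$ diagonal pairs contribute $O(T^{-2})$ each, the $O(n^3)$ pairs sharing one index $O(T^{-3})$ each), so $|\boldsymbol{\delta}'\mathbf{a}|\le\|\boldsymbol{\delta}\|\,\|\mathbf{a}\|=O_p\bigl((nT)^{-1/2}\bigr)$, and the cancellation you anticipate materializes. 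Your diagnosis of the mechanism is also exactly right: the extra $\sqrt{n}$ from pooling is what makes the crude additive argument viable --- indeed the paper's own fixed-effects lemma obtains bounds ($O_p(n^{\frac{1}{2r_1}+\epsilon_1-\frac12})$ and $O_p(n^{-1})$) far stronger than the $O_p(n^{\frac{1}{r_1}+2\epsilon_1})$ it needs, for the same reason --- whereas for the heterogeneous model, where each $\hat{\bbeta}_i$ is only $\sqrt{T}$-consistent, the additive route would fail (the residual effect on $tr(\hat{\R}^2)$ is $O_p(1)$ and vanishes only through exact cancellations), which is precisely why the paper routes everything through Lemma~\ref{op1:hete}. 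The trade-off: the paper's reduction is mechanical and reusable across the static, dynamic and fixed-effects models, with only two max-norm bounds ever needing rechecking; yours is self-contained, avoids dividing by $\rho_{ij}^2$, and makes transparent why the fixed-effects case is the easy one. Two repairs are needed before your plan is a proof: the mean/variance computation for the cross term must actually be written out (it is the only hard step, and your text only sketches it), and the closing appeal to Lemma~\ref{op1:hete} is out of place --- once $tr(\hat{\R}^2)=tr(\R^{*2})+o_p(1)$ is established, what you need is Lemma~\ref{clt:hete} applied to the demeaned-error correlation matrix $\R^{*}$ (whose $T-1$ centering, as you note, is the one the lemma delivers), followed by Slutsky's theorem.
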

\begin{theorem} \label{trR4:fixed}
Under Assumptions \ref{assum:siml}, \ref{assum:disturb}, \ref{assum:moment}$(i)$ and \ref{assum:design2},
$$RLM_{PE}\cvd N(0,1).$$
\end{theorem}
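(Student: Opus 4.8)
The plan is to reuse the two-stage architecture behind the proof of Theorem \ref{trR4:hete}, exploiting the fact that a sample correlation matrix of independent-across-units data is the same object in the static, dynamic and fixed effects settings. The first stage is the Random Matrix Theory central limit theorem for the quartic linear spectral statistic $tr(\cdot)^4$ of such a matrix under Assumption \ref{assum:siml}; it supplies the Marchenko--Pastur type centering $\mu_{PE}$ and the scaling $\sigma_{PE}$, and I would import it essentially verbatim from the proof of Theorem \ref{trR4:hete}. Everything genuinely new lies in the second stage: showing that forming the statistic from the within residuals $\hat{v}_{it}$ rather than the unobserved errors perturbs $tr(\hat{\R}^4)$ only negligibly. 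Since $\sigma_{PE}$ tends to a strictly positive constant, the target reduces to proving $tr(\hat{\R}^4)-tr((\R^\ast)^4)=o_p(1)$ for a suitable intermediate matrix $\R^\ast$ introduced next.

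First I would use the exact identity $\hat{v}_{it}=\tilde{v}_{it}-\tilde{\x}_{it}'(\hat{\bbeta}-\bbeta)$, where $\tilde{v}_{it}=v_{it}-\frac1T\sum_{s=1}^{T}v_{is}$ is the time-demeaned error. The structural feature that sets this case apart from the heterogeneous model is that the demeaning is a \emph{common} rank-one projection $\MM_0=\mathbf{I}_T-\frac1T\mathbf{1}\mathbf{1}'$, identical for every unit; consequently the demeaned error vectors $\tilde{\bv}_i=\MM_0\bv_i$ stay mutually independent across $i$ under $H_0$ and form an i.i.d.-across-units sample whose effective time dimension is $T-1$. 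Letting $\R^\ast$ be the sample correlation matrix built from the $\tilde{\bv}_i$, I would verify that the Stage-one RMT computation applies to it with $T$ replaced by $T-1$; the $(T-1)$ denominators that appear throughout $\mu_{PE}$ are precisely the fingerprint of this single degree of freedom lost to demeaning, so $tr((\R^\ast)^4)$ inherits the stated centering and scaling.

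The decisive step is then to show that the regression adjustment is asymptotically invisible. Substituting the decomposition into both $\sum_t\hat{v}_{it}\hat{v}_{jt}$ and the normalizers $\sum_t\hat{v}_{it}^2$, the correction enters only through $\delta=\hat{\bbeta}-\bbeta$, which by the $\sqrt{nT}$-consistency guaranteed under Assumption \ref{assum:design2} and strict exogeneity satisfies $\delta=O_p((nT)^{-1/2})$, a rate faster by a factor $\sqrt{n}$ than the $\sqrt{T}$-consistency available in the heterogeneous model. Bounding the design quantities $\frac1T\sum_t\tilde{\x}_{it}\tilde{\x}_{jt}'$ and the mixed sums $\frac1T\sum_t\tilde{\x}_{it}\tilde{v}_{jt}$ uniformly in $i,j$ through Assumption \ref{assum:design2}(ii), each $\hat{\rho}_{ij}$ then differs from its $\R^\ast$ counterpart $\rho^\ast_{ij}$ by a term of order $O_p((nT)^{-1/2})$. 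Expanding $tr(\hat{\R}^4)$ over quadruples of index pairs and grouping the terms by the number of factors carrying the $\delta$-correction, I would argue that each group is $o_p(1)$ after summation.

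The main obstacle I anticipate is exactly this final expansion. Because $tr(\hat{\R}^4)$ is a quartic functional of the sample correlations, the perturbation produces a proliferation of mixed cross-terms summed over $O(n^4)$ index tuples, and one must check uniformly in $i,j$ both that the denominators $(\sum_t\hat{v}_{it}^2)^{1/2}$ stay bounded away from zero and that no cross-term is inflated by the sheer number of summands. The saving grace is the extra $\sqrt{n}$ provided by the $\sqrt{nT}$ rate, which must be shown to absorb the two additional correlation factors present in $tr(\hat{\R}^4)$ relative to the $tr(\hat{\R}^2)$ already handled in Theorem \ref{trR2:fixed}; combined with the strict exogeneity and the common demeaning projection, which keep $\R^\ast$ far cleaner than the per-unit projected residuals of the weakly-exogenous heterogeneous model, this is what I expect to let the milder sixth-moment Assumption \ref{assum:moment}$(i)$ suffice here even for the fourth-power statistic.
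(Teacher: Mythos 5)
Your proposed architecture is the same as the paper's. The paper's proof of this theorem consists precisely of (i) the decomposition $\hat v_{it}=\tilde v_{it}-\tilde{\x}_{it}'(\hat\bbeta-\bbeta)$, (ii) verifying that the inner-product estimates of Lemma \ref{lem:estimates}(a)--(b) still hold with $\tilde{\bv}_i,\tilde{\bw}_i$ in place of $\bv_i,\bw_i$ — i.e.\ $\max_{i,j}|\inn{\tilde{\bv}_i}{\tilde{\bw}_j}|$ and $\max_{i,j}|\inn{\tilde{\bw}_i}{\tilde{\bw}_j}|$ are $O_p(n^{1/r_1+2\epsilon_1})$, proven by splitting each vector into its raw and time-mean parts and using $\|\hat\bbeta-\bbeta\|=O_p((nT)^{-1/2})$ — and then (iii) rerunning the expansion machinery of Lemma \ref{op1:hete:4} verbatim and concluding with the CLT of Lemma \ref{clt:hete:4}. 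Your intermediate matrix $\R^\ast$ is implicit in the paper, and your stage-one discussion is in fact more explicit than the paper's (the cleanest justification is that demeaning is idempotent, so $\R^\ast$ coincides with the Pearson correlation matrix of the raw errors already covered by Lemma \ref{clt:hete:4}; no separate ``$T-1$'' computation is needed).

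The genuine gap is in your stage two. Your only quantitative tool there is the uniform perturbation bound $\hat\rho_{ij}-\rho^\ast_{ij}=O_p((nT)^{-1/2})$ followed by counting over index quadruples, and that cannot close the argument: for the group with a single corrected factor, uniform bounding gives at best
\[
\Big|\sum_{i\ne j\ne l\ne s}(\hat\rho_{ij}-\rho^\ast_{ij})\,\rho^\ast_{jl}\rho^\ast_{ls}\rho^\ast_{si}\Big|
\le n^4\cdot O_p\big((nT)^{-1/2}\big)\cdot O_p\Big(\big(\tfrac{\log n}{n}\big)^{3/2}\Big)
= O_p\big(n^{3/2}(\log n)^{3/2}\big),
\]
which diverges under Assumption \ref{assum:siml} (and taking the maximum over the $n^2$ pairs in fact costs a further factor); even the sharper matrix-norm route $|tr(\boldsymbol{\Delta}(\R^\ast)^3)|\le\|\boldsymbol{\Delta}\|_F\|(\R^\ast)^3\|_F$ only reaches $O_p(1)$, not $o_p(1)$. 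This is exactly why the paper never works with per-correlation perturbations: in Lemma \ref{op1:hete:4} each group is first split by Cauchy--Schwarz into a factor $\sqrt{T^{-\alpha}\sum\rho^4\cdots}$, made negligible by $\max_{i<j}|\rho_{ij}|=O_p(\sqrt{\log n/n})$ (Lemma \ref{lem:max_rho}), times a factor $\sqrt{T^{\alpha}\sum(\,\cdot\,)^2}$ whose numerators are expanded into monomials in $\inn{\bv_i}{\bv_j}$, $\inn{\bv_i}{\bw_j}$, $\inn{\bw_i}{\bw_j}$ and killed term by term using Lemma \ref{lem:estimates}. In the paper the ``extra $\sqrt n$'' from the within estimator is used only to make those inner-product inputs available (indeed with room to spare); it is not what absorbs the $O(n^4)$ summation, contrary to what your sketch relies on. You name the obstacle, but the resolution you propose is the step that fails. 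A second, smaller flaw: your closing claim that the sixth-moment Assumption \ref{assum:moment}(i) suffices for the quartic statistic because of the common demeaning projection is unsupported — the stage-one CLT you import verbatim (Lemma \ref{clt:hete:4}) and the quartic perturbation lemma (Lemma \ref{op1:hete:4}) are both established under the eighth-moment Assumption \ref{assum:moment}(ii), and nothing in the fixed-effects structure weakens the error-moment requirement; the theorem's stated ``(i)'' is inconsistent with the lemmas its own proof invokes, and rationalizing it rather than flagging it is an error.
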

The proofs of Theorems \ref{trR2:fixed} and \ref{trR4:fixed} are given in the Appendix.

\section{Monte Carlo simulations} \label{sec:simulation}
In this section, we conduct Monte Carlo simulations to examine the empirical sizes and powers of our $RLM$ and $RLM_{PE}$ tests, which are defined by (\ref{test:rlm}) and (\ref{test:rlmpe}), respectively, and compare their performances to that of the $CD_p$ test and the $LM_{adj}$ test defined by (\ref{test:cd_p}) and (\ref{test:lmadj}), respectively. We consider four data generating processes (DGPs): heterogeneous panel data model with either strictly or weakly exogenous regressors, fixed effects panel data model and pure dynamic panel data model.

Before looking at the simulation results, we consider the estimated rejection frequencies within range from 3.6\% to 6.5\% to provide evidence consistent with the robustness of the tests, following the arguments in \cite{halunga2017heteroskedasticity}. Besides, we don't include the $LM_{RMT}$ and the $LM_{bc}$ tests since they are almost identical to the $RLM$ test by (\ref{rlm-lmrmt})  and (\ref{lmbc=rlm}). 

\subsection{Monte Carlo design}
\subsubsection{DGP1: Heterogeneous panel data model with strictly exogenous regressors}
We first consider the DGP used in \cite{pesaran2008bias}, which is specified by
\begin{equation*}
    y_{it}=\alpha_i+\sum_{l=2}^kx_{lit}\beta_{li}+v_{it}, \;\;\;\; i=1,2,\dots, n; \;\;\; t=1,2,\dots, T,
\end{equation*}
where $\alpha_i\sim IIDN(1,1)$, $\beta_{li}\sim IIDN(1,0.04)$. The regressors are generated as
\begin{equation*}
    x_{lit}=0.6x_{lit-1}+u_{lit}, \;\;\;\; i=1,2,\dots, n; \;\;\; t=-50,\dots,0,\dots, T; \; l=2,\dots, k
\end{equation*}
with $x_{li,-51}=0$ where  $u_{lit}\sim IIDN(0, \tau_{li}^2/(1-0.6^2))$, $\tau_{li}^2\sim IID \chi^2(6)/6$.  The first 50 observations are discarded to lessen the effects of initial values. Now we generate the disturbances under the null $H_0$ as $v_{it}=\sigma_i\epsilon_{it}$, where $\sigma_i\sim\chi^2(2)/2 $ and $\{\epsilon_{it}\}$ are generated from three different distributions: (i) normal, $N(0,1)$, (ii) chi-squared, $(\chi^2(5)-5)/\sqrt{10}$ and (iii) student-t, $t_{10}/\sqrt{10/8}$. The normalizations in (ii) and (iii) are such that errors have mean one and variance one. To investigate the effects of  the number of regressors, $k=2, 4$ are considered. 

To examine the powers of the proposed tests, the disturbances are generated by a factor model as following:
\begin{equation}
    v_{it}=\lambda_if_t+\epsilon_{it},
\end{equation}
where $f_t (t=1,\dots,T)$ are the factors with $f_t\sim IIDN(0,1)$ and $\lambda_i (i=1,\dots,n)$ are the loadings. We consider the following three cases of loading construction:

\begin{enumerate}
    \item[(1)] Dense case. $\lambda_i\sim IIDU(-b,b)$, for $i=1,\dots,n$, where $b=\sqrt{3h/n}$ and $h=3.$
    \item[(2)] Sparse case.  $\lambda_i\sim IIDU(0.5,1.5)$, for $i=1,\dots,[n^{0.3}]$, and $\lambda_i=0$, for $i=[n^{0.3}]+1,\dots,n$, where $[n^{0.3}]$ is the integer part of $n^{0.3}$.
    \item[(3)] Less-sparse case.  $\lambda_i\sim IIDU(0.5,1.5)$, for $i=1,\dots,[n^{0.5}]$, and $\lambda_i=0$, for $i=[n^{0.5}]+1,\dots,n$.
\end{enumerate}
In the dense case, $h$ measures the degree of cross-sectional dependence. The sparse case and the less-sparse case follow the design used in \cite{bailey2016exponent} to model the weak and strong cross-sectional dependence, respectively. The Monte Carlo experiments are conducted for $T=50, 100 ,200$, and three different choices of ratio $n/T=0.5, 1, 2$ basing on 2000 replications. To obtain the empirical size, the proposed $RLM$ test, $RLM_{PE}$ test and $LM_{adj}$ test are implemented at the one-sided 5\% nominal significance level, while $CD_P$ test is conducted at the two-sided 5\% nominal significance level.

\setlength{\tabcolsep}{1mm}{
\begin{table}[]
\renewcommand\arraystretch{0.8}
\scriptsize
    \centering
    
    \begin{tabular}{ccccccccccc}
    \hline
   \hline
   \multicolumn{11}{c}{$k=2$}\\
   \hline 
   & & \multicolumn{3}{c}{Chi-squared} & \multicolumn{3}{c}{Normal} & \multicolumn{3}{c}{Student-t}  \\
\cmidrule(r){3-5} \cmidrule(r){6-8} \cmidrule(r){9-11} 
&$(T,n)$ & (50,25) & (100,50) & (200,100)  
& (50,25) & (100,50) & (200,100)   
& (50,25) & (100,50) & (200,100)  \\  
\midrule
     \multirow{4}{*}{$\frac{n}{T}=\frac{1}{2}$ }    &$RLM$& 5.15  & 5.15  & 5.55 & 5.55   &4.65  & 5.35 & 4.9 & 5.25  &5.15    \\
         &$RLM_{PE}$&  5.4 &  5.7 & 5.6 &   5.2 &4.45  &5.8  & 4.8 &  5 & 5.55 \\
         &$LM_{adj}$& 5.85  & 5.35  & 5.85  &  6  & 4.7 & 5.45 &5.4  &  5.45 & 5.25 \\
         &$CD_{P}$&  4.75 & 4.45 & 4.65 &  4.85  &4.25  &5.25  &4.5  & 4.15  &  5.1 \\
\midrule
   & & \multicolumn{3}{c}{Chi-squared} & \multicolumn{3}{c}{Normal} & \multicolumn{3}{c}{Student-t}    \\
\cmidrule(r){3-5} \cmidrule(r){6-8} \cmidrule(r){9-11}  
&$(T,n)$ & (50,50) & (100,100) & (200,200)  
& (50,50) & (100,100) & (200,200)    
& (50,50) & (100,100) & (200,200)    \\  
\midrule
     \multirow{4}{*}{$\frac{n}{T}=1$ }    &$RLM$&  5.5 & 5.15 & 4.55  &  5.05  & 4.5 & 4.95 & 5.15 & 4.95  &  6  \\
         &$RLM_{PE}$ &  5.5 & 5.9 & 4.75 &  4.45  & 4.85 & 4.9 & 5.05 & 4.7  &5.6  \\
         &$LM_{adj}$& 5.7  & 5.25 & 4.6 &  5.1  & 4.6 & 5 & 5.35&  4.95 & 6   \\
         &$CD_{P}$ &  5.4 & 4.85 & 5.05  &  5.25  & 5.2 & 5.1 & 5.2&  4.75 &  5.2  \\
         
\midrule
  & & \multicolumn{3}{c}{Chi-squared} & \multicolumn{3}{c}{Normal} & \multicolumn{3}{c}{Student-t}  \\
\cmidrule(r){3-5} \cmidrule(r){6-8} \cmidrule(r){9-11} 
&$(T,n)$ & (50,100) & (100,200) & (200,400)  
& (50,100) & (100,200) & (200,400)    
& (50,100) & (100,200) & (200,400)     \\  
\midrule
     \multirow{4}{*}{$\frac{n}{T}=2$ }    &$RLM$&  5.8 & 5.3 &6.2  &  5.1  &5.45  &5  &5.4  & 5.6  & 5.5  \\
         &$RLM_{PE}$ &  5.5 & 5 & 5.55 &   4.95 &5.55  &4.65  & 5.4 & 5.85  & 5.35\\
         &$LM_{adj}$& 5.7  & 5.3  & 6.2 &  5.05  &5.45  &5  & 5.2 &5.5   & 5.5 \\
         &$CD_{P}$ & 4.05  & 4.75 & 5.7 &  5  & 4.85 & 5.4 & 4.85 & 5.05  &4.9 \\
         
          \hline
   \hline
     \multicolumn{11}{c}{$k=4$}\\
   \hline 
    & & \multicolumn{3}{c}{Chi-squared} & \multicolumn{3}{c}{Normal} & \multicolumn{3}{c}{Student-t} \\
\cmidrule(r){3-5} \cmidrule(r){6-8} \cmidrule(r){9-11} 
&$(T,n)$ & (50,25) & (100,50) & (200,100)  
& (50,25) & (100,50) & (200,100)   
& (50,25) & (100,50) & (200,100)  \\  
\midrule
     \multirow{4}{*}{$\frac{n}{T}=\frac{1}{2}$ }    &$RLM$&  5.65 & 5.05 & 5.65 & 5.5   &5.25  & 5.4 & 5.3 & 5.4  &5.3     \\
         &$RLM_{PE}$&  5.4 & 5.3 &  5.5     &   5.35 & 5.2 &5  & 5.4 &  5.65 & 5.45 \\
         &$LM_{adj}$&  5.75 & 5.05    &  5.75   &   5.6 &5.25  &  5.4&5.4  & 5.45  &  5.3 \\
         &$CD_{P}$& 4.95  & 5.1  &  4.55 &   5.45 & 5.3 &5.1  & 4.85 &  5.35 &  5.2  \\
\midrule
  & & \multicolumn{3}{c}{Chi-squared} & \multicolumn{3}{c}{Normal} & \multicolumn{3}{c}{Student-t} \\
\cmidrule(r){3-5} \cmidrule(r){6-8} \cmidrule(r){9-11}  
&$(T,n)$ & (50,50) & (100,100) & (200,200)  
& (50,50) & (100,100) & (200,200)    
& (50,50) & (100,100) & (200,200)    \\  
\midrule
     \multirow{4}{*}{$\frac{n}{T}=1$ }    &$RLM$&  7.3 & 5 & 5.25 &  6.45  &6  &5.35  &6.4  & 6.25 & 5.3  \\
         &$RLM_{PE}$&  7.1& 5.35 &  5.05 &   5.75 & 5.7 & 5.35 & 6.2& 5.5  &5.1  \\
         &$LM_{adj}$& 6.7  & 4.55 &  5.2 &  5.75  &5.6  & 5.2 & 5.6 &   5.7& 5.1   \\
         &$CD_{P}$ & 5.55  & 4.75 & 5 &  4.35  &4.7  &5.25  & 4.3 & 4.4  &  5.35  \\
         
\midrule
  & & \multicolumn{3}{c}{Chi-squared} & \multicolumn{3}{c}{Normal} & \multicolumn{3}{c}{Student-t}  \\
\cmidrule(r){3-5} \cmidrule(r){6-8} \cmidrule(r){9-11} 
&$(T,n)$ & (50,100) & (100,200) & (200,400)  
& (50,100) & (100,200) & (200,400)    
& (50,100) & (100,200) & (200,400)     \\  
\midrule
     \multirow{4}{*}{$\frac{n}{T}=2$ }    &$RLM$& 6.75  & 5.85 &  6.25 &  7.65  &5.9  &5.55  & 7.55 &  5.75 & 4.95  \\
         &$RLM_{PE}$ &  6.25 & 6.15& 5.55 &  7.4  & 5.75 & 5.55 &7.25  &5.75   & 4.95 \\
         &$LM_{adj}$&  5 &5  & 5.55 &  5.8  & 4.75 &  5.1& 5.65 & 4.9  &    4.65\\
         &$CD_{P}$& 5.9  & 5.7  & 4.85 & 5.15   & 5.3 & 5.5 &4.45  & 5.5  & 4.95\\
\hline\hline

    \end{tabular}
    \caption{Empirical size of tests in DGP1}
    \label{tab:DGP1}
    
\end{table}}

\setlength{\tabcolsep}{1mm}{
\begin{table}[]
\renewcommand\arraystretch{0.8}
\scriptsize
    \centering
    \begin{tabular}{ccccccccccc}
    \hline
   \hline
   \multicolumn{11}{c}{$k=2,h=3$}\\
   \hline 
   & & \multicolumn{3}{c}{Chi-squared} & \multicolumn{3}{c}{Normal} & \multicolumn{3}{c}{Student-t}  \\
\cmidrule(r){3-5} \cmidrule(r){6-8} \cmidrule(r){9-11} 
&$(T,n)$ & (50,25) & (100,50) & (200,100)  
& (50,25) & (100,50) & (200,100)   
& (50,25) & (100,50) & (200,100)  \\  
\midrule
     \multirow{4}{*}{$\frac{n}{T}=\frac{1}{2}$ }    &$RLM$& 91.45  & 98.9   & 99.85 & 99.55   &98.75  & 100  & 99.6 & 99.75  & 99.8    \\
         &$RLM_{PE}$ & 95.95  & 99.85  & 100  & 99.85 & 99.75  & 100  & 99.85  & 100   & 99.95 \\
         &$LM_{adj}$&92.3   & 98.95  & 99.85  & 99.55   & 98.8 & 100 & 99.6  & 99.8  & 99.8 \\
         &$CD_{P}$& 5.15  & 4.9  & 4.45  &5.29  & 5.05     & 5.1  & 5.35  & 4.75  & 4.7   \\
\midrule
   & & \multicolumn{3}{c}{Chi-squared} & \multicolumn{3}{c}{Normal} & \multicolumn{3}{c}{Student-t}   \\
\cmidrule(r){3-5} \cmidrule(r){6-8} \cmidrule(r){9-11}  
&$(T,n)$ & (50,50) & (100,100) & (200,200)  
& (50,50) & (100,100) & (200,200)    
& (50,50) & (100,100) & (200,200)    \\  
\midrule
     \multirow{4}{*}{$\frac{n}{T}=1$ }    &$RLM$&75.45  &91.45  &  96.7 & 81.25  & 80.55  & 95.45  & 79.95 & 86.1  &   98.1  \\
         &$RLM_{PE}$ & 84.2   & 97.4   &  99.9 & 89.6  & 93.1   &  99.85  & 89.65 & 95.2  & 100  \\
         &$LM_{adj}$& 75.95   & 91.6   &  96.8  & 81.85 &80.8   &    95.45  &80.4 & 86.25  &   98.1 \\
         &$CD_{P}$ & 5.05 & 11  &  5.2  & 4.9  & 4.95  &4.85   & 4.3   & 4.7  &4.35   \\
         
\midrule
  & & \multicolumn{3}{c}{Chi-squared} & \multicolumn{3}{c}{Normal} & \multicolumn{3}{c}{Student-t}  \\
\cmidrule(r){3-5} \cmidrule(r){6-8} \cmidrule(r){9-11} 
&$(T,n)$ & (50,100) & (100,200) & (200,400)  
& (50,100) & (100,200) & (200,400)    
& (50,100) & (100,200) & (200,400)     \\  
\midrule
     \multirow{4}{*}{$\frac{n}{T}=2$ }    &$RLM$& 60.4  & 59.8   & 74.75 & 43   & 60.25  &  72.9  & 62.7  &  55.55  &   62.6  \\
         &$RLM_{PE}$ & 73.85   & 78.55  &  91.35 & 55.35  &78.35  &  91.75 & 75.9   & 73.85   & 84.7\\
         &$LM_{adj}$& 60.2 & 59.8  &74.75   & 42.8  &60.2  &  72.9 & 62.6 & 55.5  & 62.45  \\
         &$CD_{P}$ & 7.2   & 4.75  &  4.7    & 4.65  & 4.6   & 5.4  & 4.9   &5.6  &5.05  \\
         
          \hline
   \hline
  \multicolumn{11}{c}{$k=4,h=3$}\\
   \hline 
    & & \multicolumn{3}{c}{Chi-squared} & \multicolumn{3}{c}{Normal} & \multicolumn{3}{c}{Student-t}  \\
\cmidrule(r){3-5} \cmidrule(r){6-8} \cmidrule(r){9-11} 
&$(T,n)$ & (50,25) & (100,50) & (200,100)  
& (50,25) & (100,50) & (200,100)   
& (50,25) & (100,50) & (200,100)  \\  
\midrule
     \multirow{4}{*}{$\frac{n}{T}=\frac{1}{2}$ }    &$RLM$&90.1  & 99.2   &99.55  & 98.15 & 98.9 & 99.95   & 95.1  & 99.4 & 99.9     \\
         &$RLM_{PE}$ & 94.9   & 99.85  & 100  & 99.4 & 99.85 & 100  &97.55   & 99.85  & 100  \\
         &$LM_{adj}$& 90.2  & 99.2  & 99.55  & 98.15  & 98.95& 99.95 & 95.15  & 99.4   & 99.9  \\
         &$CD_{P}$& 5.2  & 7.65   & 4.4  & 31.65  & 4.85  & 4.85  & 4.7  & 5.55  & 4.8   \\
\midrule
  & & \multicolumn{3}{c}{Chi-squared} & \multicolumn{3}{c}{Normal} & \multicolumn{3}{c}{Student-t}  \\
\cmidrule(r){3-5} \cmidrule(r){6-8} \cmidrule(r){9-11}  
&$(T,n)$ & (50,50) & (100,100) & (200,200)  
& (50,50) & (100,100) & (200,200)    
& (50,50) & (100,100) & (200,200)    \\  
\midrule
     \multirow{4}{*}{$\frac{n}{T}=1$ }    &$RLM$&80.7  & 91.9  & 96.25   & 79.2   & 87.3 & 95.65 & 78.3 &91.85  &   96.55  \\
         &$RLM_{PE}$ & 88.45   & 98.1   & 99.65  & 88.85     & 95.85   &99.55  & 87.6  & 97.95 & 99.65 \\
         &$LM_{adj}$& 79.55 & 91.6  & 96.2 &78.5  & 87  &  95.5  & 76.8 & 91.4  &  96.55  \\
         &$CD_{P}$ & 4.1  & 5.2  &  4.6  & 4.05   & 5 &  4.35 & 4  & 5.2 & 5.35   \\
         
\midrule
  & & \multicolumn{3}{c}{Chi-squared} & \multicolumn{3}{c}{Normal} & \multicolumn{3}{c}{Student-t}  \\
\cmidrule(r){3-5} \cmidrule(r){6-8} \cmidrule(r){9-11} 
&$(T,n)$ & (50,100) & (100,200) & (200,400)  
& (50,100) & (100,200) & (200,400)    
& (50,100) & (100,200) & (200,400)     \\  
\midrule
     \multirow{4}{*}{$\frac{n}{T}=2$ }    &$RLM$&48.15  & 70.95   &    70.15& 49.15  & 60.8  &   68.6 & 47.4   & 60.1 & 65.7 \\
         &$RLM_{PE}$ & 60.1  & 87.5  &  91.05  & 60.75  & 78.6 & 89.1  &58.65 & 78.25  &  86.3 \\
         &$LM_{adj}$&43.35  & 68.3  &  68.75  & 43.65  & 58.25   &  67.1 & 42.65  &56.75 &63.85\\
         &$CD_{P}$ &5.25   &4.7  &4.9     & 4.55  & 5.05   &  4.95  & 7.55  & 5.3  & 4.3\\
\hline\hline

    \end{tabular}
    \caption{Empirical power of of tests in DGP1 for dense case}
    \label{tab:power_dgp1_dense2}
\end{table}}

\setlength{\tabcolsep}{1mm}{
\begin{table}[]
\renewcommand\arraystretch{0.8}
\scriptsize
    \centering
    \begin{tabular}{ccccccccccc}
    \hline
   \hline
   \multicolumn{11}{c}{$k=2$}\\
   \hline 
     & & \multicolumn{3}{c}{Chi-squared} & \multicolumn{3}{c}{Normal} & \multicolumn{3}{c}{Student-t}   \\
\cmidrule(r){3-5} \cmidrule(r){6-8} \cmidrule(r){9-11} 
&$(T,n)$ & (50,25) & (100,50) & (200,100)  
& (50,25) & (100,50) & (200,100)   
& (50,25) & (100,50) & (200,100)  \\  
\midrule
     \multirow{4}{*}{$\frac{n}{T}=\frac{1}{2}$ }    &$RLM$& 20.4  & 28.85  & 43.95 &  20.55 & 25.9  & 39.85 & 20.25 & 20.3  &   27.7  \\
         &$RLM_{PE}$ & 18.35   & 30.65  & 52.25  & 17.2  & 28.15  & 45.6  &  18.75 & 21.05   & 31.25  \\
         &$LM_{adj}$& 21.5  & 29.25 &44.45   & 21.55 & 26.75  & 40.25  &  21.6 & 21  &28   \\
         &$CD_{P}$& 7.45 & 6.75   & 6.65   & 7.15 &  7.8 &  6.9 & 7.4   & 6.85  &  6.5 \\
\midrule
  & & \multicolumn{3}{c}{Chi-squared} & \multicolumn{3}{c}{Normal} & \multicolumn{3}{c}{Student-t}   \\
\cmidrule(r){3-5} \cmidrule(r){6-8} \cmidrule(r){9-11}  
&$(T,n)$ & (50,50) & (100,100) & (200,200)  
& (50,50) & (100,100) & (200,200)    
& (50,50) & (100,100) & (200,200)    \\  
\midrule
     \multirow{4}{*}{$\frac{n}{T}=1$ }    &$RLM$& 23.6  & 13.6  & 38.35  & 14.05 &  18.95 & 39.55  & 11.9  & 23.75  & 31.3 \\
         &$RLM_{PE}$ & 25.6   & 13.8 & 49.3 & 14.15 &  20.5  & 50.5  &  11.95 &   25.85 &  38 \\
         &$LM_{adj}$& 24.2   &13.65   & 38.45   & 14.75 & 19.2  &   39.7 &  12.45 &  24.05  & 31.4 \\
         &$CD_{P}$ & 7.9 & 6  & 7.35  &  7.35 &6   &  7.25 & 6.8  & 5.9  & 6.4 \\
         
\midrule
  & & \multicolumn{3}{c}{Chi-squared} & \multicolumn{3}{c}{Normal} & \multicolumn{3}{c}{Student-t} \\
\cmidrule(r){3-5} \cmidrule(r){6-8} \cmidrule(r){9-11} 
&$(T,n)$ & (50,100) & (100,200) & (200,400)  
& (50,100) & (100,200) & (200,400)    
& (50,100) & (100,200) & (200,400)     \\  
\midrule
     \multirow{4}{*}{$\frac{n}{T}=2$ }    &$RLM$& 12.2   & 11.85 & 27.9   & 10.45 &  18.4 & 38.45  & 11.8  & 19.75  &54.15 \\
         &$RLM_{PE}$ & 13.4   & 12.5  & 35.8   & 10.8 & 20.35  &  52.75 & 12.65  & 21.9  & 72.15\\
         &$LM_{adj}$& 12.1  & 11.8   & 27.8   &10.4  & 18.35  & 38.3  & 11.7  &  19.7 & 54.1   \\
         &$CD_{P}$ & 6.55  & 5.45 & 6.55    & 6.65 &  5.9 &  7.1 & 6.2  & 6  &6.9 \\
         
          \hline
   \hline
  \multicolumn{11}{c}{$k=4$}\\
   \hline 
    & & \multicolumn{3}{c}{Chi-squared} & \multicolumn{3}{c}{Normal} & \multicolumn{3}{c}{Student-t}   \\
\cmidrule(r){3-5} \cmidrule(r){6-8} \cmidrule(r){9-11} 
&$(T,n)$ & (50,25) & (100,50) & (200,100)  
& (50,25) & (100,50) & (200,100)   
& (50,25) & (100,50) & (200,100)  \\  
\midrule
     \multirow{4}{*}{$\frac{n}{T}=\frac{1}{2}$ }    &$RLM$& 10.45 & 24.3   &26.4  & 9.15 &  27.25 & 27  & 17.5  & 51.7  &  28.5   \\
         &$RLM_{PE}$ &10   & 27.85  & 30.95  &  9.65 & 28.65  & 28.9  &  15.85 &  58.05 & 30.35 \\
         &$LM_{adj}$& 10.5   & 24.35  & 26.4  & 9.3 & 27.6  & 27.05  &  17.55 &51.95   &28.5  \\
         &$CD_{P}$& 6.25  &7.75 & 5.95  & 7.1 & 7.75  &7   & 7.1  &9.1   &6.75  \\
\midrule
   & & \multicolumn{3}{c}{Chi-squared} & \multicolumn{3}{c}{Normal} & \multicolumn{3}{c}{Student-t}  \\
\cmidrule(r){3-5} \cmidrule(r){6-8} \cmidrule(r){9-11}  
&$(T,n)$ & (50,50) & (100,100) & (200,200)  
& (50,50) & (100,100) & (200,200)    
& (50,50) & (100,100) & (200,200)    \\  
\midrule
     \multirow{4}{*}{$\frac{n}{T}=1$ }    &$RLM$& 16.5  & 29.9   & 16.45    & 15 & 13.35  &36.35   &  16.85 & 9.5  & 50.3 \\
         &$RLM_{PE}$ & 16.45  & 34   & 18.8  & 15.05 &  13.55 &  46.5 &  17.15 &  10.35 &64.95 \\
         &$LM_{adj}$& 15.35   &28.75   & 16.1   & 13.65 & 12.9  &   35.95& 15.6  &  9.05 &  49.9 \\
         &$CD_{P}$ & 7  & 6.95   &6.2   & 7.4 &  7 & 6.9   & 7.7  &  5.75 &  7\\
         
\midrule
  & & \multicolumn{3}{c}{Chi-squared} & \multicolumn{3}{c}{Normal} & \multicolumn{3}{c}{Student-t} \\
\cmidrule(r){3-5} \cmidrule(r){6-8} \cmidrule(r){9-11} 
&$(T,n)$ & (50,100) & (100,200) & (200,400)  
& (50,100) & (100,200) & (200,400)    
& (50,100) & (100,200) & (200,400)     \\  
\midrule
     \multirow{4}{*}{$\frac{n}{T}=2$ }    &$RLM$& 14.3  & 18.2  & 55.95  & 11.15 &  15.55 &  47.65 &8.4   & 15.85  & 77.55 \\
         &$RLM_{PE}$ &14.1   & 19.85   & 75.55  & 11.1 & 15.6  & 64.25  & 9.1  &  17.1 & 93.3 \\
         &$LM_{adj}$& 10.85  &16 &54.4  &  8.25 &  13.7 &  46.35 &   6.2& 13.75  &75.85 \\
         &$CD_{P}$ & 5.1   & 6.15  & 7.5     &5.4  &6.1   &  7.25 &  5.1 &6.25   &7.6\\
\hline\hline

    \end{tabular}
    \caption{Empirical power of tests in DGP1 for sparse case}
    \label{tab:power_dgp1_sparse}
\end{table}}

\setlength{\tabcolsep}{1mm}{
\begin{table}[]
\renewcommand\arraystretch{0.8}
\scriptsize
    \centering
    \begin{tabular}{ccccccccccc}
    \hline
   \hline
   \multicolumn{11}{c}{$k=2$}\\
   \hline 
    & & \multicolumn{3}{c}{Chi-squared} & \multicolumn{3}{c}{Normal} & \multicolumn{3}{c}{Student-t}   \\
\cmidrule(r){3-5} \cmidrule(r){6-8} \cmidrule(r){9-11} 
&$(T,n)$ & (50,25) & (100,50) & (200,100)  
& (50,25) & (100,50) & (200,100)   
& (50,25) & (100,50) & (200,100)  \\  
\midrule
     \multirow{4}{*}{$\frac{n}{T}=\frac{1}{2}$ }    &$RLM$& 99.3 & 100 &100 &  98.95 &99.9   &100   &99.15   &97.85   &100   \\
         &$RLM_{PE}$ & 99.9 & 100& 100&99.6  &  100  &100   &99.75   &99.4  & 100   \\
         &$LM_{adj}$&99.55 &100 &100 &99.1   &99.9   &100   & 99.2  & 97.9   &100   \\
         &$CD_{P}$& 54.1 &71.6 & 96.9 &55.7  &66.5    & 96.8  &  54.45 & 46.55  & 95.15   \\
\midrule
  & & \multicolumn{3}{c}{Chi-squared} & \multicolumn{3}{c}{Normal} & \multicolumn{3}{c}{Student-t}   \\
\cmidrule(r){3-5} \cmidrule(r){6-8} \cmidrule(r){9-11}  
&$(T,n)$ & (50,50) & (100,100) & (200,200)  
& (50,50) & (100,100) & (200,200)    
& (50,50) & (100,100) & (200,200)    \\  
\midrule
     \multirow{4}{*}{$\frac{n}{T}=1$ }    &$RLM$& 99.9 & 100 & 100 & 91.45 & 100   &  100 & 75.6  & 100  & 100   \\
         &$RLM_{PE}$ & 100 & 100 & 100 & 96.7  &  100  & 100  & 84.95   & 100  & 100   \\
         &$LM_{adj}$&99.9 &100 &100 & 91.6  &100   &100    &76.05    & 100  &100   \\
         &$CD_{P}$ & 59.8  &68.3&97.85&43.1  &76.85   &97.6   &33.9   &82.45   &98.2   \\
         
\midrule
   & & \multicolumn{3}{c}{Chi-squared} & \multicolumn{3}{c}{Normal} & \multicolumn{3}{c}{Student-t}  \\
\cmidrule(r){3-5} \cmidrule(r){6-8} \cmidrule(r){9-11} 
&$(T,n)$ & (50,100) & (100,200) & (200,400)  
& (50,100) & (100,200) & (200,400)    
& (50,100) & (100,200) & (200,400)     \\  
\midrule
     \multirow{4}{*}{$\frac{n}{T}=2$ }    &$RLM$& 99.8&100 &100 &97.95  &100   &100   & 97.45  & 100  &100   \\
         &$RLM_{PE}$& 100 & 100 & 100 & 99.65  &  100  & 100  & 99.45   & 100  & 100    \\
         &$LM_{adj}$& 99.75 & 100 & 100 & 97.85  &  100  & 100  & 97.45  & 100  & 100  \\
         &$CD_{P}$& 60.85  &79.15&96.9& 51  & 80.2  &94.75   &49.9   &78   &98.6    \\
         
          \hline
   \hline
  \multicolumn{11}{c}{$k=4$}\\
   \hline 
     & & \multicolumn{3}{c}{Chi-squared} & \multicolumn{3}{c}{Normal} & \multicolumn{3}{c}{Student-t}   \\
\cmidrule(r){3-5} \cmidrule(r){6-8} \cmidrule(r){9-11} 
&$(T,n)$ & (50,25) & (100,50) & (200,100)  
& (50,25) & (100,50) & (200,100)   
& (50,25) & (100,50) & (200,100)  \\  
\midrule
     \multirow{4}{*}{$\frac{n}{T}=\frac{1}{2}$ }    &$RLM$& 99.4&99.95 &100 &86.9  & 99.95  &100   & 96.05  & 100  &100   \\
         &$RLM_{PE}$& 99.7 & 100 & 100 & 92.65  &  100  & 100  & 98.35   & 100  & 100    \\
         &$LM_{adj}$& 99.4 & 99.95 & 100 & 87  &  99.95  & 100  & 96.1 & 100  & 100  \\
         &$CD_{P}$& 56.55  &64.8&95.15& 38.7  & 62.5  &95.75   &46.15   &87.2  &93.05   \\
\midrule
  & & \multicolumn{3}{c}{Chi-squared} & \multicolumn{3}{c}{Normal} & \multicolumn{3}{c}{Student-t}  \\
\cmidrule(r){3-5} \cmidrule(r){6-8} \cmidrule(r){9-11}  
&$(T,n)$ & (50,50) & (100,100) & (200,200)  
& (50,50) & (100,100) & (200,200)    
& (50,50) & (100,100) & (200,200)    \\  
\midrule
     \multirow{4}{*}{$\frac{n}{T}=1$ }    &$RLM$& 87.8&100 &100 &78.65  &100   &100   & 94.85 & 99.75 &100   \\
         &$RLM_{PE}$& 94.45 & 100 & 100 & 87.75  &  100  & 100  & 98.4   & 100  & 100    \\
         &$LM_{adj}$& 86.45 & 100 & 100 & 77.65  &  100  & 100  & 94.15 & 99.75 & 100  \\
         &$CD_{P}$& 39.7  &75.2&96.85& 34.5& 71 &93.6  &45.25  &58   &96.35   \\
         
\midrule
  & & \multicolumn{3}{c}{Chi-squared} & \multicolumn{3}{c}{Normal} & \multicolumn{3}{c}{Student-t}  \\
\cmidrule(r){3-5} \cmidrule(r){6-8} \cmidrule(r){9-11} 
&$(T,n)$ & (50,100) & (100,200) & (200,400)  
& (50,100) & (100,200) & (200,400)    
& (50,100) & (100,200) & (200,400)     \\  
\midrule
     \multirow{4}{*}{$\frac{n}{T}=2$ }  &$RLM$& 97.4&100 &100 &95.25  &100   &100   & 91.15  & 100  &100   \\
         &$RLM_{PE}$& 99.45 & 100 & 100 & 98.6  &  100  & 100  & 96.65   & 100  & 100    \\
         &$LM_{adj}$& 96.55 & 100 & 100 & 93.45  &  100  & 100  & 88.4 & 100  & 100  \\
         &$CD_{P}$& 46.15 &74.4 &98.4& 44.15 & 74.9&93.3  &40.75   &80.35 &97.75    \\
\hline\hline

    \end{tabular}
    \caption{Empirical power of tests in DGP1 for less sparse case}
    \label{tab:power_dgp1_less}
\end{table}}

\subsubsection{DGP2: Heterogeneous panel data model with weakly exogenous regressors}
To investigate the performances of the $RLM$ and $RLM_{PE}$ tests in panel data models with weakly exogenous regressors, we consider the following DGP:
\begin{equation*}
    y_{it}=\alpha_i+x_{1,it}\beta_{1i}+x_{2,it}\beta_{2i}+v_{it}, \;\;\;\; i=1,2,\dots, n; \;\;\; t=1,2,\dots, T,
\end{equation*}
where $\alpha_i\sim IIDN(1,1)$, $\beta_{li}\sim IIDN(1,0.04)$, and
\begin{equation*}
    x_{1,it}=0.6x_{1,it-1}+u_{1,it}, \;\;\;\; i=1,2,\dots, n; \;\;\; t=-50,\dots,0,\dots, T,
\end{equation*}
\begin{equation*}
    x_{2,it}=y_{it-1}+u_{2,it}, \;\;\;\; i=1,2,\dots, n; \;\;\; t=-50,\dots,0,\dots, T,
\end{equation*}
with $y_{i,-51}=x_{1,i,-51}=0$ where  $u_{1it}, u_{2it}\sim IIDN(0, \tau_{li}^2/(1-0.6^2))$, $\tau_{li}^2\sim IID \chi^2(6)/6$. This
set up allows for feedback from $y_{it-1}$ to the regressors, thus rendering weakly exogenous. The errors, $\{v_{it}\}$, are generated in the same way as DGP1.

\setlength{\tabcolsep}{1mm}{
\begin{table}[]
\scriptsize
    \centering
    \begin{tabular}{ccccccccccc}
    \hline
   \hline
   \multicolumn{11}{c}{Weakly exogenous}\\
   \hline 
    & & \multicolumn{3}{c}{Chi-squared} & \multicolumn{3}{c}{Normal} & \multicolumn{3}{c}{Student-t}  \\
\cmidrule(r){3-5} \cmidrule(r){6-8} \cmidrule(r){9-11} 
&$(T,n)$ & (50,25) & (100,50) & (200,100)  
& (50,25) & (100,50) & (200,100)   
& (50,25) & (100,50) & (200,100)  \\  
\midrule
     \multirow{4}{*}{$\frac{n}{T}=\frac{1}{2}$ }    &$RLM$& 5.85  & 4.45   & 4.75  & 5.1   & 4.7  & 5.55   & 4.7  & 5.2  &  5.8    \\
         &$RLM_{PE}$ & 5.15   & 4.15  & 4.9  & 4.75  & 4.9  & 6.4 & 5.3  & 5.05  &  5.6 \\
         &$LM_{adj}$& 9.25  & 8.4  & 8.6  & 8.8  & 8.8  & 9.2 & 8.95 & 8.85   &  9.2 \\
         &$CD_{P}$& 4.65  & 4.25   & 4 & 5   & 4.5     & 5.05  & 5.6  & 4.4  &5.35    \\
\midrule
   & & \multicolumn{3}{c}{Chi-squared} & \multicolumn{3}{c}{Normal} & \multicolumn{3}{c}{Student-t}  \\
\cmidrule(r){3-5} \cmidrule(r){6-8} \cmidrule(r){9-11}  
&$(T,n)$ & (50,50) & (100,100) & (200,200)  
& (50,50) & (100,100) & (200,200)    
& (50,50) & (100,100) & (200,200)    \\  
\midrule
     \multirow{4}{*}{$\frac{n}{T}=1$ }    &$RLM$& 4.55  & 5.4 & 5.8  & 5.65  & 4.95  & 5.05   & 5.55  & 5  &5.9     \\
         &$RLM_{PE}$ & 5.2 & 5.05  & 5.15  & 5.6   & 4.95  & 4.9   & 5.85  & 4.9   & 5.05  \\
         &$LM_{adj}$& 13.15   & 13.6  & 14  & 13.65 & 13   & 12.6     & 13.75  &13.4  &  12.6  \\
         &$CD_{P}$ & 5.6  & 4.35  & 5.2   & 6.25  & 5.2   & 5.6 & 6.1 & 5.2  & 6.2    \\
         
\midrule
  & & \multicolumn{3}{c}{Chi-squared} & \multicolumn{3}{c}{Normal} & \multicolumn{3}{c}{Student-t} \\
\cmidrule(r){3-5} \cmidrule(r){6-8} \cmidrule(r){9-11} 
&$(T,n)$ & (50,100) & (100,200) & (200,400)  
& (50,100) & (100,200) & (200,400)    
& (50,100) & (100,200) & (200,400)     \\  
\midrule
     \multirow{4}{*}{$\frac{n}{T}=2$ }    &$RLM$& 5.85  & 6.9   & 5.3  & 5.05   & 5.45  & 4.8   & 4.4   & 5.2   &  5.2   \\
         &$RLM_{PE}$ & 5.05   & 5.75   & 5.1  & 5.5  & 5.75  & 4.7   & 5.7  & 5.2   &  5.6  \\
         &$LM_{adj}$& 28.2  & 27.25  & 27.3 & 27.35  & 28.25   & 26.3 & 30.35 & 28.5 &  26.7   \\
         &$CD_{P}$ & 4.5   & 5.1 & 5.1     & 5.7  & 5.05   & 5.65   & 4.85   & 5.1  &  5.9  \\
\hline\hline

    \end{tabular}
    \caption{Empirical size of tests in DGP2}
    \label{tab:size_dgp2}
\end{table}}

\subsubsection{DGP3: Fixed effects panel model}
The third DGP considered is a fixed effects panel data model with homogeneous coefficients, which is specified as
\begin{equation*}
    y_{it}=\alpha+\sum_{l=2}^kx_{lit}\beta_{l}+\mu_i+v_{it}, \;\;\;\; i=1,2,\dots, n; \;\;\; t=1,2,\dots, T,
\end{equation*}
where $\alpha$ and $\beta_{l}$ are set arbitrarily to 1 and $l$, respectively, $\mu_i\sim IIDN(1,1)$. The regressors and errors are generated in the same way as DGP1.

\setlength{\tabcolsep}{1mm}{
\begin{table}[]
\renewcommand\arraystretch{0.8}
\scriptsize
    \centering
    \begin{tabular}{ccccccccccc}
    \hline
   \hline
   \multicolumn{11}{c}{$k=2$}\\
   \hline 
    & & \multicolumn{3}{c}{Chi-squared} & \multicolumn{3}{c}{Normal} & \multicolumn{3}{c}{Student-t}  \\
\cmidrule(r){3-5} \cmidrule(r){6-8} \cmidrule(r){9-11} 
&$(T,n)$ & (50,25) & (100,50) & (200,100)  
& (50,25) & (100,50) & (200,100)   
& (50,25) & (100,50) & (200,100)  \\  
\midrule
     \multirow{4}{*}{$\frac{n}{T}=\frac{1}{2}$ }     &$RLM$& 5.7 & 4.7  & 5.05  & 5.1 & 5.45 &  5.55 & 5 &5.1  &  4.95 \\
         &$RLM_{PE}$ & 5.55   &  4.85  &  4.95 & 4.9 & 4.8  & 5.6  & 5.2 & 4.7 & 4.6 \\
         &$LM_{adj}$& 6.4 &   5.15&  5.1     & 5.6 & 5.75 &5.65   &5.7  & 5.4 &5.1  \\
         &$CD_{P}$&4.5 & 4.7  & 5.05  & 4.8 & 4.45 & 5.3  & 5.1 & 4.45 &   4.85 \\
\midrule
   & & \multicolumn{3}{c}{Chi-squared} & \multicolumn{3}{c}{Normal} & \multicolumn{3}{c}{Student-t}   \\
\cmidrule(r){3-5} \cmidrule(r){6-8} \cmidrule(r){9-11}  
&$(T,n)$ & (50,50) & (100,100) & (200,200)  
& (50,50) & (100,100) & (200,200)    
& (50,50) & (100,100) & (200,200)    \\  
\midrule
     \multirow{4}{*}{$\frac{n}{T}=1$ }    &$RLM$& 4.85  &  5.25 & 4.6 & 5 & 5 &  4.7 &  5.05& 4.35 &  5.6 \\
         &$RLM_{PE}$ & 4.9  & 4.9      & 4.75 & 5.25 &  4.7 & 5.05& 5.5 & 4 & 5.65   \\
         &$LM_{adj}$& 5.45  & 5.4  &  4.7 & 5.2 & 5.05 &  4.95 &  5.4& 4.55 & 5.65  \\
         &$CD_{P}$ & 5.2 &  4.75 &  6.2 & 5.45 & 5.5 & 5.7  &  5.55& 5.6 &   5.5 \\
         
\midrule
   & & \multicolumn{3}{c}{Chi-squared} & \multicolumn{3}{c}{Normal} & \multicolumn{3}{c}{Student-t}  \\
\cmidrule(r){3-5} \cmidrule(r){6-8} \cmidrule(r){9-11} 
&$(T,n)$ & (50,100) & (100,200) & (200,400)  
& (50,100) & (100,200) & (200,400)    
& (50,100) & (100,200) & (200,400)     \\  
\midrule
     \multirow{4}{*}{$\frac{n}{T}=2$ }    &$RLM$& 6.45  & 6.35  & 4.35  & 5.05 & 5.3 &  4.7 & 4.65 & 5 &  4.3   \\
         &$RLM_{PE}$ & 6.55  & 5.9  & 4.75  & 5.65 & 5.4 & 5.2  & 5 & 5.1 &  4.55 \\
         &$LM_{adj}$& 6.45&  6.35 & 4.35 & 5.15 &  5.35& 4.7  & 4.7 & 5&  4.3 \\
         &$CD_{P}$ & 4.85&  5.95 & 4.9  & 4.95 & 5.4 &  5.55 &  4.25& 5.45 &  5 \\
         
          \hline
   \hline
  \multicolumn{11}{c}{$k=4$}\\
   \hline 
    & & \multicolumn{3}{c}{Chi-squared} & \multicolumn{3}{c}{Normal} & \multicolumn{3}{c}{Student-t}  \\
\cmidrule(r){3-5} \cmidrule(r){6-8} \cmidrule(r){9-11} 
&$(T,n)$ & (50,25) & (100,50) & (200,100)  
& (50,25) & (100,50) & (200,100)   
& (50,25) & (100,50) & (200,100)  \\  
\midrule
     \multirow{4}{*}{$\frac{n}{T}=\frac{1}{2}$ }    &$RLM$& 5.1 &   6.05& 4.7  &  4.7 & 4.65 & 5.35  & 5.05 & 4.45 &   5.45  \\
         &$RLM_{PE}$ & 4.95  & 5.05  & 5.05  & 4.45 & 4.25 & 5.05  & 4.2 & 4.85 &  5.25 \\
         &$LM_{adj}$& 5.4  &  6.4 &  4.75 & 5 & 4.7 &  5.35 & 5.35 & 4.5 & 5.45 \\
         &$CD_{P}$& 4.95 & 6.1  &  4.6 &  4.65 & 5.2 & 5.25  & 4.35 & 5.05&  5.05 \\
\midrule
  & & \multicolumn{3}{c}{Chi-squared} & \multicolumn{3}{c}{Normal} & \multicolumn{3}{c}{Student-t} \\
\cmidrule(r){3-5} \cmidrule(r){6-8} \cmidrule(r){9-11}  
&$(T,n)$ & (50,50) & (100,100) & (200,200)  
& (50,50) & (100,100) & (200,200)    
& (50,50) & (100,100) & (200,200)    \\  
\midrule
     \multirow{4}{*}{$\frac{n}{T}=1$ }    &$RLM$& 5.75  & 5.15  & 4.75  &  4.8 & 5.25 &  4.75 & 5 & 5.8 &  5.3  \\
         &$RLM_{PE}$ & 5.7  & 5.05  & 4.85  & 4.7 & 5.7 &  4.75 &  5.2& 5.45 &5.2    \\
         &$LM_{adj}$& 5.4  & 4.8  & 4.55  & 4.2 & 5.05 & 4.6  &  4.5& 5.55 &  5.25   \\
         &$CD_{P}$ & 5.05&  4.9 & 5.65  & 4.95  & 4.85 & 5  & 5.3& 5.35 &  4.9  \\
         
\midrule
  & & \multicolumn{3}{c}{Chi-squared} & \multicolumn{3}{c}{Normal} & \multicolumn{3}{c}{Student-t}  \\
\cmidrule(r){3-5} \cmidrule(r){6-8} \cmidrule(r){9-11} 
&$(T,n)$ & (50,100) & (100,200) & (200,400)  
& (50,100) & (100,200) & (200,400)    
& (50,100) & (100,200) & (200,400)     \\  
\midrule
     \multirow{4}{*}{$\frac{n}{T}=2$ }    &$RLM$& 5.35  & 6.15  & 5.1  & 6.05 & 5.4 &  4.75 &  4.9&  5.3&  4.9  \\
         &$RLM_{PE}$ & 5.65 & 5.8   & 4.95  & 5.55 & 5.1 &  5.05 & 5.55 & 5.1 &5.1  \\
         &$LM_{adj}$&4.15 &  5.25 &  4.85 & 4.55 & 4.3 & 4.4  & 3.95 & 4.9 & 4.4  \\
         &$CD_{P}$ & 5  & 4.9  & 3.75   & 5.15 & 5.15 &  5.15 & 5.3 & 5.4 &   4.7\\
\hline\hline

    \end{tabular}
    \caption{Empirical size of tests in DGP3}
    \label{tab:size_dgp3}
\end{table}}

\subsubsection{DGP4: Dynamic panel data model}
To examine the properties of the $RLM$ and $RLM_{PE}$ tests in a dynamic panel data model, we follow the design  of \cite{pesaran2008bias}:
\begin{equation*}
    y_{it}=\xi_i(1-\beta_i)+\beta_iy_{it-1}+v_{it} \;\;\;\; i=1,2,\dots, n; \;\;\; t=-50,\dots,0,\dots, T,
\end{equation*}
with $y_{i,-51}=0$, where $\beta_{i}\sim IIDN(1,0.04)$, and the fixed effects, $\xi_i$, are drawn as $v_{i0}+\eta_i$, with $\eta_i \sim IIDN(1,2)$. The errors, $\{v_{it}\}$,  are  generated in the same way as DGP1.

\setlength{\tabcolsep}{1mm}{
\begin{table}[]
\scriptsize
    \centering
    \begin{tabular}{ccccccccccc}
    \hline
   \hline
   \multicolumn{11}{c}{Dynamic}\\
   \hline 
    & & \multicolumn{3}{c}{Chi-squared} & \multicolumn{3}{c}{Normal} & \multicolumn{3}{c}{Student-t} \\
\cmidrule(r){3-5} \cmidrule(r){6-8} \cmidrule(r){9-11} 
&$(T,n)$ & (50,25) & (100,50) & (200,100)  
& (50,25) & (100,50) & (200,100)   
& (50,25) & (100,50) & (200,100)  \\  
\midrule
     \multirow{4}{*}{$\frac{n}{T}=\frac{1}{2}$ }    &$RLM$& 6.3  & 5.25  &  5.1 &  5.05 & 5.5  & 5.25 & 5.4  &  4.75 & 4.85  \\
         &$RLM_{PE}$ & 5.8    & 5.15  & 4.45  &   4.7& 5.3 & 4.9 &   5.4&  4.7 & 4.9  \\
         &$LM_{adj}$& 6.75  & 5.45  & 5.1  &  5.45 & 5.6 & 5.3 &5.85   & 5.15  & 4.95\\
         &$CD_{P}$& 4.9  &  5.3 &  5.4 &  4.35 & 5 & 5.45 & 4.35  &  5.8 & 5.15 \\
\midrule
  & & \multicolumn{3}{c}{Chi-squared} & \multicolumn{3}{c}{Normal} & \multicolumn{3}{c}{Student-t}   \\
\cmidrule(r){3-5} \cmidrule(r){6-8} \cmidrule(r){9-11}  
&$(T,n)$ & (50,50) & (100,100) & (200,200)  
& (50,50) & (100,100) & (200,200)    
& (50,50) & (100,100) & (200,200)    \\  
\midrule
     \multirow{4}{*}{$\frac{n}{T}=1$ }    &$RLM$& 6.65   &5.25   & 5.95  & 5.8  & 5.35 & 4.6 & 6.2  & 5.3  & 5  \\
         &$RLM_{PE}$ & 5.7  & 5.35  & 5.4  & 6.1  &  5.45&4.45  &  6.05 & 5.5  & 4.6 \\
         &$LM_{adj}$& 6.7  & 5.3  &  6 &  5.9 & 5.35 & 4.6 & 6.35  &  5.4 & 5\\
         &$CD_{P}$ & 4.75   &  4.85 & 4.85  & 4.85  & 4.6 & 4.25 &   4.8 & 4.8  &5.5 \\
         
\midrule
  & & \multicolumn{3}{c}{Chi-squared} & \multicolumn{3}{c}{Normal} & \multicolumn{3}{c}{Student-t} \\
\cmidrule(r){3-5} \cmidrule(r){6-8} \cmidrule(r){9-11} 
&$(T,n)$ & (50,100) & (100,200) & (200,400)  
& (50,100) & (100,200) & (200,400)    
& (50,100) & (100,200) & (200,400)     \\  
\midrule
     \multirow{4}{*}{$\frac{n}{T}=2$ }    &$RLM$& 5.7    & 5.9  &5.65   &6.2  & 5.1 & 5.35 &  5.95 & 5.5 & 5.45\\
         &$RLM_{PE}$ & 6.45  & 6.05  &  5.1 & 5.4  & 5.45 & 5.05 & 5.05 &  5.3 & 5.15 \\
         &$LM_{adj}$& 5.3  &  5.7 &  5.6 &  5.75 & 5 &  5.3&  5.6 &  5.35 & 5.3\\
         &$CD_{P}$ & 4.4  & 6  &  4.55 & 5.7  & 5.75 &  5.05&  5 & 4.95  &  5.6 \\
\hline\hline

    \end{tabular}
    \caption{Empirical size of tests in DGP4}
    \label{tab:size_dgp4}
\end{table}}

\subsection{Simulation results}
Table \ref{tab:DGP1} reports the empirical size of these tests for the DGP1. The proposed $RLM$ and $RLM_{PE}$ tests successfully control the size under almost all settings, irrespective of number of regressors included in the panel data model\footnote{However, for small sample size with more regressors, i.e. $T=50$ and $k=4$, the $RLM$ and $RLM_{PE}$ tests would be slightly oversized. For example, the empirical sizes of $RLM$ and $RLM_{PE}$ are 7.65 and 7.4 under normal errors, respectively. }.  For a fixed ratio $n/T$, the empirical size of the $RLM$ and $RLM_{PE}$ tests converge to the nominal size of $0.05$ as $T\rightarrow\infty$, that authenticates the asymptotic normality of the tests under the SIM-L scheme. Besides, the performance of $RLM$ are almost identical to $LM_{adj}$.  The $CD_P$ test has correct size in all cases. 

Table \ref{tab:power_dgp1_dense2} demonstrates the empirical power of these tests under the alternative with dense factors. The $RLM$ test has comparable power to $LM_{adj}$ regardless of $(n,T)$ combinations and error distributions. In contrast, the $CD_P$ test suffers from little power by construction, where mean of factor loading is close to zero  as mentioned by \cite{pesaran2008bias}. The power enhancement version of $RLM$, the $RLM_{PE}$ test, outperforms others across the board, especially when $n/T=2$. For example, the power of the  $RLM_{PE}$ test is 78.25\% for $T=100, n=200, k=4, h=3$ and student-t errors, whereas the power of the $RLM$ and $LM_{adj}$ tests are 60.1\% and 56.75\%, respectively. It improves the power by up to around 30\%. Besides,  the power of the $RLM_{PE}$ test is 69.2\% for $T=100, n=200, k=2, h=3$ and chi-square errors, and the power of the $RLM$ and $LM_{adj}$ tests are 51.2\% and 51.1\%, respectively. These results indicate that the $RLM_{PE}$ test successfully boost the power. 

The empirical power of these tests under the alternative with sparse and less sparse factors are summarized in Tables \ref{tab:power_dgp1_sparse} and \ref{tab:power_dgp1_less}, respectively. The $RLM$ test again has similar performance to $LM_{adj}$. The empirical power of $RLM_{PE}$ show that it performs the best among those tests. The power of the $CD_P$ test floats around 5\% throughout as in \cite{pesaran2015testing}.

For the heterogeneous panel data model with weakly exogenous regressors, Table \ref{tab:size_dgp2} shows that $LM_{adj}$ becomes considerably oversized, especially for the case $n/T=2$, where it has size around 28\%. This shows that the $LM_{adj}$ test is not robust to weakly exogenous regressors, which is also observed in  \cite{bailey2021lagrange}. However, our proposed $RLM$ test and $RLM_{PE}$ tests control the size well, which are not sensitive to the strictly exogenous assumptions on regressors. Therefore, though it is widely reported that the $LM_{adj}$ test has generally satisfying empirical performances regardless of restrictive strictly exogenous regressors and normal errors assumptions, the present DGP2 is indeed a rare situation where the $LM_{adj}$ test is outperformed by others.

Table \ref{tab:size_dgp3} reports the size of the tests for the fixed effects panel data model. It shows that the
proposed $RLM$ and $RLM_{PE}$ tests have the correct size, close to the 5\% nominal significance level, for example, $RLM$ has 5.1\% and 5\% size results, respectively for $T=50,n=25$ under normal errors and for $T=100,n=200$ under chi-squared errors. Similar results for $RLM_{PE}$ can be also observed in this table. Pesaran's $LM_{adj}$ and $CD_P$ tests have correct size in this setting as in \cite{pesaran2004general} and \cite{pesaran2008bias}.

Finally, Table \ref{tab:size_dgp4} gives the empirical size of these tests for dynamic panel data model. It shows that the proposed $RLM$ and $RLM_{PE}$ tests have the correct size, e.g. 5.15\% for $n=100, T=100$ with chi-squared error, which is comparable to the $LM_{adj}$ test. The $CD_P$ always has correct size as in \cite{pesaran2004general}. The results of empirical power for DGP2, DGP3 and DGP4 are similar to those of DGP1, so we omit it here.

Based on these findings, the $RLM_{PE}$ test is strongly recommended for practitioners if it is not clear whether weakly exogenous regressors are present or not, given its universally correct size and better power performances. Instead, when regressors are believed to be strictly exogenous, then the $RLM_{PE}$ test, a easily implemented and computationally cheap procedure, is preferred for large panels ($T\geq 50$). For $T\leq 50$, $RLM_{PE}$ is still applicable though it might be slightly oversized, or the $LM_{adj}$ test is a suggested method at the risk of intensive computation.

\section{Conclusion}
\label{sec:conclusion}
This paper has developed a Lagrange multiplier type test for the null hypothesis of no cross-sectional dependence in large panel models. The procedure can be applied to a wide class of linear panel data models and shows robustness to quite general forms of non-normality in the disturbance distribution. We further proposed a power enhancement version of the $LM$ type test based on the fourth moment of the sample correlations obtained from residuals to boost power under sparse alternatives, which only requires existence of higher moment but still shares such robustness. The simulations illustrate that this test has satisfactory power under the sparse alternatives of weak cross-sectional dependence, and both of the tests successfully control the size in different data generating processes. 

For future work, it is interesting to explore theoretically the power properties of $RLM_{PE}$  and the optimal $m$ that would maximise power. Also, it would be of interest to investigate the performance of the $RLM$ and $RLM_{PE}$ tests in the weakly cross-sectional dependence framework. In addition, testing the null hypothesis with no cross-sectional dependence when errors are serial dependent will also be studied.

\appendix
\newpage
\markboth{Appendix}{Appendix}
\renewcommand{\thesection}{\Alph{section}}
\numberwithin{equation}{section}
\section*{Appendix}
This appendix includes the proofs of the following lemmas:
\begin{lemma} \label{clt:hete}
  Under Assumptions 1, 2 and 3$(i)$, 
$$\frac{tr(\R^2)-\mu_0}{\sigma_0}\cvd N(0,1).$$
\end{lemma}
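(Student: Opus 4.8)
The plan is to treat $\R$ as the sample correlation matrix of the $n$ independent error series and to obtain a central limit theorem for the linear spectral statistic $tr(\R^2)=\sum_{i=1}^n\lambda_i^2(\R)$ using Random Matrix Theory. First I would exploit scale invariance: since $\rho_{ij}$ is unchanged when each $v_{it}$ is divided by $\sigma_i$, I may assume without loss of generality that the standardized errors $\tilde v_{it}=v_{it}/\sigma_i$ have unit variance. Collecting them into the $n\times T$ array $\tilde V=(\tilde v_{it})$, the key observation is that $\rho_{ij}=\langle\tilde v_i,\tilde v_j\rangle/(\|\tilde v_i\|\,\|\tilde v_j\|)$, so that $\R=\tilde D^{-1/2}(\tilde V\tilde V')\tilde D^{-1/2}$ with $\tilde D=\mathrm{diag}(\|\tilde v_1\|^2,\dots,\|\tilde v_n\|^2)$. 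Under Assumption~\ref{assum:siml} the empirical spectral distribution of $\frac1T\tilde V\tilde V'$ converges to the Marchenko--Pastur law $F_{c}$ with ratio $c$, and because $\|\tilde v_i\|^2/T\cvp1$ the correlation matrix $\R$ shares this limit.

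Next I would pin down the centering and scaling. Since $f(x)=x^2$ is a polynomial, the leading order of $E\,tr(\R^2)$ is $n\int x^2\,dF_{c_T}(x)=n(1+c_T)=n+n^2/T$, which matches $\mu_0$ up to $o(1)$; the finite-sample corrections producing the exact centering $\mu_0=n+\tfrac{n^2}{T-1}-c_T$ come from evaluating $E[\rho_{ij}^2]$ and the $O(1)$ bias term of the CLT precisely. Because $\rho_{ii}=1$, I would write $tr(\R^2)=n+\sum_{i\ne j}\rho_{ij}^2$, so that only the off-diagonal double sum carries the randomness; its variance, after accounting for the double counting of the symmetric pairs $(i,j)$ and $(j,i)$, concentrates on the terms with $\mathrm{Var}(\rho_{ij}^2)\sim2/T^2$ (the index-sharing covariances being of smaller order), which evaluates to $\sigma_0^2=4c_T^2$.

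For the asymptotic normality itself I would adopt the martingale-difference route used in the CLT for linear spectral statistics of sample covariance matrices (see, e.g., \cite{yao2015sample}). Ordering the cross-sectional units and letting $E_k$ denote conditional expectation given $\tilde v_1,\dots,\tilde v_k$, the centered statistic decomposes as $\sum_{k=1}^n(E_k-E_{k-1})tr(\R^2)$, a sum of martingale differences; verifying the conditional Lindeberg condition and the convergence of the sum of conditional variances to $\sigma_0^2$ then delivers the Gaussian limit. Assumption~\ref{assum:moment}$(i)$, the uniform $6+\epsilon$ moment, is what lets me control the non-Gaussian fourth-cumulant contributions and bound the negligible remainders without the normality assumption used in \cite{bailey2021lagrange}.

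The main obstacle I anticipate is the passage from the covariance matrix $\frac1T\tilde V\tilde V'$ to the correlation matrix $\R$. The random normalization by $\|\tilde v_i\|$ couples the entries and, although $\|\tilde v_i\|^2/T\cvp1$, its $O(T^{-1/2})$ fluctuations feed into both the $O(1)$ centering correction and, a priori, the variance; showing that they leave $\sigma_0^2=4c_T^2$ unchanged while contributing exactly the correction that sharpens $n+n^2/T$ into $n+\tfrac{n^2}{T-1}-c_T$ is the delicate part of the argument. A Taylor expansion of $(\|\tilde v_i\|^2/T)^{-1/2}$ around $1$, combined with the moment bounds from Assumption~\ref{assum:moment}$(i)$, is the tool I would use to make this rigorous.
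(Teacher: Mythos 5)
Your proposal takes a genuinely different route from the paper. The paper's proof of Lemma~\ref{clt:hete} is essentially a citation: it invokes Theorem 3.1 of \cite{Yin2021spectral} — a CLT for linear spectral statistics of large sample correlation matrices that already dispenses with Gaussianity — and then reads off the constants from Examples 3.2 and 3.3 of that reference with $g(x)=x^2$, namely the centering $n(1+\tfrac{n}{T-1})$, the limiting mean shift $-c$ and the limiting standard deviation $2c$, finishing with Slutsky's theorem to replace $c$ by $c_T$. What you propose is, in effect, to re-prove that cited theorem for the special case $g(x)=x^2$: standardization by scale invariance, writing $\R$ as a self-normalized covariance matrix, the Marchenko--Pastur limit, and a martingale-difference CLT with a Taylor expansion of $(\|\tilde v_i\|^2/T)^{-1/2}$ to control the normalization. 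That is indeed the standard machinery behind such results, and you correctly identify the crux (the passage from the covariance matrix to the correlation matrix). The trade-off is clear: the paper's argument is short and complete modulo the external theorem, while yours would be self-contained but requires executing the full Bai--Silverstein-type program for correlation matrices.

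That said, as written your proposal has a genuine gap: the two steps you explicitly defer — producing the exact centering $\mu_0=n+\tfrac{n^2}{T-1}-c_T$ and showing that the fluctuations of $\|\tilde v_i\|^2/T$ leave $\sigma_0^2=4c_T^2$ intact — are not finishing touches but the entire content of the result. Since $\sigma_0=2c_T$ is $O(1)$, the centering must be correct to $o(1)$, whereas your heuristics only control it to $O(1)$. Concretely, the per-pair computation you suggest gives, exactly in the Gaussian non-demeaned case, $E[\rho_{ij}^2]=1/T$ and hence a candidate centering $n+n(n-1)/T$, which differs from $\mu_0$ by $n^2/(T(T-1))\to c^2$, an order-one discrepancy; with demeaning one instead gets $E[\rho_{ij}^2]=1/(T-1)$ and the discrepancy vanishes. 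So ``evaluating $E[\rho_{ij}^2]$ precisely'' is sensitive to conventions and to second-order bias terms at exactly the order that matters, and bridging from the leading term $n(1+c_T)$ to $\mu_0$ requires the full second-order LSS analysis that constitutes the cited theorem. The same applies to the variance: the negligibility of the $O(n^3)$ index-sharing covariances and of the fourth-cumulant contributions under Assumption~\ref{assum:moment}$(i)$ must be proved, not asserted. To turn your outline into a proof you would either carry out that program in full or, as the paper does, outsource it to \cite{Yin2021spectral}.
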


\begin{lemma} \label{op1:hete}Under Assumptions 1, 2, 3$(i)$ and 4,
  $$tr(\hat{\R}^2)=tr(\R^2)+o_p(1).$$
\end{lemma}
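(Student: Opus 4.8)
The plan is to pass from the residual correlation matrix $\hat\R$ to the error correlation matrix $\R$ through the projection identity, and then to show that the aggregate perturbation, though assembled from $n^2\asymp T^2$ pairwise contributions, is negligible thanks to an exact cancellation at the $T^{-2}$ scale. Write $\bv_i=(v_{i1},\dots,v_{iT})'$, $\MM_i=\mathbf{I}_T-\XX_i'(\XX_i\XX_i')^{-1}\XX_i$ and $\PP_i=\mathbf{I}_T-\MM_i$, a projector of fixed rank $k$. Since the OLS residual vector satisfies $\hbv_i=\MM_i\bv_i$, the residual cross-products obey
\[
\hStij=\bv_i'\MM_i\MM_j\bv_j=\Stij-\bv_i'\PP_i\bv_j-\bv_i'\PP_j\bv_j+\bv_i'\PP_i\PP_j\bv_j ,
\]
while on the diagonal $\hat S_{T,i,i}=S_{T,i,i}-\bv_i'\PP_i\bv_i$. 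Because $\hat\R$ and $\R$ are both correlation matrices their diagonals are identically one, so $tr(\hat\R^2)-tr(\R^2)=\sum_{i\neq j}(\rhohat^{\,2}-\rho_{ij}^2)$, and it suffices to show this off-diagonal sum is $o_p(1)$.

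Next I would record the orders of magnitude that drive the argument. Under Assumptions \ref{assum:disturb}, \ref{assum:moment}$(i)$ and \ref{assum:design}, the regressor projections are finite rank and uniformly negligible (Assumption \ref{assum:design}$(iii)$), so the corrections $c_{ij}:=\bv_i'\PP_i\bv_j+\bv_i'\PP_j\bv_j-\bv_i'\PP_i\PP_j\bv_j$ and $d_i:=\bv_i'\PP_i\bv_i$ are $O_p(1)$, whereas for $i\neq j$ the leading quantities satisfy $\Stij=O_p(\sqrt T)$ and $S_{T,i,i}\asymp T$. Expanding both the numerator and the normalisation,
\[
\rhohat=\rho_{ij}\Big(1+\tfrac{d_i}{2S_{T,i,i}}+\tfrac{d_j}{2S_{T,j,j}}\Big)-\frac{c_{ij}}{\sqrt{S_{T,i,i}S_{T,j,j}}}+o_p(T^{-1}),
\]
so that $\delta_{ij}:=\rhohat-\rho_{ij}=O_p(T^{-1})$ and $\rhohat^{\,2}-\rho_{ij}^2=2\rho_{ij}\delta_{ij}+\delta_{ij}^2$. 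It is essential to retain the denominator correction $\rho_{ij}(d_i/2S_{T,i,i}+d_j/2S_{T,j,j})$: discarding it would leave a spurious $O(1)$ bias.

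The crux is a first- and second-moment analysis of $\sum_{i\neq j}(\rhohat^{\,2}-\rho_{ij}^2)$. Taken separately, $\sum_{i\neq j}2\rho_{ij}\delta_{ij}$ and $\sum_{i\neq j}\delta_{ij}^2$ each have expectation of order $n^2/T^2=O(1)$; the point is that the $O(T^{-2})$ per-pair biases they carry are equal in magnitude and opposite in sign. Using $E[\bv_i\bv_i'\mid\XX_i]\approx\sigma_i^2\mathbf{I}_T$ to leading order together with $E\|\PP_j\bv_j\|^2=\sigma_j^2 k$, one finds $E[2\rho_{ij}\delta_{ij}]=-2k/T^2+o(T^{-2})$ and $E[\delta_{ij}^2]=2k/T^2+o(T^{-2})$, so that $E[\rhohat^{\,2}-\rho_{ij}^2]=o(T^{-2})$ and the summed mean is $o(1)$. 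For the variance I would use that, under $H_0$ and cross-sectional independence, the summands attached to distinct index pairs are uncorrelated up to their shared normalising factors, bound the fourth-order terms via the uniformly bounded sixth moments in Assumption \ref{assum:moment}$(i)$, and thereby obtain $\mathrm{Var}\big(\sum_{i\neq j}(\rhohat^{\,2}-\rho_{ij}^2)\big)=o(1)$. Chebyshev's inequality then yields the claim.

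The main obstacle is precisely this $n^2$ aggregation: every individual perturbation is only $O_p(T^{-1})$, so no crude operator- or Frobenius-norm bound on $\hat\R-\R$ can succeed, and one must carry the expansion to the $T^{-2}$ level and exhibit the cancellation between the numerator perturbation $c_{ij}$ and the variance-normalisation perturbation $d_i$. A secondary difficulty, specific to the weakly exogenous setting of Assumption \ref{assum:design}$(i)$, is that $\PP_i$ depends on $\XX_i$ and hence on past values of $\bv_i$, so $\PP_i$ is not independent of $\bv_i$ and the identity $E[\bv_i\bv_i'\mid\XX_i]=\sigma_i^2\mathbf{I}_T$ holds only approximately. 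Controlling the quadratic forms $\bv_i'\PP_i\bv_i$ and $\bv_i'\PP_i\bv_j$ rigorously is where Assumption \ref{assum:design}$(iii)$ and the $\sqrt T$-consistency of $\hab_i$ from \cite{lai1982least} enter, while Assumption \ref{assum:moment}$(i)$ supplies the moments required for the variance bound.
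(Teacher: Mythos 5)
Your strategy is genuinely different from the paper's, and its central insight is real: the lemma is true precisely because the numerator perturbation $c_{ij}$ and the normalisation perturbation $d_i$ produce per-pair biases of order $T^{-2}$ that cancel, and your expansion $\hat{\rho}_{ij}-\rho_{ij}=O_p(T^{-1})$ together with the values $-2k/T^2$ and $+2k/T^2$ is correct \emph{when the design is independent of the errors}. The paper, by contrast, never computes a single expectation: it bounds
\begin{equation*}
\left|tr(\hat{\R}^2)-tr(\R^2)\right|\le\sqrt{\tfrac{1}{T^{\alpha_1}}\sum_{i\neq j}\rho_{ij}^4}\cdot\sqrt{T^{\alpha_1}\sum_{i\neq j}\Bigl(\tfrac{\hat{\rho}_{ij}^2}{\rho_{ij}^2}-1\Bigr)^2}
\end{equation*}
by Cauchy--Schwarz with a free exponent $0<\alpha_1<1$, kills the first factor with the maximal bound $\max_{i<j}|\rho_{ij}|=O_p(\sqrt{\log n/n})$ (Lemma \ref{lem:max_rho}), and kills the second by expanding into finitely many terms, each controlled \emph{uniformly} in $(i,j)$ by the Berry--Esseen-type maximal estimates of Lemma \ref{lem:estimates}. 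That design is deliberate: the only place where the error--regressor dependence enters is the asymptotic normality of $\boldsymbol{\xi}_i=(\XX_i\XX_i')^{-1/2}\sum_t\x_{it}v_{it}$, which holds under weak exogeneity.

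This is where your proposal has genuine gaps. First, your moment identities rest on $E(\bv_i\bv_i'\mid\XX_i)=\sigma_i^2\mathbf{I}_T$ and $E(\bv_i'\PP_i\bv_i)=\sigma_i^2k$, which fail when $\PP_i$ is built from weakly exogenous regressors (Assumption \ref{assum:design}(i)), since $\XX_i$ then depends on past errors. Because you sum $n^2\asymp T^2$ pairs, you need each per-pair expectation to accuracy $o(T^{-2})$, so ``holds only approximately'' is not a side remark but the entire difficulty; acknowledging it without a quantitative substitute leaves the lemma unproved exactly in the case it is meant to cover. Second, your expansion of $\delta_{ij}$ carries an $o_p(T^{-1})$ remainder, but an in-probability rate gives no control whatsoever of $E[2\rho_{ij}\cdot(\text{remainder})]$ at the $o(T^{-2})$ level; you would need uniform $L^1$/$L^2$ bounds on the remainder, which is a substantive missing step. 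Third, the variance bound is asserted rather than proved: summands indexed by $(i,j)$ and $(i,l)$ share $\bv_i$, $\PP_i$ and $S_{T,i,i}$, hence are correlated, and there are $O(n^3)$ such overlapping covariances; each must be shown to be $o(n^{-3})$ (roughly $O(T^{-4})$), a conditional-moment computation of the same delicacy as the mean, which again collides with the weak-exogeneity problem. As it stands, your argument is a plausible program for the strictly exogenous case, not a proof of Lemma \ref{op1:hete} under Assumptions \ref{assum:siml}, \ref{assum:disturb}, \ref{assum:moment}(i) and \ref{assum:design}.
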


\begin{lemma} \label{clt:hete:4}
  Under Assumptions 1, 2 and 3$(ii)$,
$$\frac{tr(\R^4)-\mu_{PE}}{\sigma_{PE}}\cvd N(0,1).$$
\end{lemma}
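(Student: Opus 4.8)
The plan is to treat $tr(\R^4)=\sum_{i=1}^n\lambda_i^4(\R)$ as a linear spectral statistic of the sample correlation matrix and to follow the same strategy as for Lemma~\ref{clt:hete} (the $tr(\R^2)$ case), now with the degree-four test function $f(x)=x^4$. First I would exploit the scale invariance of the correlation coefficients: since $\rho_{ij}$ is unchanged when each $v_{it}$ is divided by $\sigma_i$, one may assume under Assumption~\ref{assum:disturb} that the errors are standardized to unit variance. Writing $\mathbf{w}_i=(w_{i1},\dots,w_{iT})'$ for the standardized error vectors, $\mathbf{W}=(\mathbf{w}_1,\dots,\mathbf{w}_n)'$, $\mathbf{S}=\frac1T\mathbf{W}\mathbf{W}'$ and $\mathbf{D}=\mathrm{diag}(\frac1T\|\mathbf{w}_i\|^2)$, one has the exact identity $\R=\mathbf{D}^{-1/2}\mathbf{S}\mathbf{D}^{-1/2}$. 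Because the rows of $\mathbf{W}$ are independent across $i$ with i.i.d.\ entries in $t$, the empirical spectral distribution of $\R$ converges to the Marchenko--Pastur law $F_c$, so the dominant part of the centering is $n\int x^4\,dF_{c_T}(x)=n(1+6c_T+6c_T^2+c_T^3)$, which reproduces the leading terms of $\mu_{PE}$.

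The crux is to pass from the covariance matrix $\mathbf{S}$ to the correlation matrix $\R$ by controlling the random self-normalization $\mathbf{D}^{-1/2}$. The law of large numbers gives $\frac1T\|\mathbf{w}_i\|^2\cvp 1$, so $\mathbf{D}\approx\mathbf{I}_n$, but for a central limit theorem the $O(T^{-1/2})$ fluctuations of its diagonal entries must be tracked precisely. I would expand $tr(\R^4)$ in powers of $\mathbf{D}-\mathbf{I}_n$ and show that the contribution of this normalization is exactly what cancels the dependence on the fourth cumulant (kurtosis) of the errors. This is the reason the limiting variance $\sigma_{PE}^2$ and the $O(1)$ corrections to $\mu_{PE}$ depend only on $c_T$ and not on the error law, mirroring the kurtosis-free variance $\sigma_0^2=4c_T^2$ found in Lemma~\ref{clt:hete}; it is also what ultimately lets Assumption~\ref{assum:disturb} replace the normality hypotheses of \cite{bailey2021lagrange}.

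Once the self-normalization is shown to remove the distributional dependence, I would reduce the heteroskedastic, non-identically distributed problem to the i.i.d.\ Gaussian reference model, where $\R$ is a genuine Gaussian correlation matrix and the CLT for linear spectral statistics of large sample covariance/correlation matrices (of Bai--Silverstein type, as in \cite{yao2015sample}) applies directly. Since $f(x)=x^4$ is a polynomial, hence analytic on a neighbourhood of the support of $F_c$, the statistic $tr(\R^4)-n\int x^4\,dF_{c_T}$ converges to a Gaussian law whose mean and variance are obtained from the Stieltjes transform of $F_c$ via contour integrals, or equivalently from the moments $m_1,\dots,m_4$ of $F_c$. Matching the resulting polynomials in $c_T$ against the claimed $\mu_{PE}=n+\frac{6n^2}{T-1}+\frac{6n^3}{(T-1)^2}+\frac{n^4}{(T-1)^2}-6c_T(1+c_T)^2-2c_T^2$ and $\sigma_{PE}^2=8c_T^2+96c_T^3(1+c_T)^2+16c_T^2(3c_T^2+8c_T+3)^2$ then fixes the centering and scaling.

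The main obstacle I anticipate is the bookkeeping of the degree-four expansion together with the self-normalization. Each factor carries a $\frac1T\sum_t w_{it}w_{jt}$, so a typical summand of $tr(\R^4)=\sum_{i_1,i_2,i_3,i_4}\rho_{i_1i_2}\rho_{i_2i_3}\rho_{i_3i_4}\rho_{i_4i_1}$ is a product of up to eight error entries, and both the variance computation and the remainder bounds for the $\mathbf{D}-\mathbf{I}_n$ expansion require uniform control of eighth-order moments -- precisely what Assumption~\ref{assum:moment}$(ii)$ supplies, in contrast with the sixth-moment condition that suffices for the quadratic statistic in Lemma~\ref{clt:hete}. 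Showing that the many cross terms generated by the normalization either vanish or combine to cancel the kurtosis contributions, uniformly over the heteroskedastic and non-identically distributed rows, is the delicate step; the remaining evaluations, though lengthy, are routine moment and contour-integral computations entirely analogous to the $tr(\R^2)$ case.
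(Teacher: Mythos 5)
Your overall strategy --- viewing $tr(\R^4)$ as a linear spectral statistic of the sample correlation matrix, using scale invariance to absorb the $\sigma_i^2$, writing $\R=\mathbf{D}^{-1/2}\mathbf{S}\mathbf{D}^{-1/2}$, and controlling the self-normalization $\mathbf{D}$ --- is the standard route by which such CLTs are actually proved in the random matrix literature, and several of your concrete claims are right: the Marchenko--Pastur moment $\int x^4\,dF_c=1+6c+6c^2+c^3$ gives the leading centering, the limiting parameters for correlation matrices are indeed free of the fourth cumulant, and the eighth-moment condition in Assumption 3$(ii)$ is what the degree-four statistic consumes. However, this is not the paper's argument. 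The paper's proof of this lemma is a citation: it invokes Theorem 3.2 of \cite{Yin2021spectral}, which is exactly a CLT for linear spectral statistics of sample correlation matrices built from independent, non-normal, non-identically distributed data, and then reads off $\mu_{center,4}$, $\mu_{limit,4}$ and $\sigma_{PE,0}^2$ from Examples 3.2 and 3.3 of that reference with $g(x)=x^4$, finishing with Slutsky's theorem to replace $c$ by $c_T$. What you propose is, in effect, a sketch of how one would re-prove the cited theorem from scratch.

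Judged as a self-contained proof, your proposal has a genuine gap precisely where the cited theorem does the work. The two steps you defer are the entire difficulty: (i) showing that the expansion of $tr(\R^4)$ in powers of $\mathbf{D}-\mathbf{I}_n$ cancels all fourth-cumulant contributions to $O(1)$ precision, uniformly over heteroskedastic, non-identically distributed rows; and (ii) the assertion that one can then ``reduce to the i.i.d.\ Gaussian reference model.'' Neither is routine, and the two steps belong to different proof architectures: if you can genuinely execute (i) --- the full moment bookkeeping for general entries --- then the Gaussian reduction (ii) is superfluous, while if you want (ii) first, you need a Lindeberg-swapping or interpolation argument for the self-normalized matrix, which is a universality proof in its own right rather than a consequence of (i). A further point your sketch elides is that the exact centering is not $n\int x^4\,dF_{c_T}$: the constants in $\mu_{PE}$ involve $T-1$ rather than $T$ (a substitution-principle correction), and the resulting $O(1)$ discrepancy must emerge from the computation rather than be absorbed by ``matching the claimed formula,'' which presupposes the answer. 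In short, your outline is a credible plan for re-deriving the key external result, but the ``delicate step'' you acknowledge and postpone is that result; the paper's proof consists of citing it.
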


\begin{lemma} \label{op1:hete:4} Under Assumptions 1, 2, 3$(ii)$ and 4,
  $$tr(\hat{\R}^4)=tr(\R^4)+o_p(1).$$
\end{lemma}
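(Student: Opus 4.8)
The plan is to follow the blueprint of Lemma \ref{op1:hete} (the quadratic case) and upgrade every estimate to accommodate the fourth power. First I would write each residual vector as a projection of the error vector: setting $v_i=(v_{i1},\dots,v_{iT})'$ and $\PP_i=\XX_i'(\XX_i\XX_i')^{-1}\XX_i=\mathbf{I}_T-\MM_i$, a projection of fixed rank $k$, the OLS residuals satisfy $\hat v_i=\MM_iv_i$, so that
\begin{equation*}
\hat{\rho}_{ij}=\frac{v_i'\MM_i\MM_jv_j}{(v_i'\MM_iv_i)^{1/2}(v_j'\MM_jv_j)^{1/2}},\qquad \rho_{ij}=\frac{v_i'v_j}{(v_i'v_i)^{1/2}(v_j'v_j)^{1/2}}.
\end{equation*}
Writing $d_i=v_i'v_i$ and $s_i=(d_i/v_i'\MM_iv_i)^{1/2}$, I would factor $\hat{\rho}_{ij}=s_is_j\,\tilde\rho_{ij}$ with $\tilde\rho_{ij}=v_i'\MM_i\MM_jv_j/(d_id_j)^{1/2}$, and record two facts: the unit diagonal $\hat{\rho}_{ii}=1$ is preserved automatically by this factorisation (so the error matrix $E=\hat{\R}-\R$ has zero diagonal), and $s_i=1+O_p(1/T)$ since $v_i'\PP_iv_i=O_p(1)$ as $\PP_i$ has rank $k$.

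Next I would expand the numerator perturbation, $v_i'\MM_i\MM_jv_j-v_i'v_j=-v_i'\PP_iv_j-v_i'\PP_jv_j+v_i'\PP_i\PP_jv_j$, and set $\eta_{ij}=\tilde\rho_{ij}-\rho_{ij}$, so that entrywise $E_{ij}=\eta_{ij}+(s_is_j-1)\rho_{ij}+(s_is_j-1)\eta_{ij}$. The decisive structural observation is that, under $H_0$, the cross-sectional independence of $\{v_i\}$ makes $\eta_{ij}$ conditionally mean-zero for $i\neq j$ (e.g.\ $E[v_i'\PP_jv_j\mid \XX_j,v_j]=0$), with each $\eta_{ij}=O_p(1/T)$. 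Using the cyclicity of the trace, I would then reduce the target to
\begin{equation*}
tr(\hat{\R}^4)-tr(\R^4)=4\,tr(E\R^3)+\text{(terms with at least two factors of }E\text{)},
\end{equation*}
and show every summand on the right is $o_p(1)$.

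The hard part is the leading linear term $tr(E\R^3)$. A crude bound $|tr(E\R^3)|\le\|\R^3\|_F\|E\|_F$ only yields $O_p(\sqrt n)$, because $\|\R^3\|_F^2=tr(\R^6)=O(n)$ and $\|E\|_F=O_p(1)$; this is hopelessly weak, since $\sigma_{PE}=O(1)$ forces an $o_p(1)$ conclusion. The remedy is a conditional second-moment argument: for the dominant contribution $\sum_{i\neq j}\eta_{ij}(\R^3)_{ji}$ I would condition on the regressors $\{\XX_i\}$ and exploit that the $\eta_{ij}$ are conditionally centred with variance $O(1/T^2)$ and only weakly cross-correlated, so that
\begin{equation*}
\mathrm{Var}\Big(\sum_{i\neq j}\eta_{ij}(\R^3)_{ji}\Big)=O\!\left(\frac{tr(\R^6)}{T^2}\right)=O\!\left(\frac{n}{T^2}\right)=o(1),
\end{equation*}
whence this term is $o_p(1)$ by Chebyshev; the scaling contribution $\sum_{i\neq j}(s_is_j-1)\rho_{ij}(\R^3)_{ji}$ is handled with the uniform bound $\max_i|s_i-1|=o_p(1)$. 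The genuine subtlety here is that $(\R^3)_{ji}$ depends on the same errors as $\eta_{ij}$, so the cross-covariances of the summands must be bounded rather than set to zero; this decoupling, absorbed by a moment estimate, is where the argument is most delicate.

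Finally I would dispatch the quadratic, cubic and quartic terms in $E$ — such as $tr(E^2\R^2)$, $tr(E\R E\R)$, $tr(E^3\R)$ and $tr(E^4)$ — by the same philosophy: each additional factor of $E=O_p(1/T)$ supplies a further power of $T^{-1}$, so after summing over the $O(n^4)=O(T^4)$ index tuples and invoking the mean-zero and weak-correlation structure of $\eta$, all these traces are $o_p(1)$. The uniform control $\max_i|s_i-1|=o_p(1)$ and the higher-order product bounds are precisely where the eighth-moment hypothesis (Assumption \ref{assum:moment}$(ii)$) enters, in contrast to the sixth moment that suffices for the quadratic Lemma \ref{op1:hete}. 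The main obstacle throughout is the mismatch of scales — $tr(\R^4)$ is of order $n$ while the admissible error is $o_p(1)$ — which rules out any norm-based bound and makes the conditional variance computation for $tr(E\R^3)$, together with the bookkeeping of the longer mixed traces, the technical heart of the proof.
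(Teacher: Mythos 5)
Your strategy cannot succeed as outlined, because its central claim --- that in the expansion $tr(\hat{\R}^4)-tr(\R^4)=4\,tr(E\R^3)+4\,tr(E^2\R^2)+2\,tr(E\R E\R)+4\,tr(E^3\R)+tr(E^4)$ ``every summand on the right is $o_p(1)$'' --- is false: the linear and the quadratic terms are each of exact order one, and the lemma holds only because they cancel. To see this, apply your philosophy to the quadratic case you use as a template. There $tr(\hat{\R}^2)-tr(\R^2)=2\,tr(E\R)+tr(E^2)$, and $tr(E^2)=\sum_{i\neq j}(\hat\rho_{ij}-\rho_{ij})^2$ does \emph{not} vanish when $k\geq 2$: by your own expansion, the dominant part of $\hat\rho_{ij}-\rho_{ij}$ is the bilinear form $-\bv_i'(\mathbf{I}_T-\MM_i\MM_j)\bv_j/(\|\bv_i\|\,\|\bv_j\|)$, whose variance is $\|\mathbf{I}_T-\MM_i\MM_j\|_F^2\,T^{-2}(1+o(1))\asymp k/T^2$; summing the $n(n-1)\asymp c^2T^2$ off-diagonal entries, $\mathbb{E}\,tr(E^2)$ converges to a strictly positive constant of order $kc^2$, its variance is $o(1)$, so $tr(E^2)$ is bounded away from zero in probability. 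Since Lemma~\ref{op1:hete} asserts $2\,tr(E\R)+tr(E^2)=o_p(1)$, it follows that $tr(E\R)$ converges to a strictly \emph{negative} constant: the linear term is provably not $o_p(1)$. The same mechanism operates at fourth order: $4\,tr(E^2\R^2)+2\,tr(E\R E\R)\geq 4\lambda_{\min}(\R)^2\,tr(E^2)$ is bounded away from zero (for $c<1$; a direct mean computation gives the same conclusion in general), hence $4\,tr(E\R^3)$ must converge to a compensating negative constant. So both of your key steps --- the Chebyshev argument giving $tr(E\R^3)=o_p(1)$, and the dismissal of the terms with two or more factors of $E$ --- assert statements that are false.

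The precise point at which your Chebyshev argument breaks is the mean, not the variance. Although $\eta_{ij}$ is (conditionally) centred, the product $\eta_{ij}(\R^3)_{ji}$ is not: $(\R^3)_{ji}$ contains contributions proportional to $\rho_{ji}$ (paths through the unit diagonal), it is itself bilinear in $(\bv_i,\bv_j)$, and $\mathbb{E}[\eta_{ij}\rho_{ij}]\approx -\bigl(2k-tr(\PP_i\PP_j)\bigr)/T^2<0$; summed over the $n(n-1)$ pairs this produces a $\Theta(1)$ negative mean that no variance bound can remove, and conditioning on $(\XX_j,\bv_j)$ does not help because $(\R^3)_{ji}$ depends on $\bv_i$ as well. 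This is exactly why the paper never isolates powers of $E$: its proof expands $tr(\hat{\R}^4-\R^4)$ by index pattern and keeps every difference $\hat\rho_{ij}^2\hat\rho_{jl}^2-\rho_{ij}^2\rho_{jl}^2$, $\hat\rho_{ij}^4-\rho_{ij}^4$, $\hat\rho_{ij}\hat\rho_{jl}\hat\rho_{il}-\rho_{ij}\rho_{jl}\rho_{il}$, $\hat\rho_{ij}\hat\rho_{jl}\hat\rho_{ls}\hat\rho_{si}-\rho_{ij}\rho_{jl}\rho_{ls}\rho_{si}$ intact, so that the linear/quadratic cancellation in $E$ is preserved inside each single difference; each group is then bounded via Cauchy--Schwarz, the estimate $\max_{i<j}|\rho_{ij}|=O_p(\sqrt{\log n/n})$ of Lemma~\ref{lem:max_rho}, and the uniform bounds of Lemma~\ref{lem:estimates}. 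To rescue a matrix-level argument along your lines you would have to compute the $\Theta(1)$ limits of $tr(E\R^3)$ and of $4\,tr(E^2\R^2)+2\,tr(E\R E\R)$ and verify that they cancel exactly; that is a genuinely different, and substantially more delicate, proof than the one you propose.
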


In the static heterogeneous panel data model, $\hab_i=\left(\X_i\X_i^{\prime}\right)^{-1}\X_i\Y_i$ is the OLS estimator and the residuals are given by $\vv_{it}=v_{it}-\x_{it}^{\prime}\left(\hab_i-\bbeta_i\right)$. Let $\bv_i=(v_{i1},\dots,v_{iT})'$, $\hat{\bv}_i=(\vv_{i1},\dots,\vv_{iT})'$ for $i=1,\dots, n$, consequently, $\hat{\bv}_i=\bv_i-\X_i^{\prime}\left(\hab_i-\bbeta_i\right)$. Define $\mathbf{\VVV}=\begin{pmatrix} \hbv_{1} , \cdots,  \hbv_{n} \end{pmatrix}$, $\bw_i=\X_i^{\prime}\left(\hab_i-\bbeta_i\right)$ and $\mathbf{W}=\begin{pmatrix} \bw_{1} , \cdots,  \bw_{n} \end{pmatrix}$. Using this notation,  $\hbv_i=\bv_i-\bw_i$, $\mathbf{\VVV}=\mathbf{V}-\mathbf{W}$ and the sample covariance matrices can be written as  $\mathbf{S}_T=\frac1T\mathbf{\VVV}^{\prime}\mathbf{\VVV}$, $\hat{\mathbf{S}}_T=\frac1T\mathbf{V}^{\prime}\mathbf{V}$ with elements
$S_{T,i,j}=\frac1T\sum\limits_{t=1}^{T}v_{it}v_{jt}$, $\hat{S}_{T,i,j}=\frac1T\sum\limits_{t=1}^{T}\vv_{it}\vv_{jt},$ respectively.

To accomplish the proof of results above, several lemmas are introduced as following.

\begin{lemma} (Theorem 13 of Chapter 13, \cite{petrov1975sum}) \label{lem:petrov}
  Let $Y_1, \cdots, Y_n$ be independent and identically distributed random variables, such that $E(Y_1)=0$, $E(Y_1)^2=1$ and $E|Y_1|^r<\infty$ for some $r\geq3$. Then
  $$\Big|\mathbb{P}(\frac{1}{\sqrt{n}}\sum_{i=1}^nY_i<y)-\Phi(y)\Big|\leq \frac{C(r)}{(1+|y|)^r}\left[\frac{E|Y_1|^3}{\sqrt{n}}+ \frac{E|Y_1|^r}{n^{\frac{r-2}{2}}}\right]$$ for all $y$, where $\Phi(\cdot)$ is the cumulative distribution function of standard normal random variable and $C(r)$ is a positive constant depending only on $r$.
\end{lemma}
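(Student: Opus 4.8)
The plan is to read Lemma \ref{lem:petrov} as Petrov's classical non-uniform Berry--Esseen estimate and to establish it by combining a uniform convergence rate with a tail-decay refinement, treating the two summands on the right-hand side separately. Write $F_n$ for the distribution function of $S_n=n^{-1/2}\sum_{i=1}^n Y_i$ and $\hat f(t)=E e^{itY_1}$ for the characteristic function of $Y_1$, so that the characteristic function of $S_n$ is $(\hat f(t/\sqrt n))^n$. I would split the range of $y$ into a central region, say $|y|\le 1$, where the weight $(1+|y|)^{-r}$ is bounded away from zero and a uniform bound of order $E|Y_1|^3/\sqrt n$ already suffices, and a tail region $|y|>1$, where the genuine content of the non-uniform estimate lies.

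For the central region I would invoke Esseen's smoothing inequality,
\begin{equation*}
\sup_y |F_n(y)-\Phi(y)| \le \frac{1}{\pi}\int_{-S}^{S}\frac{|(\hat f(t/\sqrt n))^n - e^{-t^2/2}|}{|t|}\,dt + \frac{C}{S},
\end{equation*}
with $S$ chosen proportional to $\sqrt n / E|Y_1|^3$. Using $E Y_1=0$, $E Y_1^2=1$ and $E|Y_1|^3<\infty$, a third-order Taylor expansion of $\hat f$ near the origin gives $\hat f(s)=1-s^2/2+O(E|Y_1|^3|s|^3)$, whence $|(\hat f(t/\sqrt n))^n-e^{-t^2/2}|\le C\,E|Y_1|^3\,|t|^3 e^{-t^2/4}/\sqrt n$ on the relevant range of $t$. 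Integrating yields the uniform term $C\,E|Y_1|^3/\sqrt n$, which dominates the left-hand side for $|y|\le 1$ after absorbing the constant weight.

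The substantive step, and the main obstacle, is the tail region $|y|>1$, where one must manufacture the extra decay $(1+|y|)^{-r}$. Here I would truncate the summands at a level proportional to $|y|\sqrt n$, writing $Y_i=Y_i'+Y_i''$ with $Y_i'=Y_i\mathbf{1}(|Y_i|\le \epsilon|y|\sqrt n)$, and decompose the event $\{S_n\ge y\}$ according to whether some summand exceeds the threshold. The crude union bound $\mathbb{P}(\max_i|Y_i|>\epsilon|y|\sqrt n)\le n\,\mathbb{P}(|Y_1|>\epsilon|y|\sqrt n)\le C\,E|Y_1|^r/(|y|^r n^{(r-2)/2})$ already supplies both the moment factor $E|Y_1|^r$ and the correct powers of $|y|$ and $n$ appearing in the second summand. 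The delicate part is controlling the truncated sum: one either multiplies the discrepancy $F_n(y)-\Phi(y)$ by $y^r$ and reads $y^r$ off $r$ derivatives of the characteristic function---which forces bounds on $\hat f'$ through $\hat f^{(r)}$ of the \emph{truncated} variables, whose $r$-th moment is finite by construction---or passes to the Cramér conjugate (exponentially tilted) distribution so that the tail event is re-centered and the uniform estimate of the previous paragraph can be reused. Matching the exponent $r$ consistently through this tilting/truncation bookkeeping, and verifying that the tilted or truncated variables still satisfy the moment and smoothing hypotheses uniformly in $y$, is where the real effort sits; the two regimes are finally glued by choosing $\epsilon$ small enough that both the truncation error and the Gaussian tail $1-\Phi(y)\le C e^{-y^2/2}\le C(1+|y|)^{-r}$ are absorbed into the stated bound.
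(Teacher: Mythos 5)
First, a point of comparison: the paper does not prove this lemma at all — it is imported verbatim as a classical non-uniform Berry--Esseen estimate from Petrov's monograph, so your attempt has to be measured against the known proof of that theorem rather than against anything internal to the paper.

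Your skeleton is the right one, and it is essentially the classical Nagaev--Bikelis/Petrov architecture: a uniform Esseen-smoothing estimate handles $|y|\le 1$ (where the weight $(1+|y|)^{-r}$ is bounded below and can be absorbed into the constant), truncation at level $\propto |y|\sqrt{n}$ plus the union bound produces exactly the second summand $E|Y_1|^r/(|y|^r n^{(r-2)/2})$, and conjugate (tilted) distributions or repeated differentiation of the characteristic function is indeed how the remaining piece is treated in the literature. But there is a genuine gap precisely where you yourself locate ``the real effort'': the comparison of the truncated sum's tail probability with $1-\Phi(y)$ is never carried out, and this is the actual content of the theorem. Neither of the two routes you name is executed — no bounds on the derivatives of the truncated characteristic function are established, no tilted measure is constructed, and the effect of truncation on the mean and variance (the recentering term $E\,Y_1\mathbf{1}(|Y_1|>\epsilon|y|\sqrt n)$, which must be shown to contribute errors of order $E|Y_1|^r/(\epsilon|y|\sqrt n)^{r-1}$) is not accounted for.

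Moreover, the gluing step fails in an identifiable range of $y$. You propose to absorb the Gaussian tail via $1-\Phi(y)\le C e^{-y^2/2}\le C(1+|y|)^{-r}$, but the target bound is not $(1+|y|)^{-r}$; it is $(1+|y|)^{-r}$ multiplied by the factor $E|Y_1|^3/\sqrt n + E|Y_1|^r/n^{(r-2)/2}$, which tends to $0$ with $n$. Since $E|Y_1|^3\ge (E Y_1^2)^{3/2}=1$, absorbing $e^{-y^2/2}(1+|y|)^r$ into that factor requires roughly $|y|\gtrsim \sqrt{\log n}$. Hence for $1<|y|<c\sqrt{\log n}$ neither the union bound nor the Gaussian-tail absorption suffices, and exactly there the genuine non-uniform CLT comparison is indispensable — i.e., the step you deferred. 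As written, the argument establishes the lemma only for $|y|\le 1$ and for very large $|y|$ (the latter modulo the missing truncated-sum control), leaving the intermediate region, and with it the theorem, unproven.
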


\begin{lemma}\label{lem:max_order}
 Let $\{Y_{ij}\}_{i\geq1,j\geq 1}$ be an array of independent and  identically distributed random variables such that $E(Y_{11})=0$, $E(Y_{11})^2=1$ and $E|Y_{11}|^r<\infty$ for some $r\geq3$. Let $X_{in}=\frac{1}{\sqrt{n}}\sum_{j=1}^nY_{ij}$, then for any  $\epsilon >0$, we have 
 $$\max_{1\le i \le n}|X_{in}|=O_p(n^{\frac{1}{2r}+\epsilon}).$$
\end{lemma}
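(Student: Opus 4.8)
The plan is to combine a crude union bound with the Berry--Esseen-type estimate of Lemma \ref{lem:petrov}. Since the rows are identically distributed, each $X_{in}$ has the same law as $X_{1n}$, so for any threshold $t>0$,
\begin{equation*}
\mathbb{P}\Big(\max_{1\le i\le n}|X_{in}|>t\Big)\;\le\;\sum_{i=1}^n\mathbb{P}(|X_{in}|>t)\;=\;n\,\mathbb{P}(|X_{1n}|>t).
\end{equation*}
Thus it suffices to obtain a sufficiently sharp two-sided tail bound for a single normalized sum $X_{1n}=\frac{1}{\sqrt n}\sum_{j=1}^nY_{1j}$ and then choose $t=t_n$ so that, after multiplication by the union-bound factor $n$, the resulting bound still tends to zero.

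For the single-sum tail I would write $\mathbb{P}(|X_{1n}|>t)\le \mathbb{P}(X_{1n}>t)+\mathbb{P}(X_{1n}<-t)$ and apply Lemma \ref{lem:petrov} at $y=t$ and at $y=-t$, using $(1+|{-t}|)^r=(1+t)^r$. This gives a bound of the form
\begin{equation*}
\mathbb{P}(|X_{1n}|>t)\;\le\;2\big(1-\Phi(t)\big)+\frac{2C(r)}{(1+t)^r}\left[\frac{E|Y_{11}|^3}{\sqrt n}+\frac{E|Y_{11}|^r}{n^{(r-2)/2}}\right].
\end{equation*}
By Jensen (or Lyapunov) we have $E|Y_{11}|^3\le (E|Y_{11}|^r)^{3/r}<\infty$, so the bracket is finite; moreover, since $r\ge 3$ gives $(r-2)/2\ge 1/2$, the bracket is $O(n^{-1/2})$. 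Hence the dominant non-Gaussian contribution is $\frac{C(r)}{(1+t)^r}\,O(n^{-1/2})$, while the Gaussian piece $1-\Phi(t)$ will be negligible for the large thresholds we use.

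I would then take $t_n=n^{1/(2r)+\epsilon}$ and verify $n\,\mathbb{P}(|X_{1n}|>t_n)\to0$. The polynomial term contributes $\frac{nC(r)}{(1+t_n)^r}O(n^{-1/2})=O\!\big(n^{1/2}/t_n^{\,r}\big)$, and since $t_n^{\,r}=n^{1/2+r\epsilon}$ this is $O(n^{-r\epsilon})\to0$; this is precisely the calculation that pins down the exponent $1/(2r)$, as it balances the factor $n$ against $n^{-1/2}t^{-r}$. The Gaussian part contributes $2n\big(1-\Phi(t_n)\big)\le 2n\,e^{-t_n^2/2}\to0$ because $t_n\to\infty$ forces super-polynomial decay. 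Combining the two, $\mathbb{P}\big(\max_{i}|X_{in}|>n^{1/(2r)+\epsilon}\big)\to0$, which yields the claimed rate (in fact it shows $\max_i|X_{in}|=o_p(n^{1/(2r)+\epsilon})$, a fortiori $O_p(n^{1/(2r)+\epsilon})$).

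The main technical point, rather than the routine algebra, is the balancing step: converting the one-sided cumulative-distribution estimate of Lemma \ref{lem:petrov} into a genuine two-sided tail bound, and then recognizing that after the union bound it is the non-Gaussian correction $n\cdot n^{-1/2}t^{-r}$ (not the Gaussian tail) that is the binding constraint. This is what forces the exponent $1/(2r)$, with the arbitrarily small $\epsilon$ simply providing the slack needed to absorb the constant $C(r)$ and the moment factors and to drive $n^{-r\epsilon}$ to zero.
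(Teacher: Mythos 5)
Your proof is correct and rests on the same key ingredient as the paper's: the Berry--Esseen-type bound of Lemma \ref{lem:petrov} applied to a single normalized row sum, with the threshold $t_n=n^{1/(2r)+\epsilon}$ chosen so that the binding term $n\cdot n^{-1/2}t_n^{-r}=O(n^{-r\epsilon})$ vanishes. The one genuine difference is how the maximum over $i$ is handled. The paper exploits independence across rows to write $\mathbb{P}\big(\max_i|X_{in}|\geq \alpha n^c\big)=1-\big[\mathbb{P}(|X_{1n}|<\alpha n^c)\big]^n$ exactly, and then evaluates this through two informal asymptotic equivalences (the Gaussian tail expansion $\Phi(x)\sim 1-\tfrac{1}{\sqrt{2\pi}x}e^{-x^2/2}$ taken inside an $n$-th power, followed by an implicit $1-(1-p_n)^n\sim np_n$ step); you instead use the subadditivity bound $\mathbb{P}\big(\max_i|X_{in}|>t\big)\le n\,\mathbb{P}(|X_{1n}|>t)$. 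Since $1-(1-p)^n\le np$, the two routes yield the same rate, but yours buys two things: it requires only that the rows be identically distributed (independence across $i$ is never invoked), and every step is a genuine inequality, so none of the paper's $\sim$ manipulations need separate justification. In addition, your two-sided application of Lemma \ref{lem:petrov} at $y=\pm t$ tidies a small looseness in the paper, which lower-bounds $\mathbb{P}(|X_{1n}|<\alpha n^c)$ by $\Phi(\alpha n^c)$ minus the error term without subtracting the lower-tail mass $\Phi(-\alpha n^c)$ (harmless, as that mass is of Gaussian-tail order). Like the paper's argument, your bound in fact establishes the stronger conclusion $\max_i|X_{in}|=o_p(n^{1/(2r)+\epsilon})$, which of course implies the stated $O_p$ rate.
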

\begin{proof}
    For some $\alpha>0,$ let $c=\frac{1}{2r}+\epsilon$, we have 
     \begin{align*}
    \mathbb{P}(\max_{1\le i \le n}|X_{in}|\geq \alpha n^c) & = 1-\left[ \mathbb{P}(|X_{1n}|<\alpha n^c) \right]^n\\
    & \le 1- \Big\{\Phi( \alpha n^c) -\frac{C(r)}{(1+ \alpha n^c)^r} \left[\frac{E|Y_{11}|^3}{\sqrt{n}}+ \frac{E|Y_{11}|^r}{n^{\frac{r-2}{2}}}\right] \Big\}^n \\
    &\sim 1- \Big\{1-\frac{1}{\sqrt{2\pi}\alpha}n^{-c}e^{-\frac{\alpha^2n^{2c}}{2}}
    -\frac{C(r)}{(1+ \alpha n^c)^r} \left[\frac{E|Y_{11}|^3}{\sqrt{n}}+ \frac{E|Y_{11}|^r}{n^{\frac{r-2}{2}}}\right] \Big\}^n \\
    &\sim \frac{1}{\sqrt{2\pi}\alpha}n^{1-c}e^{-\frac{\alpha^2n^{2c}}{2}}+ \frac{C(r)E|Y_{11}|^3\sqrt{n}}{(1+ \alpha n^c)^r}+ \frac{C(r)E|Y_{11}|^r}{(1+ \alpha n^c)^rn^{\frac{r}{2}-2}},
  \end{align*}
  where the first inequality follows by Lemma \ref{lem:petrov}, and the first approximation follows by the fact that $\Phi(x) \sim 1- \frac{1}{\sqrt{2\pi}x}e^{-\frac{x^2}{2}}$ for large $x$. Therefore, $\max_{1\le i \le n}|X_{in}|=O_p(n^c)$ holds.
\end{proof}

\begin{remark}
 For the panel data model, if the errors $\{v_{it}\}_{1\leq i \leq n, 1\leq t \leq T}$ satisfy the conditions in lemma \ref{lem:petrov} and $X_{iT}=\frac{1}{\sqrt{T}}\sum_{t=1}^Tv_{it}$, then for any any  $\epsilon >0$, the estimate $\max_{1\le i \le n}|X_{iT}|=O_p(n^{\frac{1}{2r}+\epsilon})$ still holds once we further assume that $K_1\leq\frac{n}{T}\leq K_2$ for some positive constants $K_1$ and $K_2$. It holds naturally since $\frac{n}{T}\rightarrow c>0$ in the SIM-L scheme.
\end{remark}

\begin{lemma} (\cite{li2012jiang}) \label{lem:max_rho} 
  Suppose $E|v_{it}|^{6}<\infty$, then $$\max_{1\leq i<j\leq n} |\rho_{ij}|=O_p\Bigg(\sqrt{\frac{\log n}{n}}\Bigg),$$
  when $\frac nT\rightarrow c >0$.
\end{lemma}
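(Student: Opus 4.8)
The plan is to establish a sharp Gaussian-type tail bound for each individual correlation $\rho_{ij}$ that is valid throughout the moderate-deviation range $y\asymp\sqrt{\log n}$, and then to finish by a union bound over the $\binom{n}{2}$ off-diagonal pairs. First I would reduce the problem to the standardized scale: writing $\rho_{ij}=\inn{\bv_i}{\bv_j}/(\|\bv_i\|\,\|\bv_j\|)$ and recalling $\|\bv_i\|^2/T\cvp\sigma_i^2$, it suffices to show $\max_{i<j}\sqrt{T}\,|\rho_{ij}|=O_p(\sqrt{\log n})$; since $n/T\to c\in(0,\infty)$ the factor $\sqrt{T}$ is interchangeable with $\sqrt{n}$, and the stated rate $\sqrt{\log n/n}$ follows. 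Because $\rho_{ij}$ already carries its own normalization, I would retain the self-normalized form instead of separately estimating the two denominators, whose crude control would introduce errors that are fatal after the union bound.

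The core step is a Cramér-type moderate deviation bound. Fix $i$ and condition on the row $\bv_i$, which is independent of $\bv_j$ under $H_0$; writing $u_t=v_{it}/\|\bv_i\|$ we have $\sqrt{T}\,\rho_{ij}=\sqrt{T}\,\sum_t u_t v_{jt}/\|\bv_j\|$ with $\sum_t u_t^2=1$. Up to the ratio $\sqrt{T}\,(\sum_t u_t^2 v_{jt}^2)^{1/2}/\|\bv_j\|$, which concentrates to one, this coincides with the genuine self-normalized statistic $Z_{ij}=\sum_t u_t v_{jt}/(\sum_t u_t^2 v_{jt}^2)^{1/2}$. For such self-normalized linear statistics the moderate deviation $P(|Z_{ij}|\ge y)/\{2(1-\Phi(y))\}\to 1$ holds uniformly for $0\le y=o(T^{1/6})$, provided the summands have a finite third moment and the weights are asymptotically negligible. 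The hypothesis $E|v_{it}|^{6}<\infty$ supplies far more than the third moment needed, while the negligibility $\max_t|u_t|=\max_t|v_{it}|/\|\bv_i\|=o_p(1)$ follows from $\max_{1\le t\le T}|v_{it}|=o_p(\sqrt{T})$, a maximal bound of the kind underlying Lemma \ref{lem:max_order}. As $y=K\sqrt{\log n}=o(T^{1/6})$ lies well inside the admissible range, this yields, uniformly over pairs,
\[
P\!\left(\sqrt{T}\,|\rho_{ij}|\ge K\sqrt{\log n}\right)\le C\,n^{-K^2/2+o(1)}.
\]

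Finishing, a union bound gives $P(\max_{i<j}\sqrt{T}\,|\rho_{ij}|\ge K\sqrt{\log n})\le\binom{n}{2}\,C\,n^{-K^2/2+o(1)}=O\!\left(n^{\,2-K^2/2+o(1)}\right)$, which tends to $0$ once $K>2$; hence $\max_{i<j}|\rho_{ij}|=O_p(\sqrt{\log n/n})$. The main obstacle is precisely the core tail estimate: a plain Berry--Esseen or Petrov bound (Lemma \ref{lem:petrov}) applied directly to $\sum_t v_{it}v_{jt}$ carries an additive remainder of order $T^{-1/2}$, which after multiplication by the $n^2$ pairs diverges---this is exactly why the cruder argument of Lemma \ref{lem:max_order} reaches only the rate $n^{1/(2r)+\epsilon}$ rather than $\sqrt{\log n/n}$. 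Attaining the sharp $\sqrt{\log n}$ rate therefore hinges on exploiting self-normalization to eliminate this remainder, in the spirit of \cite{li2012jiang}; a secondary technical point is to make the conditioning-on-$\bv_i$ argument uniform in $i$ by verifying the negligibility condition on the random weights $\{u_t\}$ on an event of probability tending to one.
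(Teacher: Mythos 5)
First, a point of orientation: the paper never proves this lemma --- it is imported, with citation, from \cite{li2012jiang} and used as a black box in the appendix, so your proposal can only be measured against the strategy of that cited literature. On that score your architecture is the right one: Cram\'er-type moderate deviations for a self-normalized statistic, then a union bound over the $\binom{n}{2}$ pairs with $K>2$ killing the factor $n^{2-K^2/2+o(1)}$, is genuinely how results of this type are proved, and your diagnosis of why the paper's own tools (Lemma \ref{lem:petrov} and Lemma \ref{lem:max_order}) cannot reach the $\sqrt{\log n/n}$ rate --- an additive Berry--Esseen remainder of order $T^{-1/2}$ that diverges after multiplication by $n^2$ pairs --- is exactly correct.

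There is, however, a genuine gap at the step you wave through as ``concentrates to one.'' Writing $\sqrt{T}\rho_{ij}=Z_{ij}R_{ij}$ with $R_{ij}=\sqrt{T}\bigl(\sum_t u_t^2v_{jt}^2\bigr)^{1/2}/\|\bv_j\|$, the union-bound argument needs the inclusion $\bigl\{\sqrt{T}|\rho_{ij}|\ge 2K\sqrt{\log n}\bigr\}\subseteq\bigl\{|Z_{ij}|\ge K\sqrt{\log n}\bigr\}\cup\bigl\{R_{ij}\ge 2\bigr\}$, so $R_{ij}$ must be controlled \emph{uniformly over all $\sim n^2$ pairs}, with per-pair failure probability summable against $n^2$. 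Under only $E|v_{it}|^6<\infty$ this is not routine: $v_{jt}^2$ has just three finite moments, so a Rosenthal-type bound for $\sum_t u_t^2(v_{jt}^2-\sigma_j^2)$ gives a per-pair tail of order at best $\max_t|u_t|^3\asymp T^{-1}$, and $n^2\cdot T^{-1}\asymp T\to\infty$. The difficulty is real rather than cosmetic: for the row attaining $\max_{i,t}|v_{it}|\asymp (nT)^{1/6}$ paired against the $j$ maximizing $|v_{jt^{*}}|$, the factor $R_{ij}$ is borderline unbounded, and taming exactly these extremal configurations (by truncation and exceedance counting, calibrated so that $x^6P(|v_{11}|>x)\to0$ is precisely what is needed) constitutes the actual work in \cite{li2012jiang}-type proofs; it is also why the sixth moment is the sharp condition. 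A secondary imprecision: negligibility $\max_t|u_t|=o_p(1)$ is not sufficient for the weighted self-normalized moderate deviation at level $y\asymp\sqrt{\log n}$; the multiplicative error there is of size $(1+y)^3\sum_t|u_t|^3\le(1+y)^3\max_t|u_t|$, so you need the rate $\max_t|u_t|=o\bigl((\log n)^{-3/2}\bigr)$ to hold uniformly in $i$ on one high-probability event --- this does follow from the sixth moment via a maximal inequality over all $nT$ entries, but it must be stated and proved as a rate, and it shrinks the admissible deviation range below the $o(T^{1/6})$ you quote.
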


\begin{lemma}\label{lem:estimates}
  Under  
  Assumptions 1, 2, 3 and 4, for any $\epsilon>0$ and some integer $r_1\geq 3$, i.e. $E|v_{it}|^{r_1}<\infty$
  \begin{enumerate}
  \item[(a)\ ]  $\displaystyle \max_{1\le i ,  j \le n}    |\inn{\bv_i}{\bw_j}|=O_p (n^{\frac{1}{r_1}+2\epsilon_1})$.
  \item[(b)\ ]  $\displaystyle \max_{1\le i, j \le n}    |\inn{\bw_i}{\bw_j}|=O_p (n^{\frac{1}{r_1}+2\epsilon_1})$.
    
  \item[(c)\ ]  $\displaystyle \max_{1\le i  \le n} |S_{T,i,i} -\sigma^2|= O_p (n^{\frac{1}{r_1}+\epsilon_2-\frac12})$.
  \item[(d)\ ]  $\displaystyle \max_{1\le i \ne j \le n} |S_{T,i,j} |= O_p (n^{\frac{1}{r_1}+\epsilon_2-\frac12})$.
  \item[(e)\ ]
    $\displaystyle   \max_{1\le i  \le n} |   \hat S_{T,i,i} -    S_{T,i,i} |   =O_p (n^{\frac{1}{r_1}+2\epsilon_1-1}).$
  \item[(f)\ ]  $\displaystyle \max_{1\le i  \le n} |\hat S_{T,i,i} -\sigma^2|= O_p (n^{\frac{1}{r_1}+\epsilon_2-\frac12})$.

  \end{enumerate}
\end{lemma}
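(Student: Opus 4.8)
The plan is to express each of the six quantities in terms of two elementary objects—the regressor--error averages $g_i=\frac{1}{\sqrt{T}}\X_i\bv_i=\frac{1}{\sqrt{T}}\sum_{t=1}^{T}\x_{it}v_{it}$ and the centred diagonal sums $\frac{1}{T}\sum_{t=1}^{T}(v_{it}^2-\sigma_i^2)$—and then to apply the maximal inequality of Lemma \ref{lem:max_order} uniformly over the $n$ (respectively $n^2$) indices. First I would record the OLS identities $\hab_i-\bbeta_i=(\X_i\X_i')^{-1}\X_i\bv_i=\frac{1}{\sqrt{T}}\mathbf{B}_i^{-1}g_i$, where $\mathbf{B}_i=\frac{1}{T}\X_i\X_i'$ (with $\mathbf{B}_i\p\mathbf{B}$), and $\bw_i=\X_i'(\hab_i-\bbeta_i)=\PP_i\bv_i$ with $\PP_i=\X_i'(\X_i\X_i')^{-1}\X_i$ the rank-$k$ projection onto the regressor space. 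These give the clean reductions $\inn{\bv_i}{\bw_i}=\|\bw_i\|^2=g_i'\mathbf{B}_i^{-1}g_i$, and for $i\neq j$ both $\inn{\bv_i}{\bw_j}=a_{ij}'\mathbf{B}_j^{-1}g_j$ and $\inn{\bw_i}{\bw_j}=g_i'\mathbf{B}_i^{-1}\big(\frac{1}{T}\X_i\X_j'\big)\mathbf{B}_j^{-1}g_j$, where $a_{ij}=\frac{1}{\sqrt{T}}\X_j\bv_i$.

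The engine for parts (a) and (b) is the maximal estimate $\max_{1\le i\le n}\|g_i\|=O_p(n^{1/(2r_1)+\epsilon_1})$, obtained componentwise: under the weak exogeneity of Assumption \ref{assum:design}(i) the sequence $\{\x_{it}v_{it}\}_t$ is a martingale difference array whose summands carry a finite moment of the order dictated by combining Assumption \ref{assum:moment} with the design moments (via Hölder), so a martingale version of Lemma \ref{lem:max_order} applies over the $n$ indices. Combined with the uniform controls $\max_i\|\mathbf{B}_i^{-1}\|=O_p(n^{\epsilon})$ and $\max_{i,j}\|\frac{1}{T}\X_i\X_j'\|=O_p(n^{\epsilon})$, the quadratic forms above are governed by products $\|g_i\|\,\|g_j\|$, which yields the rate $n^{1/r_1+2\epsilon_1}$ both for $\|\bw_i\|^2$ and for the off-diagonal forms in (b). For part (a) with $i\neq j$ the cross factor $a_{ij}$ is a maximum over $n^2$ pairs and is therefore of the smaller order $O_p(n^{\epsilon})$, so the $i=j$ diagonal dominates and again produces $n^{1/r_1+2\epsilon_1}$.

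The error-only estimates are more direct. Part (c) is Lemma \ref{lem:max_order} applied to $S_{T,i,i}-\sigma_i^2=T^{-1/2}\cdot\frac{1}{\sqrt{T}}\sum_t(v_{it}^2-\sigma_i^2)$, whose summands carry moment order $r_1/2$, giving the bracket order $n^{1/r_1+\epsilon_2}$ and hence $O_p(n^{1/r_1+\epsilon_2-1/2})$. Part (d) sidesteps a crude union bound over $n^2$ pairs by factoring $S_{T,i,j}=\rho_{ij}\sqrt{S_{T,i,i}\,S_{T,j,j}}$ and invoking Lemma \ref{lem:max_rho}: since $\max_i S_{T,i,i}=O_p(1)$ by (c), one gets $\max_{i\neq j}|S_{T,i,j}|=O_p(\sqrt{\logn/n})=o(n^{1/r_1+\epsilon_2-1/2})$. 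Parts (e) and (f) then follow immediately: the Pythagorean identity $\|\hbv_i\|^2=\|\bv_i\|^2-\|\bw_i\|^2$ (orthogonality of $\MM_i\bv_i$ and $\PP_i\bv_i$) gives $\hat S_{T,i,i}-S_{T,i,i}=-\frac{1}{T}\|\bw_i\|^2$, so (e) is (b) at $i=j$ divided by $T\asymp n$; and (f) is the sum of (e) and (c) with the larger exponent retained.

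\textbf{Main obstacle.} The delicate step is the \emph{uniform} control underlying (a) and (b). Two points need care. Lemma \ref{lem:max_order} and its Berry--Esseen input (Lemma \ref{lem:petrov}) are stated for i.i.d.\ summands, whereas under \emph{weak} (as opposed to strict) exogeneity $\{\x_{it}v_{it}\}_t$ is only a martingale difference sequence; under strict exogeneity one may condition on $\X_i$ and recover the i.i.d.\ case, but the weakly exogenous case requires a martingale analogue of the maximal bound. Moreover, Assumption \ref{assum:design}(ii) provides $\frac{1}{T}\X_i\X_i'\p\mathbf{B}$ only for each fixed $i$, while bounding $\max_i\|\mathbf{B}_i^{-1}\|$ demands $\min_i\lambda_{\min}(\mathbf{B}_i)$ bounded away from zero uniformly in $i$; securing this uniform invertibility (again through the maximal inequality over the $n$ indices) is the crux, after which the exponent bookkeeping is routine.
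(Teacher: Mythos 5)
Most of your proposal coincides with the paper's own argument: your $g_i$, $\mathbf{B}_i$ reduction is exactly the paper's $\boldsymbol{\xi}_i=(\X_i\X_i')^{-1/2}\X_i\bv_i$ in different coordinates, and your treatment of the diagonal case of (a), of (b), (c) and (f) is the paper's proof almost verbatim (asymptotic normality of the unit-level statistics, then the maximal bound of Lemma \ref{lem:max_order} over the $n$ indices, with $r_1=2r_2$ for the squared errors). Two of your steps are genuinely different and arguably cleaner: for (d) the factorization $S_{T,i,j}=\rho_{ij}\sqrt{S_{T,i,i}S_{T,j,j}}$ combined with Lemma \ref{lem:max_rho} gives $O_p(\sqrt{\log n/n})$ directly and avoids the $n^2$-fold union bound that the paper's ``same as (c)'' treatment quietly needs; and for (e) the orthogonality identity $\hat S_{T,i,i}-S_{T,i,i}=-\frac{1}{T}\|\bw_i\|^2$ is tidier than the paper's expansion $-2T^{-1}\inn{\bw_i}{\bv_i}+T^{-1}\|\bw_i\|^2$ (the two agree since $\inn{\bw_i}{\bv_i}=\|\bw_i\|^2$). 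Your two flagged obstacles (a martingale analogue of Lemma \ref{lem:max_order} under weak exogeneity, and uniform-in-$i$ invertibility of $\mathbf{B}_i$) are real, and the paper leaves them equally unaddressed.

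The genuine gap is your handling of part (a) for $i\neq j$. You assert that the cross factor $a_{ij}=\frac{1}{\sqrt{T}}\X_j\bv_i$ ``is a maximum over $n^2$ pairs and is therefore of the smaller order $O_p(n^{\epsilon})$.'' This is backwards: maximizing over more indices can only increase the maximum, and with only $r_1$ finite moments the maximum of $n^2$ mean-zero, individually-$O_p(1)$ normalized sums is polynomially large, not $O_p(n^{\epsilon})$. Concretely, running the paper's own machinery (Lemma \ref{lem:petrov} plus a union bound, now over $n^2$ pairs) gives $\max_{i\neq j}\|a_{ij}\|=O_p(n^{\frac{3}{2r_1}+\epsilon})$, and a crude Markov bound gives $O_p(n^{\frac{2}{r_1}+\epsilon})$; an $O_p(n^{\epsilon})$ bound would require exponential-type tails, which Assumption \ref{assum:moment} does not provide. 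Substituting the correct order into your bound $|\inn{\bv_i}{\bw_j}|\le\|a_{ij}\|\,\|\mathbf{B}_j^{-1}\|\,\|g_j\|$ yields $O_p(n^{\frac{2}{r_1}+\cdots})$, which overshoots the claimed $O_p(n^{\frac{1}{r_1}+2\epsilon_1})$ — and the exponent matters, because in the proof of Lemma \ref{op1:hete} the terms involving $\bw_i'\bv_j$ with $i\neq j$ are used with the $\frac{1}{r_1}$ exponent and $r_1$ tied to the sixth-moment assumption; with $\frac{2}{r_1}$ in their place the corresponding $o_p(1)$ conclusions fail. To be fair, the paper itself disposes of this case with ``the calculations for $i\ne j$ are similar'' and never confronts the $n^2$-union issue, so the claimed rate is not rigorously established there either; but your proposal makes the difficulty explicit and then resolves it with a principle that is false, which is a gap rather than an omission.
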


\begin{proof}

  (a). Firstly, we consider the case $i=j$. By Assumption 2, we obtain 
  $\boldsymbol{\xi}_i = \left( \X_i \X_i' \right)^{-\frac12} \sum\limits_{t}\x_{it}v_{it}\cvd N_k\left(0,\sigma_i^2\mathbf{I}_k\right).
  $
  Therefore, by Lemma \ref{lem:max_order} for some $r_1 \geq 3$ and for any $\epsilon_1>0$, 
  \[ \max_{1\le i\le n} \|\boldsymbol{\xi}_i\| =O_p(n^{\frac{1}{2r_1}+\epsilon_1}).
  \]
  Consequently,
  $$\begin{aligned}
     &\max_{1\le i \le n} \left|\inn{\bw_i}{\bv_i}\right|\\&=\max_{1\le i  \le n}\left|\bv_i^{\prime}\X_i^{\prime}\left(\hab_i-\bbeta_i\right)\right|
    = \max_{1\le i  \le n}\left|\bv_i^{\prime}\X_i^{\prime}\left(\X_i\X_i^\prime\right)^
    {-1}\X_i\bv_i\right|
    = \max_{1\le i  \le n}\left\|\boldsymbol{\xi}_i\right\|^2=O_p(n^{\frac{1}{r_1}+2\epsilon_1}).
  \end{aligned}$$
The calculations for $i\ne j$ case is similar so we omit it here.

  \bigskip

  (b). We have
  \begin{align*}
    \max_{1\le i,j  \le n}\left| \inn{\bw_i}{\bw_j} \right| &= \max_{1\le i,j \le n}\left| \left(\hab_i^{\prime}-\bbeta_i^{\prime}\right)\X_i\X_j^{\prime} \left(\hab_j-\bbeta_j\right)\right| \\& = \max_{1\le i,j \le n} \left|\bv_i^{\prime}\X_i^{\prime}\left(\X_i\X_i^\prime\right)^
    {-1}\X_i \X_j^{\prime}\left(\X_j\X_j^\prime\right)^
    {-1}\X_j^{\prime}\bv_j \right| \\
    &=\max_{1\le i , j\le n} \left|\boldsymbol{\xi}_i^\prime \left(\X_i\X_i^\prime\right)^
    {-\frac{1}{2}}\X_i \X_j^{\prime}\left(\X_j\X_j^\prime\right)^
    {-\frac{1}{2}}\boldsymbol{\xi}_j \right|\\
    & \le \left(\max_{1\le i \le n} \|\boldsymbol{\xi}_i\| \| \X_i\X_i'\|^{-1/2} \|\X_i\|\right)^2  \\
    & =  \left(O_p(n^{\frac{1}{2r_1}+\epsilon_1})O_p(T^{-1/2})  O_p(T^{1/2})\right)^2   = O_p(n^{\frac{1}{r_1}+2\epsilon_1}).
  \end{align*}

  \bigskip

  (c). By CLT, $Z_i\stackrel{\Delta}{=} \sqrt T (S_{T,i,i}-\sigma_i^2)\cvd
  N(0,\tau_i^2)$ where $\sigma^2=E(v_{it}^2)$ and $\tau_i^2=\text{var}
  (v_{it}^2)$, then 
  $ \max\limits_{1\le i\le n} |Z_i| = O_p(n^{\frac{1}{2r_2}+\epsilon_2})  $  for some $r_2 \geq 3$ and for any $\epsilon_2>0$ by Lemma \ref{lem:max_order}. It can be easily found that $r_1=2r_2$, therefore,
  \[  \max\limits_{1\le i\le n} |S_{T,i,i}-\sigma^2| = O_p(n^{\frac{1}{2r_2}+\epsilon_2-\frac12})=O_p(n^{\frac{1}{r_1}+\epsilon_2-\frac12}).
  \]

  \bigskip
  
  (d). Note that $E(v_{it}v_{jt})=0$ for $i\ne j$, it follows along the same lines as that of (c).
   
   \bigskip
  
  (e). By (a) and (b), 
  we have 
  \begin{align*}
    \nonumber
    \max_{1\le i  \le n} | \hat S_{T,i,i} -  S_{T,i,i} | & =\max_{1\le i \le n} \left|  -2T^{-1} \inn{\bw_i}{\bv_i} +T^{-1}\|\bw_i\|^2 \right|
      = \frac1T  O_p (n^{\frac{1}{r_1}+2\epsilon_1})  = O_p
     (  n^{\frac{1}{r_1}+2\epsilon_1-1}  ) . 
  \end{align*}

  \bigskip
  (f).  The conclusion holds from (c) and (e). 
  
\end{proof}

\noindent\textbf{Proof of Lemma 1}
\begin{proof}
By the Theorem 3.1 of \cite{Yin2021spectral}, there exist constants $\mu_{center}$, $\mu_{limit}$ and $\sigma_0>0$ such that: $$tr(\R^2)-\mu_{center}\cvd N(\mu_{limit},\sigma_0'^2).$$ Applying the results in Example 3.2 of \cite{Yin2021spectral} with $g_l=x^2$, we obtain $\mu_{center}=n(1+\frac{n}{T-1})$. For the case $g_l=x^2$ and $\R=\mathbf{I}_n$, the results in Example 3.3 of \cite{Yin2021spectral} shows that $\mu_{limit}=-c$ and $\sigma_{0}'=2c$. Finally, substituting $c$ with $c_T$ by Slutsky's theorem completes the proof.
\end{proof}

\noindent\textbf{Proof of Lemma 2}
\begin{proof}
  By direct calculation we have
$$\begin{aligned}
  \left| tr\left(\hat{\mathbf{R}}^2-\mathbf{R}^2\right)\right|&=\left|\sum\limits_{1\leq i\neq j \leq n}  \left(  \rhohat^2 -\rho_{ij}^2\right)\right|\\ &=\left|\sum\limits_{1\leq i\neq j \leq n}\rho_{ij}^2 \left(\frac{\rhohat^2}{\rho_{ij}^2}-1 \right)  \right| \\
  &\le \sqrt{\sum\limits_{1\leq i\neq j \leq n}\rho_{ij}^4} \cdot\sqrt{\sum\limits_{1\leq i\neq j\leq n}\left(\frac{\rhohat^2}{\rho_{ij}^2}-1 \right)^2} \\
  &= \sqrt{\frac1{T^{\alpha_1}}\sum\limits_{1\leq i\neq j \leq n}\rho_{ij}^4}\cdot \sqrt{T^{\alpha_1}\sum\limits_{1\leq i\neq j\leq n}\left( \frac{\Stij^2\hat{S}_{T,i,i}\hat{S}_{T,j,j}-\hStij^2S_{T,i,i}S_{T,j,j}}{\Stij^2\hat{S}_{T,i,i}\hat{S}_{T,j,j}} \right)^2},\\
\end{aligned}$$
where constant $0<\alpha_1<1$. Therefore, we aim to show that: 
\begin{enumerate}
    \item[\textbf{(i)}] $A_1\stackrel{\Delta}{=}\left|\frac1{T^{\alpha_1}}\sum\limits_{1\leq i\neq j \leq n}\rho_{ij}^4\right|=o_p(1)$,
    \item[\textbf{(ii)}]  $A_2\stackrel{\Delta}{=}\left|T^{\alpha_1}\sum\limits_{1\leq i\neq j \leq n}\left( \frac{\Stij^2\hat{S}_{T,i,i}\hat{S}_{T,j,j}-\hStij^2S_{T,i,i}S_{T,j,j}}{\Stij^2\hat{S}_{T,i,i}\hat{S}_{T,j,j}} \right)^2\right|=o_p(1)$.
\end{enumerate}
\textbf{(i)} By lemma \ref{lem:max_rho}, we have $$\left|\frac1{T^{\alpha_1}}\sum\limits_{1\leq i\neq j \leq n}\rho_{ij}^4\right| \leq \frac1{T^{\alpha_1}}O_p\left((\frac{\log n}{n})^2\cdot n^2\right)=O_p\Bigg(\frac{(\logn)^2}{T^{\alpha_1}}\Bigg)=o_p(1).$$ Therefore, $A_1=o_p(1)$ holds.

\noindent \textbf{(ii)} By direct calculation we have 
$$\begin{aligned}
  A_2  &\le T^{\alpha_1-8}\sum\limits_{1\leq i\neq j \leq n}\frac{1}{\Stij^4\hat{S}_{T,i,i}^2\hat{S}_{T,j,j}^2}\Big|(\bv_i'\bv_j)^2(\bv_i'-\bw_i')(\bv_i-\bw_i)(\bv_j'-\bw_j')(\bv_j-\bw_j)-\\
  & \left[(\bv_i'-\bw_i')(\bv_j-\bw_j)\right]^2 \bv_i'\bv_i\bv_j'\bv_j\Big|^2\\
  &=T^{\alpha_1-8}\sum\limits_{1\leq i\neq j \leq n}\frac{1}{\Stij^4\hat{S}_{T,i,i}^2\hat{S}_{T,j,j}^2}\Big| -2(\bv_i'\bv_j)^2\bw_i'\bv_i\bv_j'\bv_j+(\bv_i'\bv_j)^2\bw_i'\bw_i\bv_j'\bv_j-2(\bv_i'\bv_j)^2\bv_i'\bv_i\bw_j'\bv_j\\&+4(\bv_i'\bv_j)^2\bw_i'\bv_i\bw_j'\bv_j-2(\bv_i'\bv_j)^2\bw_i'\bw_i\bw_j'\bv_j+(\bv_i'\bv_j)^2\bv_i'\bv_i\bw_j'\bw_j-2(\bv_i'\bv_j)^2\bw_i'\bv_i\bw_j'\bw_j\\&+(\bv_i'\bv_j)^2\bw_i'\bw_i\bw_j'\bw_j-(\bw_i'\bv_j)^2\bv_i'\bv_i\bv_j'\bv_j-(\bv_i'\bw¬_j)^2\bv_i'\bv_i\bv_j'\bv_j-(\bw_i'\bw_j)^2\bv_i'\bv_i\bv_j'\bv_j\\&+2\bv_i'\bv_j\bw_i'\bv_j\bv_i'\bv_i\bv_j'\bv_j+2\bv_i'\bv_j\bv_i'\bw_j\bv_i'\bv_i\bv_j'\bv_j-2\bv_i'\bv_j\bw_i'\bw_j\bv_i'\bv_i\bv_j'\bv_j-2\bw_i'\bv_j\bv_i'\bw_j\bv_i'\bv_i\bv_j'\bv_j\\&+2\bw_i'\bv_j\bw_i'\bw_j\bv_i'\bv_i\bv_j'\bv_j+2\bv_i'\bw_j\bw_i'\bw_j\bv_i'\bv_i\bv_j'\bv_j\Big|^2.\\
\end{aligned}$$
Define $RHS\stackrel{\Delta}{=} T^{{\alpha_1-8}}\sum\limits_{1\leq i\neq j \leq n}\frac{1}{\Stij^4\hat{S}_{T,i,i}^2\hat{S}_{T,j,j}^2}|\sum\limits_{m=1}^{17}A_{2,m}|^2,$ then
\begin{equation*}
    A_2\leq T^{\alpha_1-8}\sum\limits_{1\leq i\neq j \leq n}\frac{1}{\Stij^4\hat{S}_{T,i,i}^2\hat{S}_{T,j,j}^2}\sum\limits_{ m_1,m_2 }|\tau_{m_1,m_2}A_{2,m_1}A_{2,m_2}|, 
\end{equation*}
where $\tau_{m_1,m_2}$ is a constant only depending on $m_1$ and $m_2$.

Therefore, if $T^{\alpha_1-8}\sum\limits_{1\leq i\neq j \leq n}\frac{|A_{2,m_1}A_{2,m_2}|}{\Stij^4\hat{S}_{T,i,i}^2\hat{S}_{T,j,j}^2}=o_p(1)$ for any $ 1\leq m_1, m_2 \leq 17$ holds, then we can conclude that      $A_2=o_p(1)$.  Further, we only need to consider the case when $m_1=m_2$ by equality $2\cdot|A_{2,m_1}A_{2,m_2}|\leq |A_{2,m_1}|^2+|A_{2,m_2}|^2$, i.e. if we can show $T^{\alpha_1-8}\sum\limits_{1\leq i\neq j \leq n}\frac{|A_{2,m}|^2}{\Stij^4\hat{S}_{T,i,i}^2\hat{S}_{T,j,j}^2}=o_p(1)$ for any $1\leq m \leq 17$, then  $A_2=o_p(1)$ immediately holds. 
By Lemma \ref{lem:estimates}, we show that 
$$\begin{aligned}
T^{\alpha_1-8}\sum\limits_{1\leq i\neq j \leq n}\frac{|A_{2,1}|^2}{\Stij^4\hat{S}_{T,i,i}^2\hat{S}_{T,j,j}^2}&=T^{\alpha_1-8}\sum\limits_{1\leq i\neq j \leq n}\frac{|-2(\bv_i'\bv_j)^2\bw_i'\bv_i\bv_j'\bv_j|^2}{\Stij^4\hat{S}_{T,i,i}^2\hat{S}_{T,j,j}^2}\\&=T^{\alpha_1-8}\sum\limits_{1\leq i\neq j \leq n}|O_p(n^{\frac{3}{r_1}+2\epsilon_1+2\epsilon_2+2})|^2\\&=O_p(n^{\frac{6}{r_1}+4\epsilon_1+4\epsilon_2+\alpha_1-2})=o_p(1).
\end{aligned}$$
By similar calculations, we conclude that $$T^{\alpha_1-8}\sum\limits_{1\leq i\neq j \leq n}\frac{|A_{2,m}|^2}{\Stij^4\hat{S}_{T,i,i}^2\hat{S}_{T,j,j}^2}=O_p(n^{\frac{6}{r_1}+4\epsilon_1+4\epsilon_2+\alpha_1-2}) \quad m=1,2,3,6,$$
$$T^{\alpha_1-8}\sum\limits_{1\leq i\neq j \leq n}\frac{|A_{2,m}|^2}{\Stij^4\hat{S}_{T,i,i}^2\hat{S}_{T,j,j}^2}=O_p(n^{\frac{8}{r_1}+8\epsilon_1+4\epsilon_2+\alpha_1-4}) \quad m=4,5,7,8,$$$$T^{\alpha_1-8}\sum\limits_{1\leq i\neq j \leq n}\frac{|A_{2,m}|^2}{\Stij^4\hat{S}_{T,i,i}^2\hat{S}_{T,j,j}^2}=O_p(n^{\frac{4}{r_1}+8\epsilon_1+\alpha_1-2}) \quad m=9,10,11,15,16,17,$$
$$T^{\alpha_1-8}\sum\limits_{1\leq i\neq j \leq n}\frac{|A_{2,m}|^2}{\Stij^4\hat{S}_{T,i,i}^2\hat{S}_{T,j,j}^2}=O_p(n^{\frac{4}{r_1}+4\epsilon_1+2\epsilon_2+\alpha_1-1}) \quad m=12, 13, 14.$$
Therefore, we can conclude that $A_2=o_p(1)$.
\end{proof}

\noindent\textbf{Proof of Lemma \ref{clt:hete:4}}
\begin{proof}
By the Theorem 3.2 of \cite{Yin2021spectral}, there exist constants $\mu_{center,4}$, $\mu_{limit,4}$ and $\sigma_{PE}>0$ such that: $$tr(\R^4)-\mu_{center,4}\cvd N(\mu_{limit,4},\sigma_{PE,0}^2).$$ Applying the results in Example 3.2 of \cite{Yin2021spectral} with $g(x)=x^4$, we obtain $\mu_{center,4}=n+\frac{6n^2}{T-1}+\frac{6n^3}{(T-1)^2}+\frac{n^4}{(T-1)^2}$. For the case $g(x)=x^4$ and $\R=\mathbf{I}_n$, the results in Example 3.3 of \cite{Yin2021spectral} shows that $\mu_{limit}=-6c(1+c)^2-2c^2$ and $\sigma_{PE,0}^2=8c^2+96c^3(1+c)^2+16c^2(3c^2+8c+3)^2$. Finally, substituting $c$ with $c_T$ by Slutsky's theorem completes the proof.
\end{proof}

\noindent\textbf{Proof of Lemma \ref{op1:hete:4}}
\begin{proof}
	It is easy to verify that
	$$\begin{aligned}
	tr\left(\hat{\mathbf{R}}^4-\mathbf{R}^4\right)=&2\sum\limits_{i\neq j}\left( \hat{\rho}_{ij}^2-\rho_{ij}^2\right)+2\sum\limits_{i\neq j\neq l}\left( \hat{\rho}_{ij}^2\hat{\rho}_{jl}^2-\rho_{ij}^2\rho_{jl}^2\right)+\sum\limits_{i\neq j}\left( \hat{\rho}_{ij}^4-\rho_{ij}^4\right)\\
	&+\sum\limits_{i\neq j\neq l}\left(\hat{\rho}_{ij}\hat{\rho}_{jl}\hat{\rho}_{il}-\rho_{ij}\rho_{jl}\rho_{il}\right)+\sum\limits_{i\neq j\neq l\neq s} \left(\hat{\rho}_{ij}\hat{\rho}_{jl}\hat{\rho}_{ls}\hat{\rho}_{si}-\rho_{ij}\rho_{jl}\rho_{ls}\rho_{si} \right),\\
  \end{aligned}$$  
where $1\leq i,j,l,s \leq n$. We only aim to show that
\begin{enumerate}

    \item[\textbf{(i)}] $B\stackrel{\Delta}{=}\left|\sum\limits_{i\neq j\neq l}\left( \hat{\rho}_{ij}^2\hat{\rho}_{jl}^2-\rho_{ij}^2\rho_{jl}^2\right)\right|=o_p(1)$,
    \item[\textbf{(ii)}] 
    $C\stackrel{\Delta}{=}\left|\sum\limits_{i\neq j}\left( \hat{\rho}_{ij}^4-\rho_{ij}^4\right)\right|=o_p(1)$,
    \item[\textbf{(iii)}] $D\stackrel{\Delta}{=}\left|\sum\limits_{i\neq j\neq l}\left(\hat{\rho}_{ij}\hat{\rho}_{jl}\hat{\rho}_{il}-\rho_{ij}\rho_{jl}\rho_{il}\right)\right|=o_p(1)$,
    \item[\textbf{(iv)}]  $E\stackrel{\Delta}{=}\left|\sum\limits_{i\neq j\neq l\neq s} \left(\hat{\rho}_{ij}\hat{\rho}_{jl}\hat{\rho}_{ls}\hat{\rho}_{si}-\rho_{ij}\rho_{jl}\rho_{ls}\rho_{si} \right)\right|=o_p(1)$
\end{enumerate}
since  we have $\left|\sum\limits_{i\neq j}\left( \hat{\rho}_{ij}^2-\rho_{ij}^2\right)\right|=o_p(1)$ by Lemma 2.

\noindent\textbf{(i)} By direct calculation we have 
$$\begin{aligned}
  B&=\left|\sum\limits_{i\neq j\neq l}\left( \hat{\rho}_{ij}^2\hat{\rho}_{jl}^2-\rho_{ij}^2\rho_{jl}^2\right)\right| \\&=\left|\sum\limits_{i\neq j\neq l}\rho_{ij}^2\rho_{jl}^2 \left(\frac{\hat{\rho}_{ij}^2\hat{\rho}_{jl}^2}{\rho_{ij}^2\rho_{jl}^2}-1 \right)  \right|
  \\&\le \sqrt{\sum\limits_{i\neq j\neq l}\rho_{ij}^4\rho_{jl}^4} \cdot\sqrt{\sum\limits_{i\neq j\neq l}\left(\frac{\hat{\rho}_{ij}^2\hat{\rho}_{jl}^2}{\rho_{ij}^2\rho_{jl}^2}-1 \right) ^2} \\
 &= \sqrt{\frac1{T^{\alpha_2}}\sum\limits_{i\neq j\neq l}\rho_{ij}^4\rho_{jl}^4}\cdot \sqrt{T^{\alpha_2}\sum\limits_{i\neq j\neq l}\left( \frac{\Stij^2S^2_{T,j,l}\hat{S}^2_{T,j,j}\hat{S}_{T,i,i}\hat{S}_{T,l,l}-\hStij^2\hat{S}^2_{T,j,l}S^2_{T,j,j}S_{T,i,i}S_{T,l,l}}{\Stij^2S^2_{T,j,l}\hat{S}^2_{T,j,j}\hat{S}_{T,i,i}\hat{S}_{T,l,l}} \right)^2},\\
\end{aligned}$$
where constant $0<\alpha_2<1$. We show that:
\begin{enumerate}
    \item[\textbf{(i.1)}] $$B_1\stackrel{\Delta}{=}\left|\frac1{T^{\alpha_2}}\sum\limits_{i\neq j\neq l}\rho_{ij}^4\rho_{jl}^4\right|=o_p(1),$$
    \item[\textbf{(i.2)}] \begin{equation*}B_2\stackrel{\Delta}{=}\left|T^{\alpha_2}\sum\limits_{i\neq j\neq l}\left( \frac{\Stij^2S^2_{T,j,l}\hat{S}^2_{T,j,j}\hat{S}_{T,i,i}\hat{S}_{T,l,l}-\hStij^2\hat{S}^2_{T,j,l}S^2_{T,j,j}S_{T,i,i}S_{T,l,l}}{\Stij^2S^2_{T,j,l}\hat{S}^2_{T,j,j}\hat{S}_{T,i,i}\hat{S}_{T,l,l}} \right)^2\right|=o_p(1).
    \end{equation*} 
\end{enumerate}
\textbf{(i.1)} By lemma \ref{lem:max_rho}, we have 
\begin{align*}
  \left|  \frac1{T^{\alpha_2}}\sum\limits_{i\neq j\neq l}\rho_{ij}^4\rho_{jl}^4 \right| \leq O_p\left(\frac{(\logn)^4}{nT^{\alpha_2}}\right)=o_p(1).
\end{align*}
Therefore, $B_1=o_p(1)$ holds.

\noindent \textbf{(i.2)} By direct calculation, we have 
$$\begin{aligned}
  B_2  \le T^{\alpha_2-16}\sum\limits_{i\neq j\neq l}\frac{|I_{B_2}|^2}{(\Stij^2S^2_{T,j,l}\hat{S}^2_{T,j,j}\hat{S}_{T,i,i}\hat{S}_{T,l,l})^2},
\end{aligned}$$where
$$\begin{aligned}
  I_{B_2}=&(\bv_i'\bv_j)^2(\bv_j'\bv_l)^2\left[(\bv_j'-\bw_j')(\bv_j-\bw_j)\right]^2(\bv_i'-\bw_i')(\bv_i-\bw_i)(\bv_l'-\bw_l')(\bv_l-\bw_l)\\&-\left[(\bv_i'-\bw_i')(\bv_j-\bw_j)\right]^2\left[(\bv_j'-\bw_j')(\bv_l-\bw_l)\right]^2(\bv_j'\bv_j)^2\bv_i'\bv_i\bv_l'\bv_l.\\
\end{aligned}$$
Consequently, 
$$\begin{aligned}
 |B_2|\leq & T^{\alpha_2-16}\sum\limits_{i\neq j\neq l}\frac1{(\Stij^2S^2_{T,j,l}\hat{S}^2_{T,j,j}\hat{S}_{T,i,i}\hat{S}_{T,l,l})^2}\Big|\sum\limits_{m=1}^{153}B_{2,m}\Big|^2\\=& T^{\alpha_2-16}\sum\limits_{i\neq j\neq l}\frac1{(\Stij^2S^2_{T,j,l}\hat{S}^2_{T,j,j}\hat{S}_{T,i,i}\hat{S}_{T,l,l})^2}\Big|\sum\limits_{m_1,m_2}\eta_{m_1,m_2}B_{2,m_1}B_{2,m_2}\Big|,
\end{aligned}$$
 where $\eta_{m_1,m_2}$ is a constant only depending on $m_1$ and $m_2$. By the same arguments in the proof of Lemma 2, we only need to show that $$T^{\alpha_2-16}\sum\limits_{i\neq j\neq l}\frac{|B_{2,m}|^2}{(\Stij^2S^2_{T,j,l}\hat{S}^2_{T,j,j}\hat{S}_{T,i,i}\hat{S}_{T,l,l})^2}=o_p(1),$$ for any $1\leq m \leq 153$. By Lemma \ref{lem:estimates}, for $1\leq m \leq 53$, one can easily show that the stochastic order dominating terms are $$ T^{\alpha_2-16}\sum\limits_{i\neq j\neq l}\frac{\left|-2(\bv_i'\bv_j)^2(\bv_j'\bv_l)^2(\bv_j'\bv_j)^2\bw_i'\bv_i\bv_l'\bv_l\right|^2}{({\Stij^2S^2_{T,j,l}\hat{S}^2_{T,j,j}\hat{S}_{T,i,i}\hat{S}_{T,l,l}})^2},$$$$ T^{\alpha_2-16}\sum\limits_{i\neq j\neq l}\frac{\left|(\bv_i'\bv_j)^2(\bv_j'\bv_l)^2(\bv_j'\bv_j)^2\bw_i'\bw_i\bv_l'\bv_l\right|^2}{({\Stij^2S^2_{T,j,l}\hat{S}^2_{T,j,j}\hat{S}_{T,i,i}\hat{S}_{T,l,l}})^2} ,  $$
$$ T^{\alpha_2-16}\sum\limits_{i\neq j\neq l}\frac{\left|-2(\bv_i'\bv_j)^2(\bv_j'\bv_l)^2(\bv_j'\bv_j)^2\bv_i'\bv_i\bw_l'\bw_l\right|^2}{{({\Stij^2S^2_{T,j,l}\hat{S}^2_{T,j,j}\hat{S}_{T,i,i}\hat{S}_{T,l,l}})^2} } ,$$$$ T^{\alpha_2-16}\sum\limits_{i\neq j\neq l}\frac{\left|(\bv_i'\bv_j)^2(\bv_j'\bv_l)^2(\bv_j'\bv_j)^2\bv_j'\bv_j\bw_l'\bw_l\right|^2}{{(\Stij^2S^2_{T,j,l}\hat{S}^2_{T,j,j}\hat{S}_{T,i,i}\hat{S}_{T,l,l})^2}},$$
which have the same order $O_p\Big(n^{\frac{10}{r_1}+4\epsilon_1+8\epsilon_2+\alpha_2-3}\Big)=o_p(1)$. For $54\leq m \leq 153$, stochastic order dominating terms have the same order of $$T^{\alpha_2-16}\sum\limits_{i\neq j\neq l}\frac{\left|(\bv_i'\bv_j)^2\bv_j'\bv_l\bw_j'\bv_l(\bv_j'\bv_j)^2\bv_i'\bv_i\bv_l'\bv_l\right|^2}{{(\Stij^2S^2_{T,j,l}\hat{S}^2_{T,j,j}\hat{S}_{T,i,i}\hat{S}_{T,l,l})^2}},$$ whose order are $O_p(n^{\frac{8}{r_1}+8\epsilon_1+6\epsilon_2+\alpha_2-1})=o_p(1)$. Therefore, we can conclude that $B_2=o_p(1)$.\\

\noindent\textbf{(ii)} By direct calculation we have 
$$\begin{aligned}
  C&=\left|\sum\limits_{i\neq j}\left( \hat{\rho}_{ij}^4-\rho_{ij}^4\right)\right| \\&=\left|\sum\limits_{i\neq j}\rho_{ij}^4 \left(\frac{\hat{\rho}_{ij}^4}{\rho_{ij}^4}-1 \right)  \right|
  \\& \le \sqrt{\sum\limits_{i\neq j}\rho_{ij}^8} \cdot\sqrt{\sum\limits_{i\neq j}\left(\frac{\hat{\rho}_{ij}^4}{\rho_{ij}^4}-1 \right) ^2} \\
 &= \sqrt{\frac1{T^{\alpha_3}}\sum\limits_{i\neq j}\rho_{ij}^8}\cdot \sqrt{T^{\alpha_3}\sum\limits_{i\neq j}\left( \frac{\hStij^4 S_{T,i,i}^2S_{T,j,j}^2-\Stij^4\hat{S}^2_{T,i,i}\hat{S}^2_{T,j,j}}{\Stij^4\hat{S}^2_{T,i,i}\hat{S}^2_{T,j,j}} \right)^2},\\
\end{aligned}$$
where constant $0<\alpha_3<1$. We show that:
\begin{enumerate}
    \item[\textbf{(ii.1)}] $$C_1\stackrel{\Delta}{=}\left|\frac1{T^{\alpha_3}}\sum\limits_{i\neq j}\rho_{ij}^8\right|=o_p(1),$$
    \item[\textbf{(ii.2)}]  $$C_2\stackrel{\Delta}{=}\left|T^{\alpha_3}\sum\limits_{i\neq j}\left( \frac{\hStij^4 S_{T,i,i}^2S_{T,j,j}^2-\Stij^4\hat{S}^2_{T,i,i}\hat{S}^2_{T,j,j}}{\Stij^4\hat{S}^2_{T,i,i}\hat{S}^2_{T,j,j}} \right)^2\right|=o_p(1).$$
\end{enumerate}
\textbf{(ii.1)} By lemma \ref{lem:max_rho}, we have 
\begin{align*}
 \left|  \frac1{T^{\alpha_3}}\sum\limits_{i\neq j}\rho_{ij}^8 \right| \leq O_p\left(\frac{(\logn)^4}{n^2T^{\alpha_3}}\right)=o_p(1).
\end{align*}
Therefore, $C_1=o_p(1)$ holds.

\noindent \textbf{(ii.2)} By direct calculation, we have 
$$\begin{aligned}
  C_2 \le T^{\alpha_3-16}\sum\limits_{i\neq j}\frac{|I_{C_2}|^2}{\Stij^8\hat{S}^4_{T,i,i}\hat{S}^4_{T,j,j}},
\end{aligned}$$where
$$\begin{aligned}
I_{C_2}=&\left[(\bv_i'-\bw_i')(\bv_j-\bw_j)\right]^4(\bv_i'\bv_i)^2(\bv_j'\bv_j)^2
  &-(\bv_i'\bv_j)^4\left[(\bv_i'-\bw_i')(\bv_i-\bw_i)\right]^2\left[(\bv_j'-\bw_j')(\bv_j-\bw_j)\right]^2.
  \end{aligned}$$
Consequently, 
$$\begin{aligned}
 C_2&\leq T^{\alpha_3-16}\sum\limits_{i\neq j}\frac1{\Stij^8\hat{S}^4_{T,i,i}\hat{S}^4_{T,j,j}}|\sum\limits_{m=1}^{68}C_{2,m}|^2\\&= T^{\alpha_3-16}\sum\limits_{i\neq j}\frac1{\Stij^8\hat{S}^4_{T,i,i}\hat{S}^4_{T,j,j}}|\sum\limits_{m_1,m_2}\delta_{m_1,m_2}C_{2,m_1}C_{2,m_2}|,
\end{aligned}$$
 where $\delta_{m_1,m_2}$ is a constant only depending on $m_1$ and $m_2$. By the same arguments in the proof of Lemma 2, we only need to show that $$ T^{\alpha_3-16}\sum\limits_{i\neq j}\frac{|C_{2,m}|^2}{\Stij^8\hat{S}^4_{T,i,i}\hat{S}^4_{T,j,j}}=o_p(1),$$ for any $1\leq m \leq 68$.
By Lemma \ref{lem:estimates}, for $1\leq m \leq 32$, one can show that the stochastic order dominating terms are $$ T^{\alpha_3-16}\sum\limits_{i\neq j}\frac{\left|-4(\bv_i'\bv_j)^3\bw_i'\bv_j(\bv_i'\bv_i)^2(\bv_j'\bv_j)^2\right|^2}{\Stij^8\hat{S}^4_{T,i,i}\hat{S}^4_{T,j,j}}, $$ $$T^{\alpha_3-16}\sum\limits_{i\neq j}\frac{\left|-4(\bv_i'\bv_j)^3\bv_i'\bw_j(\bv_i'\bv_i)^2(\bv_j'\bv_j)^2\right|^2}{\Stij^8\hat{S}^4_{T,i,i}\hat{S}^4_{T,j,j}} ,  $$ and
$$ T^{\alpha_3-16}\sum\limits_{i\neq j}\frac{\left|-4(\bv_i'\bv_j)^3\bw_i'\bw_j(\bv_i'\bv_i)^2(\bv_j'\bv_j)^2\right|^2}{{\Stij^8\hat{S}^4_{T,i,i}\hat{S}^4_{T,j,j}} } ,$$
which have the same order $O_p(n^{\frac{10}{r_1}+4\epsilon_1+8\epsilon_2+\alpha_3-4})=o_p(1)$. For $33\leq m \leq 68$, stochastic order dominating terms have the same order of $$T^{\alpha_3-16}\sum\limits_{i\neq j}\frac{\left|-4(\bv_i'\bv_j)^4(\bv_i'\bv_i)^2\bv_j'\bv_j\bw_j'\bv_j\right|^2}{\Stij^8\hat{S}^4_{T,i,i}\hat{S}^4_{T,j,j}},$$ whose order is $O_p(n^{\frac{8}{r_1}+8\epsilon_1+6\epsilon_2+\alpha_2-1})=o_p(1)$. Therefore, we can conclude that $C_2=o_p(1)$. \\

\noindent\textbf{(iii)}By direct calculation, we have 
$$\begin{aligned}
  D\stackrel{\Delta}{=}\left|\sum\limits_{i\neq j\neq l}\left(\hat{\rho}_{ij}\hat{\rho}_{jl}\hat{\rho}_{il}-\rho_{ij}\rho_{jl}\rho_{il}\right)\right|& =\left|\sum\limits_{i\neq j\neq l}\rho_{ij}\rho_{jl}\rho_{il} \left(\frac{\hat{\rho}_{ij}\hat{\rho}_{jl}\hat{\rho}_{il}}{\rho_{ij}\rho_{jl}\rho_{il}}-1 \right)  \right|\\
  &\le \sqrt{\sum\limits_{i\neq j\neq l}\rho_{ij}^2\rho_{jl}^2\rho_{il}^2} \cdot\sqrt{\sum\limits_{i\neq j\neq l}\left(\frac{\hat{\rho}_{ij}\hat{\rho}_{jl}\hat{\rho}_{il}}{\rho_{ij}\rho_{jl}\rho_{il}}-1 \right) ^2} .\\
\end{aligned}$$
For constant $0<\alpha_4<1$, we show that:
\begin{enumerate}
    \item[\textbf{(iii.1)}] $$D_1\stackrel{\Delta}{=}\left|\frac1{T^{\alpha_4}}\sum\limits_{i\neq j\neq l}\rho_{ij}^2\rho_{jl}^2\rho_{il}^2\right|=o_p(1).$$
    \item[\textbf{(iii.2)}]  $$D_2\stackrel{\Delta}{=}\left|T^{\alpha_4-12}\sum\limits_{i\neq j\neq l}\left( \frac{\hat{S}_{T,i,j}\hat{S}_{T,j,l}\hat{S}_{T,i,l}S_{T,i,i}S_{T,j,j}S_{T,l,l}-S_{T,i,j}S_{T,j,l}S_{T,i,l}\hat{S}_{T,i,i}\hat{S}_{T,j,j}\hat{S}_{T,l,l}}{S_{T,i,j}S_{T,j,l}S_{T,i,l}\hat{S}_{T,i,i}\hat{S}_{T,j,j}\hat{S}_{T,l,l} }\right)^2\right|=o_p(1).$$
\end{enumerate}
\noindent\textbf{(iii.1)} By lemma \ref{lem:max_rho}, we have
\begin{align*}
    \left|\frac1{T^{\alpha_4}}\sum\limits_{i\neq j\neq l}\rho_{ij}^2\rho_{jl}^2\rho_{il}^2\right|\leq\frac1{T^{\alpha_4}}O_p\left((\frac{\logn}{n})^3\cdot n^3 \right)=O_p\Big(\frac{(\logn)^3}{T^{\alpha_4}}\Big)=o_p(1).
\end{align*}Therefore, $F_1=o_p(1)$ holds.

\noindent\textbf{(iii.2)} By direct calculation, we have 
$$\begin{aligned}
D_2 \le T^{\alpha_4-12}\sum\limits_{i\neq j\neq l}\frac{|I_{D_2}|^2}{(S_{T,i,j}S_{T,j,l}S_{T,i,l}\hat{S}_{T,i,i}\hat{S}_{T,j,j}\hat{S}_{T,l,l})^2},
\end{aligned}$$
where
$$\begin{aligned}
I_{D_2}&=(\bv_i'-\bw_i')(\bv_j-\bw_j)(\bv_j'-\bw_j')(\bv_l-\bw_l)(\bv_i'-\bw_i')(\bv_l-\bw_l)\bv_i'\bv_i\bv_j'\bv_j\bv_l'\bv_l\\
  &-\bv_i'\bv_j\bv_j'\bv_l\bv_i'\bv_l(\bv_i'-\bw_i')(\bv_i-\bw_i)(\bv_j'-\bw_j')(\bv_j-\bw_j)(\bv_l'-\bw_l')(\bv_l-\bw_l).
\end{aligned}$$
Thus
$$\begin{aligned}
 D_2&\leq T^{\alpha_4-12}\sum\limits_{i\neq j\neq l}\frac{1}{(S_{T,i,j}S_{T,j,l}S_{T,i,l}\hat{S}_{T,i,i}\hat{S}_{T,j,j}\hat{S}_{T,l,l})^2}\Big|\sum\limits_{m=1}^{80}D_{2,m}\Big|^2\\
 &=T^{\alpha_4-12}\sum\limits_{i\neq j\neq l}\frac{1}{(S_{T,i,j}S_{T,j,l}S_{T,i,l}\hat{S}_{T,i,i}\hat{S}_{T,j,j}\hat{S}_{T,l,l})^2}\Big|\sum\limits_{m_1,m_2}\xi_{m_1,m_2}D_{2,m_1}D_{2,m_2}\Big|,
\end{aligned}$$
 where $\xi_{m_1,m_2}$ is a constant only depending on $m_1$ and $m_2$. By the same arguments in the proof of Lemma 2, we only need to show that $$T^{\alpha_4-12}\sum\limits_{i\neq j\neq l}\frac{|D_{2,m}|^2}{(S_{T,i,j}S_{T,j,l}S_{T,i,l}\hat{S}_{T,i,i}\hat{S}_{T,j,j}\hat{S}_{T,l,l})^2}=o_p(1),$$ for any $1\leq m \leq 80$.
By Lemma \ref{lem:estimates}, for $1\leq m \leq 63$, one can  show that the stochastic order dominating terms have the same order of  $$T^{\alpha_4-12}\sum\limits_{i\neq j\neq l}\frac{\left|-\bv_i'\bv_j\bv_j'\bv_l\bw_i'\bv_l\bv_i'\bv_i\bv_j'\bv_j\bv_l'\bv_l\right|^2}{{(S_{T,i,j}S_{T,j,l}S_{T,i,l}\hat{S}_{T,i,i}\hat{S}_{T,j,j}\hat{S}_{T,l,l})^2}} ,  $$
whose orders are $O_p(n^{\frac{6}{r_1}+4\epsilon_1+4\epsilon_2+\alpha_4-1})=o_p(1)$. For $64\leq m \leq 80$, the stochastic order dominating terms have the same order of  $$T^{\alpha_4-12}\sum\limits_{i\neq j\neq l}\frac{\left|-\bv_i'\bv_j\bv_j'\bv_l\bv_i'\bv_l\bv_i'\bv_i\bv_j'\bv_j\bv_l'\bw_l\right|^2}{{(S_{T,i,j}S_{T,j,l}S_{T,i,l}\hat{S}_{T,i,i}\hat{S}_{T,j,j}\hat{S}_{T,l,l})^2}} ,  $$
whose orders are $O_p(n^{\frac{8}{r_1}+4\epsilon_1+6\epsilon_2+\alpha_4-2})=o_p(1)$. Therefore, we can conclude that $D_2=o_p(1).$\\

\noindent \textbf{(iv)} By direct calculation, we have 
$$\begin{aligned}
  E\stackrel{\Delta}{=}\left|\sum\limits_{i\neq j\neq l\neq s} \left(\hat{\rho}_{ij}\hat{\rho}_{jl}\hat{\rho}_{ls}\hat{\rho}_{si}-\rho_{ij}\rho_{jl}\rho_{ls}\rho_{si} \right)\right|& =\left|\sum\limits_{i\neq j\neq l\neq s}\rho_{ij}\rho_{jl}\rho_{ls}\rho_{si} \left(\frac{\hat{\rho}_{ij}\hat{\rho}_{jl}\hat{\rho}_{ls}\hat{\rho}_{si}}{\rho_{ij}\rho_{jl}\rho_{ls}\rho_{si}}-1 \right)  \right|\\
  &\le \sqrt{\sum\limits_{i\neq j\neq l\neq s}\rho_{ij}^2\rho_{jl}^2\rho_{ls}^2\rho_{si}^2} \cdot\sqrt{\sum\limits_{i\neq j\neq l\neq s}\left(\frac{\hat{\rho}_{ij}\hat{\rho}_{jl}\hat{\rho}_{ls}\hat{\rho}_{si}}{\rho_{ij}\rho_{jl}\rho_{ls}\rho_{si}}-1 \right) ^2}. \\
\end{aligned}$$
For constant $0<\alpha_5<1$, we show that:
\begin{enumerate}
    \item[\textbf{(iv.1)}] $$E_1\stackrel{\Delta}{=}\left|\frac1{T^{\alpha_5}}\sum\limits_{i\neq j\neq l\neq s}\rho_{ij}^2\rho_{jl}^2\rho_{ls}^2\rho_{si}^2\right|=o_p(1),$$
    \item[\textbf{(iv.2)}] 
 $$E_2\stackrel{\Delta}{=}\left|T^{\alpha_5}\sum\limits_{i\neq j\neq l\neq s}\left(\frac{\hat{\rho}_{ij}\hat{\rho}_{jl}\hat{\rho}_{ls}\hat{\rho}_{si}}{\rho_{ij}\rho_{jl}\rho_{ls}\rho_{si}}-1 \right)^2 \right|=o_p(1).$$

\end{enumerate}
\textbf{(iv.1)} By lemma \ref{lem:max_rho}, we have
\begin{align*}
   \left|\frac1{T^{\alpha_5}}\sum\limits_{i\neq j\neq l\neq s}\rho_{ij}^2\rho_{jl}^2\rho_{ls}^2\rho_{si}^2\right|\leq \frac1{T^{\alpha_5}}O_p\left((\frac{\logn}{n})^4\cdot n^4\right)=O_p\left(\frac{(\logn)^4}{T^{\alpha_5}}\right)=o_p(1).
\end{align*}
Therefore, $E_1=o_p(1)$ holds.

\textbf{(ii.2)} By direct calculation, we have 
$$\begin{aligned}
  \left|E_2 \right| \\=&\left|T^{\alpha_5-16}\sum\limits_{i\neq j\neq l\neq s}\left( \frac{\hat{S}_{T,i,j}\hat{S}_{T,j,l}\hat{S}_{T,l,s}\hat{S}_{T,s,i}S_{T,i,i}S_{T,j,j}S_{T,l,l}S_{T,s,s}-S_{T,i,j}S_{T,j,l}S_{T,l,s}S_{T,s,i}\hat{S}_{T,i,i}\hat{S}_{T,j,j}\hat{S}_{T,l,l}\hat{S}_{T,s,s}}{S_{T,i,j}S_{T,j,l}S_{T,l,s}S_{T,s,i}\hat{S}_{T,i,i}\hat{S}_{T,j,j}\hat{S}_{T,l,l}\hat{S}_{T,s,s}} \right)^2\right|\\\le& T^{\alpha_5-16}\sum\limits_{i\neq j\neq l\neq s}\frac{|I_{E_2}|^2}{(S_{T,i,j}S_{T,j,l}S_{T,l,s}S_{T,s,i}\hat{S}_{T,i,i}\hat{S}_{T,j,j}\hat{S}_{T,l,l}\hat{S}_{T,s,s})^2},
\end{aligned}$$
where
$$\begin{aligned}
I_{E_2}&=(\bv_i'-\bw_i')(\bv_j-\bw_j)(\bv_j'-\bw_j')(\bv_l-\bw_l)(\bv_l'-\bw_l')(\bv_s-\bw_s)(\bv_s'-\bw_s')(\bv_i-\bw_i)\bv_i'\bv_i\bv_j'\bv_j\bv_l'\bv_l\bv_s'\bv_s\\
  &-\bv_i'\bv_j\bv_j'\bv_l\bv_l'\bv_s\bv_s'\bv_i(\bv_i'-\bw_i')(\bv_i-\bw_i)(\bv_j'-\bw_j')(\bv_j-\bw_j)(\bv_l'-\bw_l')(\bv_l-\bw_l)(\bv_s'-\bw_s')(\bv_s-\bw_s).
\end{aligned}$$
Thus
$$\begin{aligned}
 E_2&\leq  T^{\alpha_5-16}\sum\limits_{i\neq j\neq l\neq s}\frac1{(S_{T,i,j}S_{T,j,l}S_{T,l,s}S_{T,s,i}\hat{S}_{T,i,i}\hat{S}_{T,j,j}\hat{S}_{T,l,l}\hat{S}_{T,s,s})^2}\Big|\sum\limits_{m=1}^{336}E_{2,m}\Big|^2\\
 &=  T^{\alpha_5-16}\sum\limits_{i\neq j\neq l\neq s}\frac1{(S_{T,i,j}S_{T,j,l}S_{T,l,s}S_{T,s,i}\hat{S}_{T,i,i}\hat{S}_{T,j,j}\hat{S}_{T,l,l}\hat{S}_{T,s,s})^2}\Big|\sum\limits_{m_1,m_2}\lambda_{m_1,m_2}E_{2,m_1}E_{2,m_1}\Big|,
\end{aligned}$$
 where $\lambda_{m_1,m_2}$ is a constant only depending on $m_1$ and $m_2$. By the same arguments in the proof of Lemma 2, we only need to show that $$ T^{\alpha_5-16}\sum\limits_{i\neq j\neq l\neq s}\frac{|E_{2,m}|^2}{(S_{T,i,j}S_{T,j,l}S_{T,l,s}S_{T,s,i}\hat{S}_{T,i,i}\hat{S}_{T,j,j}\hat{S}_{T,l,l}\hat{S}_{T,s,s})^2}=o_p(1),$$ for any $1\leq m \leq 336$. By Lemma \ref{lem:estimates}, for $1\leq m \leq 255$, stochastic order dominating terms have the same order of $$ T^{\alpha_5-16}\sum\limits_{i\neq j\neq l\neq s}\frac{\left|\bv_i'\bv_j\bv_j'\bv_l\bv_l'\bv_s\bv_s'\bw_i\bv_i'\bv_i\bv_j'\bv_j\bv_l'\bv_l\bv_s'\bv_s\right|^2}{{(S_{T,i,j}S_{T,j,l}S_{T,l,s}S_{T,s,i}\hat{S}_{T,i,i}\hat{S}_{T,j,j}\hat{S}_{T,l,l}\hat{S}_{T,s,s})^2}},$$
whose orders are $O_p(n^{\frac{8}{r_1}+4\epsilon_1+6\epsilon_2+\alpha_5-1})=o_p(1)$.
For $256\leq m \leq 336$, one can  show that the stochastic order dominating terms have the same order of $$ T^{\alpha_5-16}\sum\limits_{i\neq j\neq l\neq s}\frac{\left|-2\bv_i'\bv_j\bv_j'\bv_l\bv_l'\bv_s\bv_s'\bv_i\bv_i'\bv_i\bv_j'\bv_j\bv_l'\bv_l\bv_s'\bw_s\right|^2}{{(S_{T,i,j}S_{T,j,l}S_{T,l,s}S_{T,s,i}\hat{S}_{T,i,i}\hat{S}_{T,j,j}\hat{S}_{T,l,l}\hat{S}_{T,s,s})^2}} ,  $$
whose orders are $O_p(n^{\frac{10}{r_1}+4\epsilon_1+8\epsilon_2+\alpha_5-2})=o_p(1)$. Therefore, we can conclude that $E_2=o_p(1)$. Finally, proof of proposition 2 is completed.
\end{proof}

\noindent\textbf{Proof of Theorem 3 and 4}

For the dynamic panel data model, let $\Z_i=(\z_{i1},\dots,\z_{iT})$, then  $\hat{\pphi}_i=\left(\Z_i\Z_i^{\prime}\right)^{-1}\Z_i\Y_i$ is the OLS estimator and the residuals are given by $\hat{\vv}_{it}=v_{it}-\z_{it}^{\prime}\left(\hat{\pphi}_i-\pphi_i\right)$. In vector form, $\hat{\bv}_i=\bv_i-\Z_i^{\prime}\left(\hat{\pphi}_i-\pphi_i\right)$. Define $\hat{\hat{\mathbf{V}}}=\begin{pmatrix} \hat{\bv}_{1} , \cdots,  \hat{\bv}_{n} \end{pmatrix}$, $\hat{\hat{\bw}}_i=\Z_i^{\prime}\left(\hat{\pphi}_i-\pphi_i\right)$ and $\hat{\hat{\mathbf{W}}}=\begin{pmatrix} \hat{\hat{\bw}}_{1} , \cdots,  \hat{\hat{\bw}}_{n} \end{pmatrix}$. Using this notation,  $\hat{\bv}_i=\bv_i-\hat{\hat{\bw}}_i$. Replacing $\hat{\bbeta}_i$ and $\X_i$ with $\hat{\pphi}_i$ and $\Z_i$, respectively, the proofs of Theorem  3 and 4 follow along the same arguments above, that is, we only need to verify that (a) and (b) in Lemma \ref{lem:estimates} still hold for the dynamic panel data model.
\begin{lemma}\label{lem:estimates:dynamic}
  Under  
  Assumptions 1, 2, 3, 4 and 5, for any $\epsilon>0$ and some integer $r_1\geq 3$, i.e. $E|v_{it}|^{r_1}<\infty$
  \begin{enumerate}
  \item[(a)\ ]  $\displaystyle \max_{1\le i ,  j \le n}    |\inn{\bv_i}{\hat{\hat{\bw}}_j}|=O_p (n^{\frac{1}{r_1}+2\epsilon_1})$.
  \item[(b)\ ]  $\displaystyle \max_{1\le i, j \le n}    |\inn{\hat{\hat{\bw}}_i}{\hat{\hat{\bw}}_j}|=O_p (n^{\frac{1}{r_1}+2\epsilon_1})$.
 
  \end{enumerate}
  \end{lemma}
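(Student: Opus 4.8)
The plan is to mirror the static proof of Lemma \ref{lem:estimates}(a)--(b) line by line, isolating the single place where the lagged dependent variable forces a genuinely new argument. Writing $\boldsymbol{\xi}_i \define (\Z_i\Z_i')^{-1/2}\Z_i\bv_i = (\Z_i\Z_i')^{-1/2}\sum_{t=1}^{T}\z_{it}v_{it}$, the same algebra as in the static case gives, for the diagonal term, $\inn{\bv_i}{\hat{\hat{\bw}}_i} = \bv_i'\Z_i'(\Z_i\Z_i')^{-1}\Z_i\bv_i = \|\boldsymbol{\xi}_i\|^2$, while for (b) the Cauchy--Schwarz factorization used in Lemma \ref{lem:estimates}(b) reduces $\max_{i,j}|\inn{\hat{\hat{\bw}}_i}{\hat{\hat{\bw}}_j}|$ to $\big(\max_{i}\|\boldsymbol{\xi}_i\|\,\|\Z_i\Z_i'\|^{-1/2}\|\Z_i\|\big)^2$; the off-diagonal $i\ne j$ cases are handled identically. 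Thus both (a) and (b) follow once I establish (i) the uniform design bounds $\|\Z_i\Z_i'\|^{-1/2}=O_p(T^{-1/2})$ and $\|\Z_i\|=O_p(T^{1/2})$, and (ii) the maximal estimate $\max_{1\le i\le n}\|\boldsymbol{\xi}_i\| = O_p(n^{1/(2r_1)+\epsilon_1})$.

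For (i), I would factor $\boldsymbol{\xi}_i = \big(\tfrac1T\Z_i\Z_i'\big)^{-1/2}\cdot\tfrac1{\sqrt T}\sum_{t=1}^{T}\z_{it}v_{it}$. Under Assumption \ref{assum:se}(i), $\{\z_{it}\}$ is stationary and ergodic, so the ergodic theorem yields $\tfrac1T\Z_i\Z_i'\cvp \Sigma_i \define E[\z_{it}\z_{it}']$; Assumptions \ref{assum:se}(ii) and \ref{assum:design}(ii) guarantee that these limits are positive definite with bounded inverse uniformly in $i$, whence $\max_i\|(\tfrac1T\Z_i\Z_i')^{-1/2}\|=O_p(1)$ and $\|\Z_i\|=O_p(T^{1/2})$. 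Consequently $\|\boldsymbol{\xi}_i\|\le O_p(1)\cdot\|\tfrac1{\sqrt T}\sum_t\z_{it}v_{it}\|$ uniformly, and the problem collapses to the maximal order of the normalized score $\tfrac1{\sqrt T}\sum_t\z_{it}v_{it}$.

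The hard part will be step (ii), because $\z_{it}$ contains $y_{i,t-1}$ and hence the summands $\z_{it}v_{it}$ are no longer i.i.d. across $t$, so Lemma \ref{lem:max_order}---whose proof rests on the i.i.d. Berry--Esseen bound of Lemma \ref{lem:petrov}---cannot be invoked verbatim. My first move would be to verify the martingale structure: combining the weak exogeneity of Assumption \ref{assum:design}(i) with the i.i.d. error Assumption \ref{assum:disturb}, the vector $\z_{it}$ is measurable with respect to $\mathcal{G}_{i,t}=\sigma(\x_{is},\,s\le t;\ v_{is},\,s<t)$ and $E(v_{it}\mid\mathcal{G}_{i,t})=0$, so $\{\z_{it}v_{it}\}_t$ is a martingale difference sequence. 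I would then replace the i.i.d. Berry--Esseen input by its martingale counterpart in two layers: a Rosenthal/Burkholder inequality supplies, under stationarity and the finite moments of $v_{it}$ and of $y_{i,t-1}$ (the latter finite because $|\alpha_i|<1$ gives the autoregression enough moments), a uniform bound $E\|\tfrac1{\sqrt T}\sum_t\z_{it}v_{it}\|^{r_1}\le C$, and a martingale Berry--Esseen estimate furnishes the Gaussian-type tail needed to carry the union bound over the $n$ units through at the sharp exponent. The latter refinement is essential: a crude moment-plus-union-bound argument only yields the rate $n^{1-c r_1}$ with $c=\tfrac1{2r_1}+\epsilon_1$, which is not $o_p(1)$, so the Gaussian tail of the martingale CLT is exactly what is needed to recover $\max_i\|\boldsymbol{\xi}_i\|=O_p(n^{1/(2r_1)+\epsilon_1})$, reproducing the static rate. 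With (i) and (ii) in hand, (a) is immediate from $\inn{\bv_i}{\hat{\hat{\bw}}_i}=\|\boldsymbol{\xi}_i\|^2$ and (b) from the Cauchy--Schwarz factorization, exactly as in Lemma \ref{lem:estimates}.
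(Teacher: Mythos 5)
Your proposal is correct in substance and follows the same overall route as the paper: mirror the static Lemma \ref{lem:estimates}, reduce everything to maximal bounds on the OLS score, and handle the lagged dependent variable through martingale limit theory. The differences are in execution. The paper does not work with $\boldsymbol{\xi}_i=(\Z_i\Z_i')^{-1/2}\Z_i\bv_i$ as a single object; instead it bounds $|\inn{\bv_i}{\hat{\hat{\bw}}_i}|\le\|\bv_i'\Z_i'\|\,\|\hat{\pphi}_i-\pphi_i\|$, splits $\bv_i'\Z_i'=(\bv_i'\Y_i,\bv_i'\X_i')$, reuses the static argument for the $\X_i$ block, and applies a martingale CLT (Corollary 2.1.10 of \cite{duflo2013random}) only to $\bv_i'\Y_i=\sum_{t}y_{i,t-1}v_{it}$, checking $\langle M\rangle_T/T\cvp\sigma^2E(y_{i,0}^2)$ and a Lyapunov condition under Assumption \ref{assum:se}(i); for part (b) it avoids any lower bound on $\tfrac1T\Z_i\Z_i'$ by combining $\max_i\|\Y_i\|=O_p(\sqrt T)$, taken directly from Assumption \ref{assum:se}(ii), with the bound on $\max_i\|\hat{\pphi}_i-\pphi_i\|$ cited from Lemma \ref{lem:estimates}. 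Your treatment of the whole vector $\z_{it}v_{it}$ as one martingale difference sequence is an equivalent packaging, but note that your claim that Assumptions \ref{assum:design}(ii) and \ref{assum:se}(ii) give a uniformly bounded inverse for $E[\z_{it}\z_{it}']$ overstates what is assumed — Assumption \ref{assum:se}(ii) is only an upper bound, and nonsingularity of the joint second-moment matrix (no collinearity between $y_{i,t-1}$ and $\x_{it}$) is nowhere imposed; the paper hides the same implicit requirement behind its citation of Lemma \ref{lem:estimates}. Finally, your observation that a plain CLT is not enough — that one needs a martingale Berry--Esseen rate with polynomial tail decay to push the union bound over $n$ units through at the exponent $n^{1/(2r_1)+\epsilon_1}$ — is a genuine refinement rather than a deviation: the paper, after establishing the Duflo CLT, simply invokes the maximal-order bound of Lemma \ref{lem:max_order}, whose proof rests on the i.i.d. Berry--Esseen inequality of Lemma \ref{lem:petrov}, and never supplies the corresponding rate for the martingale case. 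In that one respect your sketch is more careful than the published proof, though you would still need to pin down a nonuniform martingale Berry--Esseen bound of the required strength to complete it.
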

  \begin{proof}
(a). Firstly, for the case $i=j$
$$\begin{aligned}
    \max_{1\leq i \leq n}|\inn{\bv_i}{\hat{\hat{\bw}}_i}|= \max_{1\leq i \leq n}|\bv_i'\Z_i'(\hat{\pphi}_i-\pphi_i)|\leq \max_{1\leq i \leq n}\|\bv_i'\Z_i'\|\max_{1\leq i \leq n}\|\hat{\pphi}_i-\pphi_i\|.
\end{aligned}$$
We have $\max_{1\leq i \leq n}\|\hat{\pphi}_i-\pphi_i\|=O_p(n^{\frac{1}{2r_1}+\epsilon_1-\frac{1}{2}})$ for some integer $r_3>3$ and $\epsilon_3>0$ by Lemma \ref{lem:estimates}, then $$\max_{1\leq i \leq n}\|\bv_i'\Z_i'\|=\max_{1\leq i \leq n}\|\bv_i'(\Y_i,\X_i')\|=\max_{1\leq i \leq n}\|(\bv_i'\Y_i,\bv_i'\X_i')\|\leq \max_{1\leq i \leq n}\|\bv_i'\Y_i\|+\max_{1\leq i \leq n}\|\bv_i'\X_i'\|.$$
For $\max_{1\leq i \leq n}\|\bv_i'\X_i'\|$: $$\max_{1\leq i \leq n}\|\bv_i'\X_i'\|=\max_{1\leq i \leq n}\|\sqrt{T}\boldsymbol{\xi}_i'(\frac{\X_i\X_i'}{T})^{\frac{1}{2}}\|\leq\max_{1\leq i \leq n}\|\sqrt{T}\boldsymbol{\xi}_i'\|\cdot O_p(1)=O_p(n^{\frac{1}{2r_1}+\epsilon_1+\frac{1}{2}}).$$
For $\max_{1\leq i \leq n}\|\bv_i'\Y_i'\|$: Applying martingale theory, we show that $\bv_i'\Y_i'=\sum_{t=1}^Ty_{it-1}v_{it}$ converged to a centered normal distribution. Let $\mathcal{S}_T=\sum_{t=1}^Ty_{it-1}v_{it}$, we aim to verify conditions A1 and A2 imposed in Corollary 2.1.10 of \cite{duflo2013random}. Firstly,  $\langle M \rangle_T\stackrel{\Delta}{=}\sum_{t=1}^TE\left((\mathcal{S}_t-\mathcal{S}_{t-1})^2|\mathcal{F}_{t-1}\right)=\sum_{t=1}^TE\left(y_{it-1}^2v_{it}^2|\mathcal{F}_{t-1}\right)=\sigma^2\sum_{t=1}^Ty_{it-1}^2,$ where $\mathcal{F}_{t-1}$ is the corresponding filtration. Therefore,  $\langle M \rangle_T /T \cvp \sigma^2E(y_{i,0}^2)$  under Assumption 5(i), so that A1 holds. Note that the Lyapunov condition $$\frac{1}{(\sqrt{T})^4}\sum_{t=1}^TE\left(|y_{i,t-1}v_{it}|^4|\mathcal{F}_{t-1}\right)=\frac{E(v_{it})^4}{T^2}\sum_{t=1}^Ty_{it-1}^4\cvp 0$$ holds under Assumption 5(i), which indicates that A2 holds as well. The assertion follows from Corollary 2.1.10 of \cite{duflo2013random}, so that $\max_{1\leq i \leq n}\|\bv_i'\Y_i'\|=O_p(n^{\frac{1}{2r_1}+\epsilon_1})$ by Lemma \ref{lem:estimates}. Consequently, $$ \max_{1\leq i \leq n}|\inn{\bv_i}{\hat{\hat{\bw}}_i}|\leq O_p(n^{\frac{1}{2r_1}+\epsilon_1+\frac{1}{2}})\cdot O_p(n^{\frac{1}{2r_1}+\epsilon_1-\frac{1}{2}})=O_p(n^{\frac{1}{r_1}+2\epsilon_1}).$$ The case $i\ne j$ is similar.

(b)
$$\begin{aligned}
    \max_{1\le i, j \le n}    |\inn{\hat{\hat{\bw}}_i}{\hat{\hat{\bw}}_j}| = \max_{1\le i, j \le n} |(\hat{\pphi}_i-\pphi_i)'\Z_i\Z_j'(\hat{\pphi}_j-\pphi_j)|&\leq \left( \max_i\|\hat{\pphi}_i-\pphi_i\| \max_i\|\Z_i\|\right)^2\\&\leq\left( \max_i\|\hat{\pphi}_i-\pphi_i\| (\max_i\|\Y_i\|+\max_i\|\X_i\|)\right)^2.
\end{aligned}$$
Under Assumption 4(ii) and 5(ii), $\max_i\|\X_i\|\leq \max_i\|\X_i\X_i'\|^{\frac{1}{2}}=O_p(\sqrt{T})$ and $\max_i\|\Y_i\|\leq\max_i\|\Y_i\Y_i'\|^{\frac{1}{2}}=O_p(\sqrt{T})$, so that $ \max_{1\le i, j \le n}    |\inn{\hat{\hat{\bw}}_i}{\hat{\hat{\bw}}_j}|\leq \left( O_p(n^{\frac{1}{2r_1}+\epsilon_1-\frac{1}{2}})\cdot O_p(\sqrt{T})\right)^2= O_p(n^{\frac{1}{r_1}+2\epsilon_1}).$
  \end{proof}

\noindent\textbf{Proofs of Theorem 5 and 6} 

For the fixed effects panel data model, $\hat{\bbeta}$ is the within estimator and the within residuals are given by $\hat{v}_{it}^{fixed}=\tilde{y}_{it}-\tilde{\x}_{it}'(\hat{\bbeta}-\bbeta)$. Let $\Bar{\x}_{i\cdot}=\frac{1}{T}\sum_{t=1}^T\x_{it}$, $\Bar{y}_{i\cdot}=\frac{1}{T}\sum_{t=1}^Ty_{it}$, $\Bar{v}_{i\cdot}=\frac{1}{T}\sum_{t=1}^Tv_{it}$ and $\tilde{v}_{it}=v_{it}-\Bar{v}_{i\cdot}$. Define $\tilde{\bv}_i=(\tilde{v}_{i1},\dots,\tilde{v}_{iT})'$, $\Bar{\bv}_i=(\Bar{v}_{i\cdot}, \dots, \Bar{v}_{i\cdot})'$,  $\tilde{\X}_i=(\tilde{\x}_{i1},\dots,\tilde{\x}_{iT})'$ and $\Bar{\X}_i=(\Bar{\x}_{i\cdot}, \dots, \Bar{\x}_{i\cdot})'$. Let  $\tilde{\bw}_i=\tilde{\X}_i^{\prime}\left(\hat{\bbeta}-\bbeta\right)$ and $\tilde{\mathbf{W}}=\begin{pmatrix} \tilde{\bw}_{1} , \cdots,  \tilde{\bw}_{n} \end{pmatrix}$.
Again, it suffices to verify that Lemma \ref{lem:estimates} (a) and (b) still hold for the fixed effect panel data model.
\begin{lemma}
  Under  
  Assumptions 1, 2,3 and 4, for any $\epsilon>0$ and some integer $r_1\geq 3$, i.e. $E|v_{it}|^{r_1}<\infty$
  \begin{enumerate}
  \item[(a)\ ]  $\displaystyle \max_{1\le i ,  j \le n}    |\inn{\tilde{ \bv}_i}{\tilde{\bw
  }_j}|=O_p (n^{\frac{1}{r_1}+2\epsilon_1})$.
  \item[(b)\ ]  $\displaystyle \max_{1\le i, j \le n}    |\inn{\tilde{\bw
  }_i}{\tilde{\bw
  }_j}|=O_p (n^{\frac{1}{r_1}+2\epsilon_1})$.
 
  \end{enumerate}
  \end{lemma}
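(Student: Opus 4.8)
The plan is to exploit the structural feature that distinguishes the fixed effects model from the heterogeneous models treated earlier: here a single pooled within estimator $\hab$ is used, and under Assumptions \ref{assum:siml}--\ref{assum:moment} together with Assumption \ref{assum:design2} it is $\sqrt{nT}$-consistent, so that $\|\hab-\bbeta\|=O_p((nT)^{-1/2})$. This rate is sharper by a factor $\sqrt n$ than the per-unit $\sqrt T$-consistency available in the static case, and it will allow crude Cauchy--Schwarz bounds to carry the whole argument, with no maximum over $n$ separate estimators to control.

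First I would record two uniform norm bounds, where $\|\cdot\|$ denotes the operator norm for matrices. Since the within transformation is the centering projection, $\|\tilde{\bv}_i\|^2=\sum_{t=1}^T\tilde v_{it}^2\le\sum_{t=1}^T v_{it}^2=T\,S_{T,i,i}$, and $\max_{1\le i\le n}S_{T,i,i}=O_p(1)$ by Lemma \ref{lem:estimates}(c); hence $\max_{1\le i\le n}\|\tilde{\bv}_i\|=O_p(\sqrt T)$. Likewise $\|\tilde{\X}_i\|^2=\|\tilde{\X}_i'\tilde{\X}_i\|=\|\sum_{t=1}^T\tilde{\x}_{it}\tilde{\x}_{it}'\|$, and the uniform stochastic boundedness of $\tfrac1T\sum_{t=1}^T\tilde{\x}_{it}\tilde{\x}_{it}'$ granted by Assumption \ref{assum:design2}(ii) gives $\max_{1\le i\le n}\|\tilde{\X}_i\|=O_p(\sqrt T)$.

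With these in hand, part (a) follows by writing $\inn{\tilde{\bv}_i}{\tilde{\bw}_j}=\tilde{\bv}_i'\tilde{\X}_j(\hab-\bbeta)$ and applying Cauchy--Schwarz, $|\inn{\tilde{\bv}_i}{\tilde{\bw}_j}|\le\|\tilde{\bv}_i\|\,\|\tilde{\X}_j\|\,\|\hab-\bbeta\|$, so that $\max_{1\le i,j\le n}|\inn{\tilde{\bv}_i}{\tilde{\bw}_j}|=O_p(\sqrt T)\cdot O_p(\sqrt T)\cdot O_p((nT)^{-1/2})=O_p(\sqrt{T/n})=O_p(1)$, the last step using $n/T\to c\in(0,\infty)$. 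Part (b) is analogous: from $\inn{\tilde{\bw}_i}{\tilde{\bw}_j}=(\hab-\bbeta)'\tilde{\X}_i'\tilde{\X}_j(\hab-\bbeta)$ one gets $\max_{1\le i,j\le n}|\inn{\tilde{\bw}_i}{\tilde{\bw}_j}|\le\|\hab-\bbeta\|^2\,(\max_{1\le i\le n}\|\tilde{\X}_i\|)^2=O_p((nT)^{-1})\cdot O_p(T)=O_p(n^{-1})=o_p(1)$. Since $n^{1/r_1+2\epsilon_1}\to\infty$, both maxima are trivially $O_p(n^{1/r_1+2\epsilon_1})$, which is precisely the form needed so that the proofs of Theorems \ref{trR2:fixed} and \ref{trR4:fixed} run verbatim as in the static case.

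The argument has no genuinely hard step; its only delicate point is uniformity over the $n$ cross-sectional units in the two norm bounds. For the regressors this is supplied directly by Assumption \ref{assum:design2}(ii), while for the errors it rests on the concentration of $\max_{1\le i\le n}S_{T,i,i}$ established in Lemma \ref{lem:estimates}(c), which prevents $\max_i\sum_{t=1}^T v_{it}^2$ from exceeding its typical order $T$. The one observation worth emphasizing is that it is the pooled $\sqrt{nT}$-rate, rather than a per-unit $\sqrt T$-rate, that makes the target bounds hold with substantial room to spare.
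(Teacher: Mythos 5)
Your proof is correct, and for part (a) it takes a genuinely different, more elementary route than the paper's. The paper expands $\tilde{\bv}_i'\tilde{\X}_i(\hab-\bbeta)$ into four terms via $\tilde{\bv}_i=\bv_i-\bar{\bv}_i$, $\tilde{\X}_i=\X_i-\bar{\X}_i$ and controls the leading term $\bv_i'\X_i(\hab-\bbeta)$ through the self-normalized vectors $\boldsymbol{\xi}_i=(\X_i\X_i')^{-1/2}\sum_t\x_{it}v_{it}$ and the maximal inequality of Lemma \ref{lem:max_order}; this exploits the CLT-scale cancellation in $\sum_t\x_{it}v_{it}$ (an $O_p(\sqrt T)$ quantity per unit, by strict exogeneity) and delivers the sharper bound $O_p(n^{\frac{1}{2r_1}+\epsilon_1-\frac12})=o_p(1)$. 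You bypass that cancellation entirely: crude Cauchy--Schwarz with the uniform bounds $\max_i\|\tilde{\bv}_i\|=O_p(\sqrt T)$ and $\max_j\|\tilde{\X}_j\|=O_p(\sqrt T)$ gives only $O_p(T)\cdot O_p((nT)^{-1/2})=O_p(\sqrt{T/n})=O_p(1)$, which is weaker than the paper's $o_p(1)$ but amply sufficient, since the lemma's target $n^{\frac{1}{r_1}+2\epsilon_1}$ diverges and that stated rate is all that the downstream arguments (the fixed-effects analogues of Lemmas \ref{op1:hete} and \ref{op1:hete:4}) consume. Your observation that the pooled $\sqrt{nT}$-rate of the within estimator supplies the extra $\sqrt n$ is exactly the structural point; the paper's finer argument is the price one pays in the static heterogeneous case (Lemma \ref{lem:estimates}(a)), where only per-unit $\sqrt T$-consistency is available, and carrying it over here buys a vanishing bound that the lemma does not actually require. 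Your part (b) is essentially identical to the paper's ($\|\hab-\bbeta\|^2\max_i\|\tilde{\X}_i\|^2=O_p(n^{-1})$), just without the four-term decomposition, since you work directly with the demeaned regressors covered by Assumption \ref{assum:design2}(ii). Two caveats, both shared with the paper's own write-up rather than specific to you: passing from Assumption \ref{assum:design2}(ii)'s ``stochastically bounded for all $i,j$'' to a bound on $\max_{1\le i\le n}\|\tilde{\X}_i\|$ tacitly upgrades pointwise-in-$i$ boundedness to uniform boundedness (the paper makes the identical leap when writing $\max_{i,j}\|\X_i\X_i'\|^{1/2}\|\X_j\X_j'\|^{1/2}=O_p(T)$); and your $\max_i S_{T,i,i}=O_p(1)$ needs Lemma \ref{lem:estimates}(c) combined with the uniform bound on the variances $\sigma_i^2$ implied by Assumption \ref{assum:moment}, since the variances are heterogeneous across units.
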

\begin{proof} 
(a) When $i=j$, by $\tilde{\bv}_i=\bv_i-\Bar{\bv}_i$ and $\tilde{\X}_i=\X_i-\Bar{\X}_i$
$$\begin{aligned}
    \max_{1\le i \le n}|\inn{\tilde{\bv}_i}{\tilde{\bw
  }_i}|=\max_{1\le i \le n}|\tilde{\bv}_i'\tilde{\X}_i(\hat{\bbeta}-\bbeta)|\le & \max_{1\le i \le n}|\bv_i'\X_i(\hat{\bbeta}-\bbeta)|+ \max_{1\le i \le n}|\Bar{\bv}_i'\X_i(\hat{\bbeta}-\bbeta)|\\&+ \max_{1\le i \le n}|\bv_i'\Bar{\X}_i(\hat{\bbeta}-\bbeta)|+ \max_{1\le i \le n}|\Bar{\bv}_i'\Bar{\X}_i(\hat{\bbeta}-\bbeta)|.
\end{aligned}$$
For $\max_{1\le i \le n}|\bv_i'\X_i(\hat{\bbeta}-\bbeta)|:$
 \begin{align*}
   \max_{1\le i \le n}|\bv_i'\X_i(\hat{\bbeta}-\bbeta)|=\max_{1\le i  \le n}\left|\sum\limits_t v_{it}\x_{it}^\prime\left(\hab-\bbeta\right)\right|
    &= \max_{1\le i  \le n}\left|\sqrt T\boldsymbol{\xi}_i^\prime\left(\frac1T\X_i\X_i^\prime\right)^{\frac12}\left(\hab-\bbeta\right)\right|\\
    &\leq \sqrt T\left|\left|\left(\frac1T\X_i\X_i^\prime\right)^{\frac12}\right|\right|\cdot\max_{1\le i  \le n}\left\|\boldsymbol{\xi}_i\right\|\cdot\left\|\hab-\bbeta\right\|\\
    &=O_p\left(\sqrt T\cdot\frac{n^{\frac{1}{2r_1}+\epsilon_1}}{\sqrt {nT}}\right)=O_p(n^{\frac{1}{2r_1}+\epsilon_1-1/2}).
  \end{align*}
For $\max_{1\le i \le n}|\Bar{\bv}_i'\X_i(\hat{\bbeta}-\bbeta)|:$
\begin{align*}
\frac{1}{T}\Bar{\bv}_i'\X_i=\frac{1}{T}\sum_{t=1}^T\Bar{v}_{i\cdot}\x_{it}=\Big(\frac{1}{T}\sum_{s=1}^Tv_{is}\Big)\Big(\frac{1}{T}\sum_{t=1}^T\x_{it}\Big)= O_p(n^{{2r_1+\epsilon_1-\frac{1}{2}}})\cdot O_p(1)=O_p(n^{2r_1+\epsilon_1-\frac{1}{2}}),
  \end{align*}
uniformly in $i$ since $\frac{1}{T}\sum_{t=1}^Tx_{it}=O_p(1)$ holds uniformly by Assumption 3 and  by  Assumption 4 and Lemma \ref{lem:estimates}, we have $\max_{1\le i \le n}|\frac{1}{T}\sum_{t=1}^Tv_{it}|=O_p(n^{{2r_1+\epsilon_1-\frac{1}{2}}})$. Therefore $\max_{1\le i \le n}|\Bar{\bv}_i'\X_i(\hat{\bbeta}-\bbeta)|=O_p(n^{\frac{1}{2r_1}+\epsilon_1-1/2})$. Using same techniques,  $\max_{1\le i \le n}|\bv_i'\Bar{\X}_i(\hat{\bbeta}-\bbeta)|=O_p(n^{\frac{1}{2r_1}+\epsilon_1-1/2})$ and $\max_{1\le i \le n}|\Bar{\bv}_i'\Bar{\X}_i(\hat{\bbeta}-\bbeta)|=O_p(n^{\frac{1}{2r_1}+\epsilon_1-1/2})$, so that $\max_{1\le i \le n}|\bv_i'\X_i(\hat{\bbeta}-\bbeta)|=O_p(n^{\frac{1}{2r_1}+\epsilon_1-1/2})\le O_p (n^{\frac{1}{r_1}+2\epsilon_1}) $.

The calculations for $i\ne j$ case is similar.

(b)
$$\begin{aligned}
    |\inn{\tilde{\bw}_i}{\tilde{\bw
  }_j}|=|(\hat{\bbeta}-\bbeta)'\tilde{\X}_i'\tilde{\X}_j(\hat{\bbeta}-\bbeta)|\le & |(\hat{\bbeta}-\bbeta)'\X_i'\X_j(\hat{\bbeta}-\bbeta)|+ |(\hat{\bbeta}-\bbeta)'\Bar{\X}_i'\X_j(\hat{\bbeta}-\bbeta)|+\\& |(\hat{\bbeta}-\bbeta)'\X_i'\Bar{\X}_j(\hat{\bbeta}-\bbeta)|+ |(\hat{\bbeta}-\bbeta)'\Bar{\X}_i'\Bar{\X}_j(\hat{\bbeta}-\bbeta)|.
\end{aligned}$$
For $\max_{1\le i \le n}|(\hat{\bbeta}-\bbeta)'\X_i'\X_j(\hat{\bbeta}-\bbeta)|:$
  \begin{align*}
   \| \X_i'    (\hab-\bbeta)\| \| \X_j'    (\hab-\bbeta)\|
     \le  \| \X_i\| \|\X_j\| \|\hab-\bbeta\|^2 
    &\le  \| \X_i\X_i'\|^{1/2} \|\X_j\X_j'\|^{1/2} \|\hab-\bbeta\|^2 \\
    & =  O_p(T^{1/2})  O_p(T^{1/2}) O_p(1/(nT))  = O_p(n^{-1}).
  \end{align*}\
Lastly,
$|(\hat{\bbeta}-\bbeta)'\Bar{\X}_i'\X_j(\hat{\bbeta}-\bbeta)|$, $|(\hat{\bbeta}-\bbeta)'\X_i'\Bar{\X}_j(\hat{\bbeta}-\bbeta)|$ and $|(\hat{\bbeta}-\bbeta)'\Bar{\X}_i'\Bar{\X}_j(\hat{\bbeta}-\bbeta)|$ are all $O_p(n^{-1})$ by Assumption 4. Therefore, $ \max_{1\le i, j \le n}    |\inn{\tilde{W
  }_i}{\tilde{W
  }_j}|=O_p(n^{-1})\le O_p (n^{\frac{1}{r_1}+2\epsilon_1}).$

\end{proof}

\bibliographystyle{chicago}

\bibliography{main}

\end{document}